\newif\iffrontiers
\newtheorem{theorem}{Theorem}
\newtheorem{corollary}[theorem]{Corollary}
\newtheorem{lemma}[theorem]{Lemma}
\newtheorem{proposition}[theorem]{Proposition}
\theoremstyle{definition}
\DeclareMathOperator{\DC}{\mathscr{D}}
\newcommand{\e}{\ensuremath\mathrm{e}}
\DeclareMathOperator{\Tr}{Tr}
\DeclareMathOperator{\rank}{rank}
\DeclareMathOperator{\supp}{supp}
\DeclareMathOperator{\cl}{cl}
\DeclareMathOperator{\cone}{cone}
\DeclareMathOperator{\conv}{conv}
\DeclareMathOperator{\argmin}{arg\min}
\newcommand{\reg}{\mathrm{reg}}
\DeclareMathOperator{\circone}{circ} %circular cone
\DeclareMathOperator{\vect}{vec}
\newcommand{\eff}{\mathrm{eff}}
\newcommand{\CC}{\mathbb{C}}
\newcommand{\RR}{\mathbb{R}}
\newcommand{\KK}{\mathbb{K}}% for \RR or \CC %overwritten with \RR for this paper
\newcommand{\1}{\mathds{1}}
\newcommand{\EE}{\mathbb{E}}
\newcommand{\PP}{\mathbb{P}}
\newcommand{\M}{\mathbb{M}} 
\newcommand{\mc}[1]{\mathcal{#1}}
\newcommand{\A}{\mc{A}}
\newcommand{\R}{\mc{R}}
\newcommand{\argdot}{{\,\cdot\,}}
\newcommand{\myleft}{\mathopen{}\mathclose\bgroup\left}
\newcommand{\myright}{\aftergroup\egroup\right}
\newcommand{\ad}{\dagger}
\newcommand{\norm}[1]{\left\Vert #1 \right\Vert} %norm with variable height
\newcommand{\iiiNorm}[1]{{\left\vert\kern-0.25ex\left\vert\kern-0.25ex\left\vert #1 
    \right\vert\kern-0.25ex\right\vert\kern-0.25ex\right\vert}}
\newcommand{\OneNorm}[1]{\norm{#1}_{1}} %one norm
\newcommand{\TwoNorm}[1]{\norm{#1}_{2}} %two norm
\newcommand{\InfNorm}[1]{\norm{#1}_{\infty}} %Inf norm
\newcommand{\lOneNorm}[1]{\norm{#1}_{\ell_1}} %1 norm
\newcommand{\lTwoNorm}[1]{\norm{#1}_{\ell_2}} %2 norm
\newcommand{\lInfNorm}[1]{\norm{#1}_{\ell_\infty}} %inf norm
\newcommand{\fNorm}[1]{\lTwoNorm{#1}} % just as \ell_2-norm
\newcommand{\regNormMax}[2][\mu]{\norm{#2}_{#1,\mathrm{max}}} % max of norms
\newcommand{\regNormSum}[2][\lambda]{\norm{#2}_{#1, \mathrm{sum}}} % max of norms
\newcommand{\braket}[2]{\left\langle #1, #2 \right\rangle}
\newcommand{\totdim}{d}
\newcommand{\psuccmax}{p^\mathrm{max}_{\mathrm{success}}}
\newcommand{\deltamax}{\delta_{\max{}}}
\newcommand{\deltasum}{\delta_{\mathrm{sum}}}
\newcommand{\mytitle}{
  Simultaneous structures in convex signal recovery 
  \texorpdfstring{--}{-} 
  revisiting the convex combination of norms
}
\newcommand{\ug}{%
  Institute of Theoretical Physics and Astrophysics, 
  University of Gda\'nsk,
  % National Quantum Information Centre,  
  % ul.\ Gen.\ Wl.\ Andersa 27, 
  % ul. Wita Stwosza 57,
  % 80-952 Gda{\'n}sk, 
  % 81-824 Sopot, 
  Poland%
}
\newcommand{\hhu}{%
  Institute for Theoretical Physics,
  Heinrich Heine University D{\"u}sseldorf, 
  % Universit{\"a}tsstra{\ss}e 1,
  % 40225 D{\"u}sseldorf, 
  Germany%
}
\newcommand{\paris}{%
  Institut de Math\'ematiques de Jussieu-PRG, 
  Sorbonne Universit\'e, 
  % 4 place Jussieu, 
  % 75005 
  Paris, 
  France%
}
\newcommand{\cwru}{
  Department of Mathematics, Applied Mathematics and Statistics, 
  Case Western Reserve University, 
   Cleveland, 
   Ohio% 44106-7058, 
   ,
   USA%
}
\newcommand{\tub}{%
  Communications and Information Theory Group, 
  Technical University of Berlin, 
  % Einsteinufer 25, 
  % 10587 Berlin, 
  Germany%
  }
\def\keyFont{\fontsize{8}{11}\helveticabold }
\def\firstAuthorLast{Kliesch {et~al.}} %use et al only if is more than 1 author
\def\Authors{Martin Kliesch\,$^{1,2,*}$, Stanislaw J. Szarek\,$^{3,4}$ and Peter Jung\,$^{5,*}$}
\begin{document}
\iffrontiers
\onecolumn

\title[Simultaneous structures]{\mytitle}

\author[\firstAuthorLast ]{\Authors} %This field will be automatically populated

\address{} %This field will be automatically populated

\correspondance{} %This field will be automatically populated

% \extraAuth{}% If there are more than 1 corresponding author, comment this line and uncomment the next one.
\extraAuth{Peter Jung\\ peter.jung@tu-berlin.de}
\else
\title{\LARGE \mytitle}
\author[1,2]{Martin Kliesch}
\author[3,4]{Stanislaw J. Szarek}
\author[5]{Peter Jung}
\affil[1]{\hhu}
\affil[2]{\ug}
\affil[3]{\paris}
\affil[4]{\cwru}
\affil[5]{\tub}

\date{}
\fi
%%%=============================================

\maketitle

\begin{abstract}
  In compressed sensing one uses known structures of otherwise unknown signals to recover them from as few linear observations as possible.  The
  structure comes in form of some compressibility including different
  notions of sparsity and low rankness.  In many cases convex
  relaxations allow to efficiently solve the inverse problems using
  standard convex solvers at almost-optimal sampling rates.

  A standard practice to account for multiple simultaneous structures
  in convex optimization is to add further regularizers or
  constraints.  From the compressed sensing perspective there is then
  the hope to also improve the sampling rate. Unfortunately, when
  taking simple combinations of regularizers, this seems not
  to be automatically the case as it has been shown for several
  examples in recent works.

  Here, we give an overview over ideas of combining multiple structures in convex programs by taking weighted sums and weighted maximums. 
  We discuss explicitly cases where optimal weights are used reflecting an optimal tuning of the reconstruction. 
  In particular, we extend known lower bounds on the number of required measurements to the optimally weighted maximum by using geometric arguments. 
  As examples, we
  discuss simultaneously low rank and sparse matrices and notions of
  matrix norms (in the ``square deal'' sense) for regularizing for
  tensor products. 
  We state an SDP formulation for numerically estimating the statistical dimensions and find a tensor case where the lower bound is roughly met up to a factor of two. 

\iffrontiers
\tiny
 \keyFont{ \section{Keywords:} Compressed sensing, low rank, sparse, matrix, tensor, reconstruction, statistical dimension, convex relaxation} %All article types: you may provide up to 8 keywords; at least 5 are mandatory.
\end{abstract}
\else%iffrontiers
\end{abstract}
% \tableofcontents
\thispagestyle{empty}
\fi%iffrontiers

%%% ===================================================
\section{Introduction}
%%% ===================================================
  The recovery of an unknown signal from a limited number of 
  % indirect and sufficiently diverse 
  observations can be more efficient by exploiting compressibility and a priori known structure of the signal. 
  This compressed sensing
  methodology has been applied to many 
  fields in physics,
  applied math and engineering. 
  In the most common setting, the structure is given as sparsity in a known basis. 
  To mention some more
  recent directions, block-, group- and hierarchical sparsity,
  low-rankness in matrix or tensor recovery problems and also the
  generic concepts of atomic decomposition are important structures 
  present in many estimation problems.

  In most of these cases, convex relaxations render the inverse problems
  itself amenable to standard solvers at almost-optimal sampling
  rates and also show tractability from a theoretical perspective
  \cite{FouRau13}.  In one variant, one minimizes a regularizing
  function under the constraints given by the measurements.  A good
  regularizing function, or just regularizer, gives a strong
  preference in the optimization toward the targeted structure.
  For instance, the $\ell_1$-norm can be used to regularize for sparsity and the nuclear norm for low rankness of matrices.

  \subsection{Problem statement}
  Let us describe the compressed sensing setting in more detail.  We
  consider a linear \emph{measurement map} $\A: V \to \RR^m$ on a 
  $d$-dimensional vector space $V \cong \RR^{\totdim}$ define by its
  output components
\begin{align}
\A(x)_i \coloneqq \braket{a_i}{x} \, .
\end{align} 
Throughout this work we assume the measurements to be Gaussian, i.e., 
$\langle a_i, e_j \rangle \sim \mc N(0,1)$ iid., where $\{e_j\}_{j\in [d]}$ is an orthonormal basis of $V$.   
Moreover, we consider a given \emph{signal} $x_0 \in V$, which yields the (noiseless) \emph{measurement vector} 
\begin{equation}
  y \coloneqq \A(x_0) \, . 
\end{equation}
We wish to reconstruct $x_0$ from $\A$ and $y$ in a way that is computationally tractable. 
Following a standard compressed sensing approach, we consider a norm $\norm{\argdot}_\reg$ as regularizer. 
Specifically, we consider an outcome of the following convex
optimization problem
\begin{equation}\label{eq:rec_def}
\begin{aligned}
  x_\reg &\coloneqq \argmin \left\{ \norm{x}_\reg: \ \A(x) = y  \right\} 
\end{aligned}
\end{equation}
as a candidate for a reconstruction of $x_0$, where we will choose
$\norm{\argdot}_\reg$ to be either the weighted sum or the weighted
maximum of other norms \eqref{eq:def_max_sum_reg_norm}.  If
computations related to $\norm{\argdot}_\reg$ are also computationally
tractable, \eqref{eq:rec_def} can be solved efficiently.
We wish that $x_\reg$ coincides with $x_0$.  
Indeed, when
the number of measurements $m$ is large enough (compared to the
dimension $\totdim$) then, with high probability over the realization
of $\A$, the signal is reconstructed, i.e., $x_0 = x_\reg$.  For
instance, when $m \geq \totdim$ then $\A$ is invertible with
probability $1$ and, hence, the constraint $\A(x) = y$ in
\eqref{eq:rec_def} alone guarantees such a successful reconstruction.

However, when the signal is \emph{compressible}, 
i.e.,\ it can be described by fewer parameters than the dimension $\totdim$, 
then one can hope for a reconstruction from fewer measurements by choosing a good \emph{regularizer} $\norm{\argdot}_\reg$. 
For the case of Gaussian measurements, a quantity called the \emph{statistical dimension} 
gives the number of measurements sufficient \cite{Chandrasekaran2010}
and necessary \cite{AmeLotMcC14} (see also
\cite[Corollary~4]{MuHuaWri13}) for a successful reconstruction. 
Therefore, much of this work is focused on such statistical dimensions. 

Now we consider the case where the signal has several structures simultaneously. 
Two important examples are (i) simultaneously low rank and sparse matrices and (ii) tensors with several low rank matrizations. 
In such cases one often knows good regularizers for the individual structures. 
In this work, we address the question of how well one can use convex combinations of the individual reguralizers to regularize for the combined structure. 

\subsection{Related work}
It is a standard practice to account for multiple simultaneous structures in
convex optimization by combining different regularizers or constraints.
The hope is to 
improve the sampling rate, i.e., to recover $x_0$ from fewer
observations $y$.  Unfortunately, when taking simple combinations of
regularizers there are certain limitations on the improvement of sampling rate. 

For the prominent example of entrywise sparse and low-rank matrices Oymak et al.~\cite{Oymak2015} showed that convex combinations of $\ell_1$-- and nuclear norm cannot improve the scaling of the sampling rate $m$ in convex recovery. 
Mu et al.~\cite{MuHuaWri13} considered linear combinations of arbitrary norms and derived more explicit and simpler lower bounds on the sampling rate using elegant geometric arguments. 

In particular, this analysis also covers certain approaches to tensor recovery. 
It should be noted that low-rank tensor reconstruction using few measurements is notoriously difficult. 
Initially, it was suggested to use linear combinations of nuclear norms as a reguralizer \cite{GanRecYam11}, a setting where the lower bounds \cite{MuHuaWri13} apply. 
Therefore, the best guarantee for tractable reconstructions is still a non-optimal reduction to the matrix case \cite{MuHuaWri13}. 

If one gives up on having a convex reconstruction algorithm, non-convex quasi-norms can be minimized that lead to an almost optimal sampling rate \cite{GhaPlaYil17}. 
This reconstruction is for tensors of small canonical rank (a.k.a.\ CP rank). 
Also for this rank another natural regularizer might be the so-called tensor nuclear norm. 
This is another semi-norm for which one can find tractable semidefinite programming relaxations based on so-called $\theta$-bodies \cite{RauSto15}. 
These norms provide promising candidates for (complexity theoretically) efficient and guaranteed reconstructions. 

Also following this idea of atomic norm decompositions \cite{Chandrasekaran2010}, 
a single regularizer was found by Richard et al. \cite{RicOboVer14} that yields again optimal sampling rates at the price that the reconstruction is not give by a tractable convex program.

\subsection{Contributions and outline}
We discuss further the idea of taking convex combinations of regularizers from a convex analysis viewpoint. 
Moreover, we focus on the scenario where the weights in a maximum and also a sum of norms can depend on the unknown object $x_0$ itself, which reflects an optimal tuning of the convex program. 

Based on tools established in \cite{MuHuaWri13}, which may be not fully
recognized in the community, we discuss simple convex combinations of regularizers. 
As already pointed out by Oymak et al.~\cite{Oymak2015}, an optimally weighted
maximum of regularizers has the best sampling rate among such approaches. 
We follow this direction and discuss how sampling rates can be obtained in such setting. 
Specifically, we point out that the arguments leading to the lower bounds of Mu et al.~\cite{MuHuaWri13} can also be used to obtain generic lower bounds for the maximum of norms, which implies similar bounds for the sum of norms (Section~\ref{sec:generic:lowerbounds}). 
We also present an SDP characterization of dual norms for weighted sums and maxima and use them for numerically sampling the statistical dimension of descent cones, which correspond to the critical sampling rate \cite{AmeLotMcC14}. 

In Section~\ref{sec:sandl} we discuss the prominent case of simultaneously sparse and low rank matrices. 
Here, our contributions are twofold. 
We first show that the measurements satisfy a restricted isometry property (RIP) at a sampling rate that is essentially optimal for low rank matrices, which implies injectivity of the measurement map $\A$. 
Then, second, we provide numerical results showing that maxima of regularizers lead to recovery results that show an improvement over those obtained by the optimally weighted sum of norms above an intermediate sparsity level. 

Then, in Section~\ref{sec:tensors} we discuss the regularization for rank-$1$ tensors using combinations of matrix norms (extending the ``square deal'' \cite{MuHuaWri13} idea). 
In particular, we suggest to consider the maximum of several nuclear norms of ``balanced'' or ``square deal'' matrizations for the reconstruction of tensor products. 
We point out that the maximum of regularizers can lead to an improved recovery, often better
than expected. 

It is the hope that this work will contribute to more a comprehensive understanding of convex approaches to simultaneous structures in important recovery problems. 

%%%=============================================
\section{Lower bounds for convex recovery with combined regularizers}
%%% =============================================
\label{sec:generic:lowerbounds}
In this section we review the convex arguments used by Mu et al.\ \cite{MuHuaWri13} to establish lower bounds for the required number of measurements when using a linear combination of regularizers. 

\subsection{Setting and preliminaries}
For a positive integer $m$ we use the notation $[m]\coloneqq \{1,2,\dots,m\}$. 
The $\ell_p$-norm of a vector $x$ is denoted by $\norm{x}_{\ell_p}$ and Schatten $p$-norm of a matrix $X$ is denoted by $\norm{X}_p$. 
For $p=2$ these two norms coincide and are also called \emph{Frobenius norm} or \emph{Hilbert-Schmidt norm}. 
With slight abuse of notation, we generally denote the inner product norm of a Hilbert space by $\lTwoNorm{\argdot}$. 
For a cone $S$ and a vector $g$ in a vector space with norm $\norm{\argdot}$ we denote their induced \emph{distance} by   
\begin{equation}\label{eq:cone_distance}
\norm{g - S} \coloneqq \inf \{\norm{g-x} : x \in S   \} 
    \end{equation}
The \emph{polar} of a cone $K$ is 
\begin{equation}\label{eq:cone_dual}
K^\circ \coloneqq \{ y : \langle y, x \rangle \leq 0 \ \forall x \in K\} .
 \end{equation}
For a set $S$ we denote its convex hull by $\conv(S)$ and the cone generated by $S$ by $\cone(S) \coloneqq \{\tau x:\ x \in S, \ \tau >0 \}$. 
For convex sets $C_1$ and $C_2$ one has
  \begin{equation}\label{eq:cone_conv}
    \cone \conv(C_1 \cup C_2) = \cone(C_1)+\cone(C_2) \, ,
  \end{equation}
  where ``$\subset$'' follows trivially from the definitions. 
  In order to see ``$\supset$'' we write a conic combination $z=\rho x + \sigma y \in \cone(C_1)+\cone(C_2)$ as $z = (\rho + \sigma)\left( \frac{\rho}{\rho+\sigma} x + \left(1- \frac{\rho}{\rho+\sigma}\right) y \right)$.
  The \emph{Minkowski sum} of two sets $C_1$ and $C_2$ is denoted by $C_1+C_2 \coloneqq\{x+y: \, x\in C_1, \, y\in C_2\}$. 
  It holds that
  \begin{equation} \label{eq:conv(MinkowskiSum)}
    \cone(C_1 + C_2) 
    \subseteq 
    \cone(C_1) + \cone(C_2) \, .
  \end{equation}
The \emph{subdifferential} of a convex function $f$ at base point $x$ is denoted by $\partial f(x)$. 
  When $f$ is a norm, $f = \norm{\argdot}$, then the subdifferential is the set of vectors where H{\"o}lder's inequality is tight, 
  \begin{equation} \label{eq:def:subdifferential}
    \partial \norm{\argdot}(x) 
    = 
    \{ y: \ \langle y,x \rangle = \norm{x} , \ \norm{y}^\circ \leq 1 \} \, ,
  \end{equation}
  where $\norm{y}^\circ$ is the dual norm of $\norm{\argdot}$ defined by 
  $\norm{y}^\circ \coloneqq \sup_{\norm{x}=1} |\langle x,y\rangle|$. 
  The descent cone of a convex function $f$ at point $x$ is \cite[Definition~2.4]{Tro14}
\begin{equation}
  \DC(f,x) \coloneqq \cone \{y: f(x+y) \leq f(x)\} \, 
\end{equation}
and it holds that \cite[Fact~3.3]{Tro14}
\begin{equation}\label{eq:polarDC}
  \DC(f,x)^\circ = \cl \cone \partial f(x) \, , 
\end{equation}
where $\cl S$ denotes the closure of a set $S$. 
Let $\{f_i\}_{i\in[k]}$ be proper convex functions such that the relative interiors of their domains have at least a point in common.  
Then
  \begin{equation}\label{eq:subdiff:sum}
    \partial \left(\sum_{i\in[k]}\lambda_i f_i\right)(x) 
    =
    \sum_{i\in [k]}\lambda_i \partial f_i(x)  \, .
  \end{equation}
A function $f \coloneqq \max_{i\in [k]} f_i$ that
  is the point-wise maximum of convex functions $\{f_i\}_{i\in[m]}$ has the
  subdifferential \cite[Corollary~D.4.3.2]{HirLem04}
  \begin{equation}\label{eq:subdiff:max}
    \partial f(x) 
    = 
    \conv\left(\bigcup_{i: f_i(x) = f(x)} \partial f_i(x)\right) \, .
  \end{equation}
 Hence, the generated cone is the Minkowski sum
 \begin{equation}\label{eq:cone_subdiff:max}
    \cone \partial f(x) 
    = 
    \sum_{i: f_i(x) = f(x)} \cone \partial f_i(x) \, . 
 \end{equation}
The Lipschitz constant of a function $f$ 
w.r.t.\ to a norm $\norm{\argdot}$  
is the smallest constant $L$ such that for all vectors $x$ and $x'$
\begin{equation}\label{eq:lipschitz}
    |f(x)-f(x')|\leq L \, \norm{x-x'} \, .
  \end{equation}
Usually, $\norm{\argdot}$ is an $\ell_2$-norm, which also fits the Euclidean geometry of the circular cones defined below. 

The \emph{statistical dimension} of a convex cone $K$ is given by 
(see, e.g., \cite[Proposition~3.1(4)]{AmeLotMcC14})
\begin{equation}\label{eq:def:stat_dim}
  \delta(K)
  \coloneqq 
  \EE_g\left[ \fNorm{g-K^\circ}^2 \right]
 \end{equation}
 where $g \sim N(0,\1)$ is a Gaussian vector. 
The statistical dimension satisfies \cite[Proposition~3.1(8)]{AmeLotMcC14}
\begin{equation}\label{eq:stad_dim_polar}
  \delta(K) + \delta(K^\circ) = d 
\end{equation}
for any cone $K \subset V$ in a vector space $V\cong \RR^d$. 
Now, let us consider a compressed sensing problem where we wish to recover a signal $x_0$ from $m$ fully Gaussian measurements using a convex regularizer $f$. 
For small $m$, a successful recovery is unlikely and for large $m$ the recovery works with overwhelming probability. 
Between these two regions of $m$ one typically observes a phase transition. 
This transition is centered at a value given by the statistical dimension of the descent cone $\delta(\DC(f,x_0))$ of $f$ at $x_0$ \cite{AmeLotMcC14}. 
Thanks to \eqref{eq:polarDC}, this dimension can be calculated via the subdifferential of $f$, 
\begin{equation}\label{eq:stad_dim_as_dist}
  \delta(\DC(f,x_0))
  =
  \EE_g\left[ \fNorm{g-\cone \partial f(x_0)}^2 \right] \, .
\end{equation}
By $\gtrsim$ and $\lesssim$ we denote the usual greater or equal and less or equal relations up to a positive constant prefactor and $\propto$ if both holds simultaneously. 
Then we can summarize that the convex reconstruction \eqref{eq:rec_def} requires exactly a number of measurements $m \gtrsim \delta(\DC(f,x_0))$, which can be calculated via the last equation.

%%%=============================================
\subsection{Combined regularizers} \label{subsec:combined_regularizers}
%%% =============================================
We consider the convex combination and weighted maxima of norms $\norm{\argdot}_{(i)}$,
where $i = 1, 2, \dots, k$ and set
\begin{equation}\label{eq:def_max_sum_reg_norm}
  \begin{aligned}
    \regNormMax{\argdot} &\coloneqq \max_{i\in[k]} \mu_{i} \norm{\argdot}_{(i)}
    \\
    \regNormSum{\argdot} &\coloneqq \sum_{i\in[k]} \lambda_{i} \norm{\argdot}_{(i)} \, , 
  \end{aligned}
\end{equation}
where $\lambda=(\lambda_1,\dots,\lambda_k)\geq 0$ and $\mu=(\mu_1,\dots,\mu_{k})\geq0$ are to be chosen later. 
Here, we assume that we have access to single norms of our original signal $\norm{x_0}_{(i)}$ so that we can  fine-tune the parameters $\lambda$ and $\mu$ accordingly. 

In \cite{Oymak2015} lower bounds on the necessary number of
measurements for the reconstruction \eqref{eq:rec_def} using general
convex relaxations were derived.
However, it has not been worked out explicitly what can be said in the
case where one is allowed to choose the weights $\lambda$ and $\mu$ 
dependending on the signal $x_0$.  For the sum norm
$\regNormSum{\argdot}$ this case is covered by the lower bounds in
\cite{MuHuaWri13}.  Here, we will see that the arguments extend to
optimally weighted max norms, i.e., $\regNormMax{\argdot}$ with $\mu$
being optimally chosen for a given signal.

A description of the norms dual to those given by 
\eqref{eq:def_max_sum_reg_norm} is provided in Section~\ref{sec:dual_norms}.

\subsection{Lower bounds on the statistical dimension}
\label{subsec:generic:lowerbounds}
The statistical dimension of the descent cone is obtained as Gaussian squared distance from the cone generated by the subdifferential \eqref{eq:stad_dim_as_dist}. 
Hence, it can be lower bounded by showing (i) that the subdifferential is contained in a small cone and (ii) bounding that cone \cite{MuHuaWri13}. 
A suitable choice for this small cone is the so-called circular cone.

\subsubsection{Subdifferentials and circular cones}
We use the following statements from \cite{AmeLotMcC14} and
\cite[Section~3]{MuHuaWri13} which show that subdifferentials are
contained in circular cones.  More precisely, we say that the angle
between vectors $x,z\in \RR^\totdim$ is $\theta$ if
$\langle z,x\rangle = \cos(\theta)\, \lTwoNorm{z} \lTwoNorm{x}$ and
write in that case $\angle(x,z) = \theta$.  Then the \emph{circular
  cone} with axis $x$ and angle $\theta$ is defined as
\begin{align} \label{eq:def:circ_cone}
  \circone(x,\theta) 
  &\coloneqq 
  \{ z: \ \angle(x,z) \leq \theta \} \, .
\end{align}
  Its statistical
dimension has a simple upper bound in terms of its dimension: For all
$x\in V \cong \RR^\totdim$ and some $\theta \in [0,\pi/2]$
\cite[Lemma~5]{MuHuaWri13}
\begin{equation}\label{eq:stat_dim_circ_cone}
  \delta(\circone(x,\theta))\leq \totdim \sin(\theta)^2 +2 \, .
\end{equation}
By the following argument this bound can be turned into a lower bound on the statistical dimension of descent cones and, hence, to the necessary number of measurements for signal reconstructions. 
The following two propositions summarize arguments made by Mu et al.~\cite{MuHuaWri13}.
\begin{proposition}[Lower bound on descent cone statistical dimensions \cite{MuHuaWri13}]
\label{prop:deltaDC_lower_bound}
Let us consider a convex function $f$ as a regularizer for the recovery of a signal $0\neq x_0 \in V \cong \RR^d$ in a $d$-dimensional space. 
If $\partial f(x_0) \subset \circone(x_0,\theta)$ then 
\begin{equation}
  \delta(\DC(f,x_0)) \geq d \cos^2\theta - 2 \, .
\end{equation}
\end{proposition}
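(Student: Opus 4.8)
The plan is to transfer the containment of the subdifferential inside a circular cone into a containment of the \emph{polar} descent cone, and then to use the complementarity of statistical dimensions \eqref{eq:stad_dim_polar} to turn the circular-cone upper bound \eqref{eq:stat_dim_circ_cone} into the claimed lower bound.

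First I would observe that $\circone(x_0,\theta)$ is itself a closed convex cone: its defining inequality $\langle x_0,z\rangle \geq \cos\theta\,\lTwoNorm{x_0}\lTwoNorm{z}$ is invariant under positive scaling and non-strict, hence closed, for $\theta\in[0,\pi/2]$. Consequently the hypothesis $\partial f(x_0)\subset \circone(x_0,\theta)$ already propagates through the cone and closure operations, since generating the cone of a subset of a cone and then closing it stays inside that closed cone:
\[
  \cl\cone\partial f(x_0) \subseteq \circone(x_0,\theta) \, .
\]
By the polar identity \eqref{eq:polarDC} the left-hand side is exactly $\DC(f,x_0)^\circ$, so this reads $\DC(f,x_0)^\circ \subseteq \circone(x_0,\theta)$.

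The second step is to invoke monotonicity of the statistical dimension: if $K_1\subseteq K_2$ then $\delta(K_1)\leq\delta(K_2)$. This follows directly from the distance form \eqref{eq:def:stat_dim}, because polarity reverses inclusions, $K_2^\circ\subseteq K_1^\circ$, so the infimum defining $\fNorm{g-K_1^\circ}$ ranges over a larger set and is therefore no larger than $\fNorm{g-K_2^\circ}$ pointwise in $g$; taking expectations gives the inequality. Applying this with the circular-cone bound \eqref{eq:stat_dim_circ_cone} yields
\[
  \delta(\DC(f,x_0)^\circ) \leq \delta(\circone(x_0,\theta)) \leq d\sin^2\theta + 2 \, .
\]

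Finally I would apply the complementarity relation \eqref{eq:stad_dim_polar}, $\delta(K)+\delta(K^\circ)=d$, to $K=\DC(f,x_0)$ in order to flip the estimate:
\[
  \delta(\DC(f,x_0)) = d - \delta(\DC(f,x_0)^\circ) \geq d-(d\sin^2\theta+2) = d\cos^2\theta - 2 \, ,
\]
which is the assertion. The only point that is not pure bookkeeping with the cited identities is the monotonicity of $\delta$, and even that is immediate from \eqref{eq:def:stat_dim} once inclusion-reversal of polars is noted; I do not anticipate a genuine obstacle, as the closedness and convexity of circular cones make the containment step automatic and the complementarity relation supplies the rest.
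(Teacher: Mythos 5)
Your proof is correct and takes essentially the same route as the paper's: both convert the hypothesis into $\DC(f,x_0)^\circ = \cl\cone\partial f(x_0) \subseteq \circone(x_0,\theta)$ via \eqref{eq:polarDC}, then combine the complementarity relation \eqref{eq:stad_dim_polar} with the circular-cone bound \eqref{eq:stat_dim_circ_cone}. The only difference is that you spell out the monotonicity of $\delta$ under inclusion and the closure step, both of which the paper leaves implicit.
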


This statement is already implicitly contained in \cite{MuHuaWri13}. 
The arguments can be compactly summarized as follows. 

\begin{proof}
With the polar of the descent cone \eqref{eq:polarDC}, the assumption, and the statistical dimension of the polar cone \eqref{eq:stad_dim_polar} we obtain 
\begin{align}
  \delta(\DC(f,x_0))
  &=
  d - \delta(\cone \partial f(x_0))
  \\
  &\geq
  d - \delta(\circone(x_0,\theta))
\end{align}
Hence, with the bound on the statistical dimension of the circular cone \eqref{eq:stat_dim_circ_cone}, 
\begin{equation}
  \delta(\DC(\regNormMax[\mu^\ast\!]{\argdot},x_0))
  \geq
  d (1-\sin^2\theta)-2 = d \cos^2\theta - 2\, .
\end{equation}
\qed\end{proof}

Recall from \eqref{eq:lipschitz} a norm $f$ is $L$-Lipschitz
with respect to the $\ell_2$--norm 
on a (sub-)space $V$ 
if
\begin{equation}
  f(x)\leq L \|x\|_{\ell_2} 
  \label{eq:lipschitz:norm}
\end{equation}
for all $x\in V$. 
Note that this implies for the dual norm $f^\circ$ that
\begin{equation}\label{eq:lipschitz:dnorm}
  \|x\|_{\ell_2}\leq Lf^\circ(x) 
\end{equation}
for all $x\in V$.

\begin{proposition}[{\cite[Section~3]{MuHuaWri13}}]
\label{prop:delta_lower_bound_of_cos_theta}
  Let $f$ be a norm on $V \cong \RR^\totdim$ that is $L$-Lipschitz
  with respect to the $\ell_2$-norm on $V$.
  Then 
  \begin{align}
    \partial f(x_0)
    &\subseteq
      \left\{ x:\ 
      \frac{\braket{x}{x_0}}{\lTwoNorm{x}\lTwoNorm{x_0}} 
      \geq 
      \frac{f(x_0)}{L \lTwoNorm{x_0}} 
      \right\}
      = 
      \circone(x_0, \theta)
  \end{align}
  with $\cos(\theta) = \frac{f(x_0)}{L \lTwoNorm{x_0}}$. 
  \label{prop:generic:subdiff:circone}
\end{proposition}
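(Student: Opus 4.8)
The plan is to start from the explicit description of the subdifferential of a norm in \eqref{eq:def:subdifferential} and to reduce the claimed inclusion to a single upper bound on the Euclidean length of any subgradient. Concretely, I would fix an arbitrary $y \in \partial f(x_0)$. By \eqref{eq:def:subdifferential} it satisfies the two conditions $\braket{y}{x_0} = f(x_0)$ and $f^\circ(y) \leq 1$. Since $x_0 \neq 0$ and $f$ is a norm we have $f(x_0) > 0$, so the inner product $\braket{y}{x_0}$ is strictly positive; this guarantees that the cosine of the angle between $y$ and $x_0$ is nonnegative, matching $\theta \in [0,\pi/2]$.

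Next I would bound $\lTwoNorm{y}$ from above. The hypothesis that $f$ is $L$-Lipschitz with respect to $\ell_2$ gives, via the dual relation \eqref{eq:lipschitz:dnorm}, the inequality $\lTwoNorm{y} \leq L\, f^\circ(y)$, and combining this with $f^\circ(y) \leq 1$ yields $\lTwoNorm{y} \leq L$. Substituting $\braket{y}{x_0} = f(x_0)$ together with this length bound into the cosine of the angle between $y$ and $x_0$ then gives
\begin{equation}
  \frac{\braket{y}{x_0}}{\lTwoNorm{y}\,\lTwoNorm{x_0}}
  =
  \frac{f(x_0)}{\lTwoNorm{y}\,\lTwoNorm{x_0}}
  \geq
  \frac{f(x_0)}{L\,\lTwoNorm{x_0}} \, ,
\end{equation}
which is exactly the defining inequality of the displayed set with $\cos\theta = f(x_0)/(L\lTwoNorm{x_0})$.

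It remains to identify that set with the circular cone. Because the cosine is monotonically decreasing on $[0,\pi]$, the condition $\braket{y}{x_0} \geq \cos(\theta)\,\lTwoNorm{y}\,\lTwoNorm{x_0}$ is equivalent to $\angle(x_0,y)\leq\theta$, i.e.\ to membership in $\circone(x_0,\theta)$ as defined in \eqref{eq:def:circ_cone}; here one uses the Lipschitz bound $f(x_0)\leq L\lTwoNorm{x_0}$ itself to ensure $\cos\theta\in[0,1]$, so that $\theta$ is well defined in $[0,\pi/2]$. Since the two Lipschitz-duality facts are already supplied in the excerpt, the argument is essentially a direct computation; the only points requiring care are the upper bound $\lTwoNorm{y}\leq L$ on subgradients — which is where the Lipschitz hypothesis enters decisively — and the monotonicity step that converts the cosine inequality into the cone membership.
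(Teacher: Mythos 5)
Your proof is correct and follows essentially the same route as the paper: both start from the subdifferential characterization $\partial f(x_0) = \{x : \braket{x}{x_0} = f(x_0),\ f^\circ(x)\leq 1\}$ and combine the dual Lipschitz bound $\lTwoNorm{x}\leq L f^\circ(x)$ with $f^\circ(x)\leq 1$ to lower-bound the cosine of the angle by $f(x_0)/(L\lTwoNorm{x_0})$. The only difference is cosmetic (you first consolidate the two inequalities into $\lTwoNorm{y}\leq L$, while the paper chains them inside one display), plus your added remarks on positivity of the inner product and well-definedness of $\theta$, which the paper leaves implicit.
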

For sake of self-containedness we summarize the proof of \cite[Section~3]{MuHuaWri13}. 

\begin{proof}
  The subdifferential of a norm $f$ on $V \cong \RR^\totdim$ with dual norm $f^\circ$ is given by 
  \begin{equation}
    \partial f(x_0) 
    = 
    \{ x:\ \braket {x}{x_0} =  f(x_0) , \ f^\circ(x) \leq 1 \} \, .
    \label{eq:prop:generic:subdiff:proof:1}
  \end{equation}
  Then, thanks to H\"older's inequality
  $\braket{x}{x_0} \leq f^\circ(x) f(x_0)$, 
  we have for any subgradient $x \in \partial f(x_0)$ 
  \begin{equation}
    \frac{\braket{x}{x_0}}{\lTwoNorm{x}\lTwoNorm{x_0}}
    \overset{\eqref{eq:prop:generic:subdiff:proof:1}}{=}
    \frac{f(x_0)}{\lTwoNorm{x_0}\lTwoNorm{x}}
    \overset{\eqref{eq:lipschitz:dnorm}}{\geq}
    \frac{f(x_0)}{L \lTwoNorm{x_0}f^\circ(x)}
    \overset{\eqref{eq:prop:generic:subdiff:proof:1}}{\geq}
    \frac{f(x_0)}{L \lTwoNorm{x_0}}
  \end{equation} 
  This bound directly implies the claim. 
\qed\end{proof}

So together, Propositions~\ref{prop:deltaDC_lower_bound} and~\ref{prop:delta_lower_bound_of_cos_theta} imply 
\begin{equation}
  \delta(\DC(f,x_0)) \geq \frac{d}{L^2} \rank_{f}(x_0)-2 \, ,
\end{equation}
where 
\begin{equation}
  \rank_{f}(x_0) \coloneqq \frac{f(x_0)^2}{\lTwoNorm{x_0}}
\end{equation}
is the \emph{$f$-rank} of $x_0$ (e.g., effective sparsity for $f = \lOneNorm{\argdot}$ and effective matrix rank for $f= \OneNorm{\argdot}$).

\subsubsection{Weighted regularizers}
A larger subdifferential leads to a smaller statistical dimension of
the descent cone and,
hence, to a reconstruction with a potentially smaller number of
measurements in the optimization problems
\begin{equation}
  \min \regNormMax{x} \quad \text{subject to } \A(x) = y
  \label{eq:generic:max:cvx}
\end{equation}
and 
\begin{equation}
  \min \regNormSum{x}\quad\text{subject to } \A(x) = y
  \label{eq:generic:sum:cvx}
\end{equation}
with the norms \eqref{eq:def_max_sum_reg_norm}. 
Having the statistical dimension \eqref{eq:stad_dim_as_dist} in mind, 
we set
\begin{align}
  \deltasum(\lambda)
  &\coloneqq 
  \EE_g\left[ \fNorm{g-\cone \partial \regNormSum{\argdot}(x_0)}^2 \right]
  \\
  \deltamax(\mu)
  &\coloneqq 
  \EE_g\left[ \fNorm{g-\cone \partial \regNormMax{\argdot}(x_0)}^2 \right]
  \, , 
\end{align}
which give the optimal number of measurements in a precise sense \cite[Theorem~II]{AmeLotMcC14}. 
Note that it is clear that $\deltasum(\lambda)$ is continuous in $\lambda$.

Now we fix $x_0$ and consider adjusting the weights $\lambda_i$ and
$\mu_i$ depending on $x_0$. 
We will see the geometric arguments from \cite{MuHuaWri13} extend to that case. 
Oymak et al.~\cite[Lemma~1]{Oymak2015} show that the max-norm $\regNormMax{\argdot}$ with weights $\mu_i$ chosen as 
\begin{equation}\label{eq:optimal_weights}
  \mu_i^\ast \coloneqq \frac{1}{\norm{x_0}_{(i)}} 
\end{equation}
has a better recovery performance than all other convex combinations of norms. 
With this choice the terms in the maximum \eqref{eq:def_max_sum_reg_norm} defining $\regNormMax{\argdot}$ are all equal. 
Hence, from the subdifferential of a maximum of functions \eqref{eq:subdiff:max} it follows that this choice of weights is indeed optimal. 
Moreover, the optimally weighted max-norm $\regNormMax[\mu^\ast]{\argdot}$ leads to a better (smaller) statistical dimension for $x_0$ than all sum-norms $\regNormSum{\argdot}$ and, hence, indeed to a better reconstruction performance:

\begin{proposition}[Optimally weighted max-norm is better than any sum-norm]\label{prop:opt_max_is_best}
Consider a signal $x_0$ and the corresponding optimal weights $\mu^\ast$ from \eqref{eq:optimal_weights}. 
Then, 
for all possible weights $\lambda\succeq 0$ in the sum-norm, 
we have
\begin{equation}
  \cone \partial \regNormSum{\argdot}(x_0) 
  \subset
  \cone \partial \regNormMax[\mu^\ast]{\argdot}(x_0) \, .
\end{equation}
In particular, $\deltamax(\mu^\ast) \leq \deltasum(\lambda)$ for all $\lambda\succeq 0$.  
\end{proposition}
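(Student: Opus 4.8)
The plan is to compute both cones $\cone \partial \regNormSum{\argdot}(x_0)$ and $\cone \partial \regNormMax[\mu^\ast]{\argdot}(x_0)$ explicitly from the subdifferential calculus recalled in \eqref{eq:subdiff:sum} and \eqref{eq:cone_subdiff:max}, establish the set inclusion directly, and then read off the statistical-dimension inequality from the monotonicity of the Euclidean distance functional under set inclusion.

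The decisive observation concerns the max-norm with the optimal weights. Since $\mu_i^\ast = 1/\norm{x_0}_{(i)}$, every term in the maximum satisfies $\mu_i^\ast\norm{x_0}_{(i)} = 1$, so the active index set $\{i:\ \mu_i^\ast\norm{x_0}_{(i)} = \regNormMax[\mu^\ast]{x_0}\}$ is all of $[k]$. Applying \eqref{eq:cone_subdiff:max} and using that positive scaling leaves the generated cone unchanged, so $\cone \partial(\mu_i^\ast\norm{\argdot}_{(i)})(x_0) = \cone \partial\norm{\argdot}_{(i)}(x_0)$, I would obtain
\begin{equation*}
  \cone \partial \regNormMax[\mu^\ast]{\argdot}(x_0)
  = \sum_{i\in[k]} \cone \partial \norm{\argdot}_{(i)}(x_0) \, .
\end{equation*}
This is the \emph{full} Minkowski sum of all $k$ individual cones; the optimal weighting is exactly what forces every summand to be present.

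For the sum-norm side I would use \eqref{eq:subdiff:sum} (applicable since each norm has full domain, so the relative interiors of the domains share every point) to write $\partial \regNormSum{\argdot}(x_0) = \sum_{i\in[k]} \lambda_i\,\partial\norm{\argdot}_{(i)}(x_0)$, then apply the Minkowski-sum inequality \eqref{eq:conv(MinkowskiSum)}, namely $\cone(C_1+C_2)\subseteq\cone(C_1)+\cone(C_2)$, inductively over the $k$ summands, together with $\cone(\lambda_i C) = \cone(C)$ for $\lambda_i>0$ and the trivial contribution $\{0\}$ whenever $\lambda_i=0$. This yields
\begin{equation*}
  \cone \partial \regNormSum{\argdot}(x_0)
  \subseteq \sum_{i\in[k]} \cone \partial \norm{\argdot}_{(i)}(x_0) \, ,
\end{equation*}
and the right-hand side is precisely the max-norm cone computed above, giving the asserted inclusion for every $\lambda\succeq 0$.

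Finally, for the statistical-dimension claim I would invoke that the distance to a set can only shrink when the set grows: from the inclusion it follows pointwise that $\fNorm{g-\cone \partial \regNormMax[\mu^\ast]{\argdot}(x_0)} \leq \fNorm{g-\cone \partial \regNormSum{\argdot}(x_0)}$, and squaring and taking the expectation over $g$ gives $\deltamax(\mu^\ast)\leq\deltasum(\lambda)$. I do not expect a deep obstacle here; the only delicate bookkeeping is the handling of vanishing weights $\lambda_i=0$ and the one-directional nature of \eqref{eq:conv(MinkowskiSum)}. Fortunately the inequality points the right way, since we need the sum cone to be the smaller object, and the conceptual heart is simply that the optimal weighting activates all $k$ constraints in the maximum while the sum can never generate more than their Minkowski sum.
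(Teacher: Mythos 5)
Your proposal is correct and follows essentially the same route as the paper's own proof: identifying the optimally weighted max-norm cone as the full Minkowski sum $\sum_{i\in[k]} \cone \partial \norm{\argdot}_{(i)}(x_0)$ via \eqref{eq:cone_subdiff:max} (the paper uses \eqref{eq:subdiff:max} together with \eqref{eq:cone_conv}, which is the same thing), computing the sum-norm subdifferential via \eqref{eq:subdiff:sum}, and concluding with \eqref{eq:conv(MinkowskiSum)}. Your additional bookkeeping (all indices active under $\mu^\ast$, scale invariance of generated cones, vanishing weights, and the distance-monotonicity step giving $\deltamax(\mu^\ast) \leq \deltasum(\lambda)$) is left implicit in the paper but is exactly the intended argument.
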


\begin{proof}
Using \eqref{eq:subdiff:max} and \eqref{eq:cone_conv} we obtain 
\begin{align}
\cone \partial \regNormMax{\argdot}(x_0) 
&= 
\cone \conv\left(\bigcup_{i\in [k]} \partial \norm{\argdot}_{(i)}(x_0)\right)
\\
&=\sum_{i\in [k]} \cone \partial \norm{\argdot}_{(i)}(x_0) 
\label{eq:cone_max_norm}
\end{align}
with \eqref{eq:subdiff:sum} 
\begin{equation}
  \partial \regNormSum{\argdot}(x_0) 
  =
  \sum_{i \in [k]} \lambda_i \partial\norm{\argdot}_{(i)}(x_0) \, .
\end{equation}
Then \eqref{eq:conv(MinkowskiSum)} concludes the proof. 
\qed\end{proof}

Proposition~\ref{prop:opt_max_is_best} implies that if the max-norm minimization \eqref{eq:generic:max:cvx} with optimal weight \eqref{eq:optimal_weights} does not recover $x_0$, then the sum-norm minimization \eqref{eq:generic:sum:cvx} can also not recover $x_0$ for any weight $\lambda$.  

Now we will see that lower bounds on the number of measurements from \cite[Section~3]{MuHuaWri13} straightforwardly extend to the max-norm with optimal weights. 
These lower bound are obtained by deriving an inclusion into a circular cone. 
Combining Proposition~\ref{prop:delta_lower_bound_of_cos_theta} with \eqref{eq:cone_max_norm} and noting that a Minkowski sum of circular cones \eqref{eq:def:circ_cone} is just the largest circular cone yields the following:

\begin{proposition}[Subdifferential contained in a circular cone]
\label{prop:subdiff_circcone}
Let $0\neq x_0\in V \cong \RR^\totdim$ (signal) and $\norm{\argdot}_{(i)}$ be norms satisfying $\norm{x}_{(i)}\leq L_i \lTwoNorm{x}$ for $i\in [k]$ and for all $x\in V$. 
Then the subdifferential of 
\begin{equation}\label{eq:opt_weighted_Max_norm}
  \regNormMax[\mu^\ast\!]{\argdot}
  \coloneqq 
  \max_{i\in[k]} \, \frac{\norm{\argdot}_{(i)}}{\norm{x_0}_{(i)}} 
\end{equation}
satisfies
\begin{equation}
  \partial \regNormMax[\mu^\ast\!]{\argdot} (x_0)
  \subset \circone(x_0,\theta)
% \end{equation}
\quad 
\text{with}
\quad
% \begin{equation}
  \cos(\theta) 
  = 
  \max_{i\in[k]} \frac{\norm{x_0}_{(i)}}{L_i \lTwoNorm{x_0}} \, .
  \label{eq:costheta}
\end{equation}
\end{proposition}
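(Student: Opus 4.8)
The plan is to reduce the combined statement to the single‑norm bound of Proposition~\ref{prop:delta_lower_bound_of_cos_theta} applied term by term, and then to assemble the pieces through the cone identity \eqref{eq:cone_max_norm}. Concretely, I would set $g_i \coloneqq \mu_i^\ast \norm{\argdot}_{(i)} = \norm{\argdot}_{(i)}/\norm{x_0}_{(i)}$. Since $\norm{x}_{(i)}\leq L_i \lTwoNorm{x}$, each $g_i$ is a norm that is $\tilde L_i$‑Lipschitz with respect to $\ell_2$, where $\tilde L_i \coloneqq L_i/\norm{x_0}_{(i)}$, and $g_i(x_0)=1$ by the choice of optimal weights \eqref{eq:optimal_weights}.

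First I would record that with these weights every term is active at $x_0$, i.e.\ $g_i(x_0) = 1 = \regNormMax[\mu^\ast]{x_0}$ for all $i\in[k]$. Hence the subdifferential‑of‑a‑maximum formula \eqref{eq:subdiff:max} together with \eqref{eq:cone_max_norm} gives $\cone\partial\regNormMax[\mu^\ast]{\argdot}(x_0) = \sum_{i\in[k]}\cone\partial g_i(x_0)$, a Minkowski sum over all indices. Because containment of a subdifferential in a circular cone is equivalent to containment of the cone it generates (a circular cone being a cone), it suffices to bound this Minkowski sum.

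Next I would apply Proposition~\ref{prop:delta_lower_bound_of_cos_theta} to each $g_i$ separately, obtaining $\partial g_i(x_0)\subseteq\circone(x_0,\theta_i)$ with $\cos\theta_i = g_i(x_0)/(\tilde L_i\lTwoNorm{x_0}) = \norm{x_0}_{(i)}/(L_i\lTwoNorm{x_0})$. The concluding step is the elementary geometric fact that a Minkowski sum of coaxial circular cones equals the widest of them: if $\theta_{i^\ast}=\max_i\theta_i$ then each $\circone(x_0,\theta_i)\subseteq\circone(x_0,\theta_{i^\ast})$, and $\circone(x_0,\theta_{i^\ast})$ is a convex cone, hence closed under addition, so $\sum_i\circone(x_0,\theta_i)=\circone(x_0,\theta_{i^\ast})$. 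This yields $\partial\regNormMax[\mu^\ast]{\argdot}(x_0)\subset\circone(x_0,\theta)$ with $\theta=\max_i\theta_i$.

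The one point that needs care — and, I am convinced, the source of a typo in the displayed formula \eqref{eq:costheta} — is translating $\theta=\max_i\theta_i$ back into a cosine, and this is the main obstacle to matching the statement verbatim. Since cosine is decreasing on $[0,\pi/2]$, the widest cone has the \emph{smallest} cosine, so the correct expression is $\cos\theta = \cos(\max_i\theta_i) = \min_{i\in[k]}\cos\theta_i = \min_{i\in[k]}\norm{x_0}_{(i)}/(L_i\lTwoNorm{x_0})$, a minimum rather than the stated maximum. As a cross‑check I would apply Proposition~\ref{prop:delta_lower_bound_of_cos_theta} directly to $\regNormMax[\mu^\ast]{\argdot}$, whose $\ell_2$‑Lipschitz constant is $\max_i\tilde L_i$; this independently gives $\cos\theta = 1/(\max_i\tilde L_i\,\lTwoNorm{x_0}) = \min_i\norm{x_0}_{(i)}/(L_i\lTwoNorm{x_0})$, the same minimum. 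The subdifferential of the maximum is the convex hull of a union of coaxial cones and therefore cannot be confined to the narrowest individual cone that a maximum would suggest; I would accordingly state and prove \eqref{eq:costheta} with $\min_i$ in place of $\max_i$.
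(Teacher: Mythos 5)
Your argument is essentially the paper's own proof. The proposition is derived there exactly as you propose: the sentence preceding it states that the result follows by combining Proposition~\ref{prop:delta_lower_bound_of_cos_theta} (applied to each norm) with the Minkowski-sum identity \eqref{eq:cone_max_norm} --- valid because the optimal weights \eqref{eq:optimal_weights} make every term active in \eqref{eq:subdiff:max} --- and the fact that a Minkowski sum of coaxial circular cones is the \emph{largest} (widest) of them. Your cross-check, applying Proposition~\ref{prop:delta_lower_bound_of_cos_theta} directly to $\regNormMax[\mu^\ast\!]{\argdot}$ with Lipschitz constant $\max_i L_i/\norm{x_0}_{(i)}$, is not in the paper but confirms the same conclusion.

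Your emendation of \eqref{eq:costheta} is also correct: the widest cone has the \emph{smallest} cosine, so the containment the argument actually yields is $\cos\theta=\min_{i\in[k]}\norm{x_0}_{(i)}/(L_i\lTwoNorm{x_0})$, and the $\max$ in the display (as well as in \eqref{eq:deltaDC_bound} and the $f$-rank bound following it) cannot be right as printed. A concrete check: in $\RR^2$ take $\norm{\argdot}_{(1)}=\lOneNorm{\argdot}$ (so $L_1=\sqrt 2$), $\norm{\argdot}_{(2)}=\lTwoNorm{\argdot}$ (so $L_2=1$), and $x_0=e_1$; then $\max_i\cos\theta_i=1$ would confine the subdifferential to the ray through $x_0$, yet $(1,1)\in\partial\lOneNorm{\argdot}(x_0)\subset\partial\regNormMax[\mu^\ast\!]{\argdot}(x_0)$ by \eqref{eq:subdiff:max} and makes angle $\pi/4$ with $x_0$. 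The paper's downstream statements corroborate the $\min$ reading: $\kappa$ in \eqref{eq:min_nr_of_measurements} of Theorem~\ref{thm:generic:lowerbound:max} is a minimum over $i$, as are $\kappa=\min(\kappa_{\ell_1},\kappa_1)$ in the proof of Corollary~\ref{cor:sl:lowerbound:max} and the bound \eqref{eq:lower_bound_funny3}, which match exactly what the corrected ($\min$) version of the proposition delivers. So your proof coincides with the paper's, and your correction fixes a genuine typo rather than a deviation on your part.
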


In conjunction with Proposition~\ref{prop:deltaDC_lower_bound} this yields the bound
\begin{equation} \label{eq:deltaDC_bound}
  \delta(\DC(\regNormMax[\mu^\ast\!]{\argdot},x_0))
  \geq
  \max_{i\in[k]} \frac{d\norm{x_0}_{(i)}^2}{L_i^2 \lTwoNorm{x_0}^2} -2 \, .
\end{equation}

Hence, upper bounds on the Lipschitz constants of the single norms
yield a circular cone containing the subdifferential of the maximum,
where the smaller the largest upper bound the smaller the circular
cone. 
In terms of $f$-ranks it means that 
\begin{equation} 
  \delta(\DC(\regNormMax[\mu^\ast\!]{\argdot},x_0))
  \geq
  \max_{i\in[k]} \frac{d}{L_i^2}\rank_{\norm{\argdot}_{(i)}}(x_0) - 2 \,  .
\end{equation}

Now, \cite[Lemma~4]{MuHuaWri13} can be replaced by this slightly more general proposition and the main lower bound \cite[Theorem~5]{MuHuaWri13} on the number of measurements $m$ still holds. 
The factor $16$ in \cite[Corollary~4]{MuHuaWri13} can be replaced by an $8$ (see updated Arxiv version \cite{AmeLotCoy13arxiv} of \cite{AmeLotMcC14}). 
These arguments 
(specifically, \cite[(7.1) and (6.1)]{AmeLotCoy13arxiv} with $\lambda \coloneqq \delta(C)-m$)
show the following for the statistical dimension $\delta$ of the descent cone. 
The probability of successful recovery for $m\leq \delta$ is upper bounded by $p$ for all $m \leq \delta - \sqrt{8 \delta \ln(4/p)}$.
Conversely, the probability of unsuccessful recovery for $m\geq \delta$ is upper bounded by $q$ for any $m \geq \delta + \sqrt{8 m \ln(4/q)}$, 
so in particular, for any $m \geq \delta + \sqrt{8 d \ln(4/q)}$. 
Moreover, this yields the following implication of \cite{AmeLotMcC14}: 

\begin{theorem}[Lower bound]
\label{thm:generic:lowerbound:max}
  Let $\norm{\argdot}_{(i)}$ be norms satisfying
  $\norm{x}_{(i)}\leq L_i \lTwoNorm{x}$ for $i\in [k]$ and for all $x\in V$. 
  Fix $0\neq x_0\in V \cong \RR^\totdim$ (signal) and set
  \begin{equation} \label{eq:min_nr_of_measurements}
    \kappa
    \coloneqq 
    \min_{i\in [K]} \frac{\totdim \norm{x_0}_{(i)}^2}{L_i^2 \lTwoNorm{x_0}^2} -2 
    \quad \text{(min. number of measurements)}.
  \end{equation}
  Consider the optimally weighted max-norm \eqref{eq:opt_weighted_Max_norm}. 
  Then, for all $m\leq \kappa$, the probability $\psuccmax$ that
  $x_0$ is the unique minimizer of the reconstruction program \eqref{eq:generic:max:cvx} is bounded by
  \begin{equation}\label{eq:generic:lowerbound:max}
    \psuccmax 
    \leq 
    4 \exp\left(-\frac{(\kappa-m)^2}{8\kappa} \right) \, .
  \end{equation}
\end{theorem}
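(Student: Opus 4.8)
The plan is to reduce Theorem~\ref{thm:generic:lowerbound:max} to the sharp Gaussian phase-transition estimate recalled just before it from \cite{AmeLotMcC14,AmeLotCoy13arxiv}, and then to weaken the (generally intractable) statistical dimension of the descent cone to the explicit quantity $\kappa$. Write $\delta \coloneqq \delta(\DC(\regNormMax[\mu^\ast\!]{\argdot},x_0))$. The geometry has already been carried out: Proposition~\ref{prop:subdiff_circcone} places the subdifferential $\partial\regNormMax[\mu^\ast\!]{\argdot}(x_0)$ inside a circular cone $\circone(x_0,\theta)$, and feeding this into Proposition~\ref{prop:deltaDC_lower_bound} yields the bound \eqref{eq:deltaDC_bound} for $\delta$. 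Since $\kappa$ is built from the same ratios $\totdim\norm{x_0}_{(i)}^2/(L_i^2\lTwoNorm{x_0}^2)$ and is no larger than the right-hand side of \eqref{eq:deltaDC_bound}, the first thing I would record is
\begin{equation*}
  m \leq \kappa \leq \delta ,
\end{equation*}
where the first inequality is the hypothesis of the theorem.

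Next I would turn the cited phase-transition bound into the exponential form appearing in \eqref{eq:generic:lowerbound:max}. The statement recalled before the theorem says that, for $m\leq\delta$, the success probability is at most $p$ whenever $m \leq \delta - \sqrt{8\delta\ln(4/p)}$. Choosing $p \coloneqq 4\exp(-(\delta-m)^2/(8\delta))$ makes $\sqrt{8\delta\ln(4/p)} = \delta-m$ (using $\delta\geq m$), so the required inequality $m \leq \delta - \sqrt{8\delta\ln(4/p)}$ holds with equality and the bound applies. This gives
\begin{equation*}
  \psuccmax \leq 4\exp\!\left(-\frac{(\delta-m)^2}{8\delta}\right) .
\end{equation*}

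The genuinely load-bearing step, and the one I expect to require the most care, is trading the exact $\delta$ for the computable $\kappa$: the estimate above is tightest at the true statistical dimension, and one must verify that replacing $\delta$ by the smaller $\kappa$ only weakens it. For this I would check that
\begin{equation*}
  t \mapsto \frac{(t-m)^2}{8t} \quad\text{is nondecreasing on } [m,\infty),
\end{equation*}
which is immediate from its derivative $(t^2-m^2)/(8t^2)\geq 0$. Because $m\leq\kappa\leq\delta$, monotonicity gives $(\delta-m)^2/(8\delta)\geq(\kappa-m)^2/(8\kappa)$, and hence
\begin{equation*}
  \psuccmax \leq 4\exp\!\left(-\frac{(\delta-m)^2}{8\delta}\right) \leq 4\exp\!\left(-\frac{(\kappa-m)^2}{8\kappa}\right),
\end{equation*}
which is exactly \eqref{eq:generic:lowerbound:max}. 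Everything else is either a direct citation or an immediate consequence of the preceding propositions.
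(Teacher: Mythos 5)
Your proposal is correct and follows essentially the same route as the paper: the paper obtains $\delta \geq \kappa$ from Propositions~\ref{prop:deltaDC_lower_bound} and~\ref{prop:subdiff_circcone} and then invokes the quoted phase-transition bound from \cite{AmeLotCoy13arxiv}, leaving the substitution of $\kappa$ for the true statistical dimension $\delta$ implicit. Your choice $p = 4\exp\left(-(\delta-m)^2/(8\delta)\right)$ and the monotonicity check of $t \mapsto (t-m)^2/(8t)$ on $[m,\infty)$ are exactly the details needed to make that implicit substitution rigorous.
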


We will specify the results in more detail for the sparse and low-rank
case in Corollary \ref{cor:sl:lowerbound:max} and an example for the tensor case in Eq.~\eqref{eq:lower_bound_funny3}.

%%% ----------------------------------------------------
\subsection{Estimating the statistical dimension via sampling and semidefinite programming}
%%% ----------------------------------------------------
Often one can characterize the subdifferential of the regularizer by a semidefinite program (SDP). 
In this case, one can estimate the statistical dimension via sampling and solving such SDPs. 

In more detail, in order to estimate the statistical dimension \eqref{eq:stad_dim_as_dist} for a norm $f$, 
we sample the Gaussian vector $g \sim \mc N(0,\1)$, 
calculate the distance $\fNorm{g-\cone \partial f(x_0)}^2$ using the SDP-formulation of $\partial f(x_0)$ and take the sample average in the end. 
In order to do so, we wish to also have an SDP characterization of the dual norm $\partial f^\circ$ of $f$, 
which provides a characterization of the subdifferential \eqref{eq:def:subdifferential} of $f$. 

In the case when the regularizer function $f$ is either $\regNormMax{\argdot}$ or $\regNormSum{\argdot}$, where the single norms $\norm{\argdot}_{(i)}$ have simple dual norms, 
we can indeed obtain such an SDP characterization of the dual norm $f^\circ$. 

\subsubsection{Dual norms}
\label{sec:dual_norms}
%
%In the case of the sum of the nuclear and $\ell_1$-norm this has been
%established already in \cite{Doan2013}.
It will be convenient to have explicit formulae for the dual norms to the combined regularizers 
defined in Section \ref{subsec:combined_regularizers}. 
\begin{lemma}[Dual of the maximum/sum of norms]\label{lem:dual_norms}
  Let $\norm{\argdot}_{(i)}$ be norms ($i\in [k]$) and denote by 
  $\regNormMax{\argdot}$ and 
  $\regNormSum{\argdot}$ 
  their weighted maximum and weighted sum as defined in \eqref{eq:def_max_sum_reg_norm}. 
  Then their dual norms are given by
  \begin{align}
  \regNormMax{y}^\circ &=  \inf\left\{\sum_{i=1}^k  \frac{1}{\mu_i} \norm{x_i}_{(i)}^\circ : \ y = \sum_{i=1}^k x_i \right\} 
  \label{eq:polar_of_max}
  \\
  \regNormSum{y}^\circ &=    \inf\left\{\max_{i\in[k]} \frac{1}{\lambda_i} \norm{x_i}_{(i)}^\circ : \ y = \sum_{i=1}^k x_i \right\} 
  \label{eq:polar_of_sum}
  .
  \end{align}
\end{lemma}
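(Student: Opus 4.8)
The plan is to prove each formula by biduality: I would show that the right-hand side of \eqref{eq:polar_of_max} (resp.\ \eqref{eq:polar_of_sum}) defines a norm whose own dual is exactly $\regNormMax{\argdot}$ (resp.\ $\regNormSum{\argdot}$). Since all objects involved are genuine norms on the finite-dimensional space $V$, the relation $\norm{\argdot}^{\circ\circ}=\norm{\argdot}$ then identifies that right-hand side with the dual norm we want. The engine is the elementary gauge/support-function calculus, together with two facts: the support function of the dual unit ball $B_i^\circ\coloneqq\{w:\norm{w}_{(i)}^\circ\le 1\}$ is the primal norm, $\sup_{w\in B_i^\circ}\braket{z}{w}=\norm{z}_{(i)}$, and a linear functional attains the same supremum over a set and over its convex hull.

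For \eqref{eq:polar_of_max} I would first note that its right-hand side $h(y)\coloneqq\inf\{\sum_{i}\tfrac1{\mu_i}\norm{x_i}_{(i)}^\circ:\ y=\sum_i x_i\}$ is the infimal convolution of the norms $\tfrac1{\mu_i}\norm{\argdot}_{(i)}^\circ=\gamma_{\mu_i B_i^\circ}$, hence the gauge (Minkowski functional) of the compact convex set $C\coloneqq\conv\bigcup_{i\in[k]}\mu_i B_i^\circ$ (which contains the origin); in particular $h$ is a norm. Its dual $h^\circ$ is then the support function of $C$, and because a linear functional has the same supremum over $C$ as over $\bigcup_i\mu_i B_i^\circ$, I would compute $h^\circ(z)=\max_{i\in[k]}\mu_i\sup_{w\in B_i^\circ}\braket{z}{w}=\max_{i\in[k]}\mu_i\norm{z}_{(i)}=\regNormMax{z}$. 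Biduality $h^{\circ\circ}=h$ then yields $h=\regNormMax{\argdot}^\circ$, which is \eqref{eq:polar_of_max}.

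The argument for \eqref{eq:polar_of_sum} is strictly analogous, with the Minkowski sum replacing the convex hull of the union. Here the right-hand side $\tilde h(y)\coloneqq\inf\{\max_{i}\tfrac1{\lambda_i}\norm{x_i}_{(i)}^\circ:\ y=\sum_i x_i\}$ is the gauge of the Minkowski sum $\sum_{i\in[k]}\lambda_i B_i^\circ$, via the elementary identity $\gamma_{\sum_i C_i}(y)=\inf\{\max_i\gamma_{C_i}(x_i):y=\sum_i x_i\}$ for convex sets $C_i\ni 0$. Since the support function of a Minkowski sum is the sum of the support functions, I would obtain $\tilde h^\circ(z)=\sum_{i\in[k]}\lambda_i\sup_{w\in B_i^\circ}\braket{z}{w}=\sum_{i\in[k]}\lambda_i\norm{z}_{(i)}=\regNormSum{z}$, and biduality gives $\tilde h=\regNormSum{\argdot}^\circ$.

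The routine points that still need care are that $h$ and $\tilde h$ are bona fide norms and that the suprema above are attained; both hold because $V$ is finite-dimensional, so all $\norm{\argdot}_{(i)}$ are mutually equivalent, every $y$ admits a decomposition, the sets $C$ and $\sum_i\lambda_i B_i^\circ$ are compact, and the defining infima are positive for $y\neq 0$ and attained. The only structural inputs are the two gauge identities — gauge of a convex hull of a union is the infimal convolution, and gauge of a Minkowski sum is the infimum over all splittings of the maximum of the component gauges — which I would verify directly from the definition of the gauge. I expect the main obstacle to be purely expository: making sure no duality gap is silently assumed. Working throughout with the compact primal sets $C$ and $\sum_i\lambda_i B_i^\circ$ and their support functions, rather than invoking Fenchel duality with a constraint qualification, sidesteps this issue cleanly.
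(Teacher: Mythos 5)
Your proof is correct, and it runs on the same gauge/support-function calculus as the paper's, but in the opposite direction. The paper starts from the primal norm: after rescaling so that all weights are $1$, it identifies the unit ball of $\regNormMax{\argdot}$ as the intersection $B_1\cap B_2$ of the primal balls, invokes the rule $(B_1\cap B_2)^\circ=\conv(B_1^\circ\cup B_2^\circ)$ (resp.\ the support-function additivity and the bipolar theorem for the sum case), and then unpacks membership in that polar body to arrive at the infimum formula. You instead start from the right-hand side: you recognize it as the gauge of $\conv\bigl(\bigcup_{i}\mu_i B_i^\circ\bigr)$ (resp.\ of the Minkowski sum $\sum_i\lambda_i B_i^\circ$), compute the dual of that gauge by elementary support-function identities (support of a convex hull of a union is the maximum of the supports; support of a Minkowski sum is their sum), recover $\regNormMax{\argdot}$ (resp.\ $\regNormSum{\argdot}$), and conclude by biduality $h^{\circ\circ}=h$. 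The two arguments use identical convex bodies and are mathematically equivalent --- your gauge identity for the convex hull of a union is exactly the paper's explicit computation rewriting $y=(1-t)y_1+ty_2$ as $y=x_1+x_2$ --- but the reversal changes what is cited versus what is verified: the paper leans on the polar-of-intersection rule as a black box from Rockafellar, while your route needs only definitional gauge computations plus norm biduality, treats the max and sum cases symmetrically under one template, and makes explicit the compactness/attainment points that the paper's sketch glosses over. Both are sound; yours is marginally more self-contained, the paper's marginally shorter.
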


Statements of similar nature are well-known in functional analysis as well as in 
classical convex geometry (in the language of support functions and polar sets)
or in convex analysis (in the more general context of lower semi-continuous 
convex functions and their Legendre-Fenchel transforms). For completeness, we 
will provide a sketch of the argument. It will be convenient to 
use the notation from the convex analysis book by Rockafellar \cite{Rockafellar1996}. 
If $C\subset \RR^d$,  
 its \emph{support function} is defined by
\begin{equation}
  \delta^\ast_C(x) \coloneqq \sup_{y\in C} \langle x, y \rangle . 
\end{equation}
%and its indicator function by 
%\begin{equation}
%  \delta_C(x) \coloneqq 
%  \begin{cases} 0& \text{ if $x \in C$}
%    \\
%    \infty& \text{ otherwise.} 
%  \end{cases}
%\end{equation}
and the \emph{polar} of $C$ by
\begin{equation}
  C^\circ \coloneqq \{ y : \langle y, x \rangle \leq 1 \ \forall x \in C\}\, .
\end{equation}
(Note that while formally different, this definition is consistent with 
the polar of a cone introduced in \eqref{eq:cone_dual}.) 
If $C$ is closed convex and contains the origin, then we define its \emph{gauge function} by
\begin{equation}
\gamma_C(x) \coloneqq \inf \{ \lambda \geq 0 : x \in \lambda C \} . 
\end{equation}
%The polar of a gauge function $\gamma$ is denoted by 
%$\gamma^\circ(y) \coloneqq \inf\{\lambda \geq 0 : \langle y,x \rangle \leq \lambda \gamma(x) \}$ 
%and the convex conjugate of a function (Fenchel-Legendre dual) $f$ by $f^\ast$. 
%The infimal convolution of convex functions $f$ and $g$ is defined by
%$f\square g(x) \coloneqq \inf_y\{ f(x-y) + g(y) \}$. 

The archetypal context for the above notions is as follows. 
If $B$ is the unit ball with respect to some norm, i.e., $B=\{x : \norm{x}\leq 1\}$, 
then $\gamma_B=\norm{\argdot}$, while $\gamma_{B^\circ}=\delta_B^\ast$ 
coincide with the dual norm $\norm{\argdot}^\circ$. 
(In particular, $B^\circ$ is the unit ball with respect to $\norm{\argdot}^\circ$.) 

There are all sorts of elementary rules \cite{Rockafellar1996} that we will use. 
%Let $f$ and $g$ be lower semi-continuous convex functions and 
Let $C_1$, $C_2$ and $B$, $B_1$, $B_2$ be closed convex sets where $B, B_1,B_2$ contain the origin. 
Then
%$f^{\ast\ast} = f$, 
%$\gamma_B^{\circ\circ} = \gamma_B$, 
$\delta_{C_1}^\ast + \delta_{C_2}^\ast = \delta_{C_1+C_2}^\ast$, 
$B^{\circ\circ}= B$ (the bipolar theorem), 
%$(f\square g)^\ast = f^\ast + g^\ast$, 
and $\delta^\ast_B = \gamma_{B^\circ}$. % and $(\delta_C)^\ast = \delta_C^\ast$. 
 Next, if $B=B_1\cap B_2$,  then $\gamma_B = \max \{\gamma_{B_1},\gamma_{B_1}\}$ 
and $B^\circ= (B_1\cap B_2)^\circ = \conv(B_1^\circ \cup B_2^\circ)$. 

\begin{proof}[Proof of Lemma~\ref{lem:dual_norms}] 
  First, by rescaling the norms we can restrict our attention to the case when 
  all $\lambda_i$ and all $\mu_i$ are equal to $1$. 
  %For the sake of better readability we focus on the case $k=2$. 
  %The general case follows analogously. 
  Next, to reduce the clutter we will focus on the case $k=2$ 
  (the general case follows {\em mutatis mutandis}, or by induction) 
  and denote $B_i\coloneqq \{x: \ \norm{x}_{(i)}\leq 1\}$, $i=1,2$. 
  The argument actually works for general gauges, 
  in particular without the symmetry assumption $B_i = -B_i$.
  
  %In order to establish the statement of the lemma we will use the fact that 
  %the norms are the gauge functions of their unit balls, i.e., 
  %$\norm{\argdot}_{(i)} = \gamma_{B_i}$ with $B_i\coloneqq \{x: \ \norm{x}_{(i)}\leq 1\}$. 
  %For the proofs, however, let $B_i$ be closed convex sets containing the origin 
  %(they do not need to satisfy $B_i = -B_i$, as is the case for norms). 

  In order to establish the relevant case of \eqref{eq:polar_of_max} we start 
  with the identity
  \begin{equation}
   \max\{\norm{\argdot}_{(1)}, \norm{\argdot}_{(2)} \} =   
   \max\{ \gamma_{B_1}, \gamma_{B_2} \} =   \gamma_{B_1 \cap B_2}.
  \end{equation}
  Accordingly, the unit ball in the dual norm is  
  $(B_1 \cap B_2)^\circ = \conv(B_1^\circ \cup B_2^\circ)$. 
  In other words, the dual norm of $y$ is at most $1$ iff $y = (1-t)y_1+ty_2$
  for some $t\in [0,1]$ and  $y_i \in B_i^\circ$, $i=1,2$. Denoting $x_1 = (1-t)y_1$ 
  and $x_2=ty_2$, this translates to  $y = x_1+x_2$  and 
  $\norm{x_1}_{(1)}^\circ + \norm{x_2}_{(2)}^\circ \leq 1$. 
  So  the left hand side of \eqref{eq:polar_of_max} is at most $1$ iff 
   the right hand side is, and the general case follows by homogeneity of the norm. 
   
  The case of \eqref{eq:polar_of_sum} is even simpler. We have 
  \begin{equation}
  \norm{\argdot}_{(1)} + \norm{\argdot}_{(2)}=\gamma_{B_1}+ \gamma_{B_2}
  =\delta_{B_1^\circ}^\ast+ \delta_{B_2^\circ}^\ast 
  = \delta_{B_1^\circ+B_2^\circ}^\ast= \gamma_{(B_1^\circ+B_2^\circ)^\circ}. 
  \end{equation}
    While the ``polar body $(K+L)^\circ$ of a [Minkowski] 
  sum of convex bodies has no plausible interpretation in terms of 
  $K^\circ$, $L^\circ$'' \cite{stackexchange:polar}, 
  the bipolar theorem tells us that the unit ball of the dual norm in question is 
  $B_1^\circ+B_2^\circ$. In other words, the dual norm of $y$ is at most $1$ iff 
   $y = x_1+x_2$ for some $x_i \in B_i^\circ$, $i=1,2$, i.e., 
   verifying $\max\{\norm{x_1}_{(1)}^\circ, \norm{x_2}_{(2)}^\circ\} \leq 1$, 
   and we conclude as before. 
  % \begin{split}
    %  (\gamma_{B_1} + \gamma_{B_2})^\circ 
     % &=
     %(\delta^\ast_{B_1^\circ} + \delta^\ast_{B_2^\circ})^\circ    
     % = 
     % (\delta^\ast_{B_1^\circ +B_2^\circ})^\circ  \\
     % &=
     % \bigl(\gamma_{(B_1^\circ+B_2^\circ)^\circ}\bigr)^\circ
     % =
     % \gamma_{B_1^\circ+B_2^\circ}\, ,
   % \end{split}
\end{proof}

\subsubsection{Gaussian distance as SDP}
\label{subsec:generic:gaussiandist:SDP}
We were aiming to estimate the statistical dimension \eqref{eq:stad_dim_as_dist} by sampling over SDP outcomes over Gaussian vectors $g$.  
For any vector $g$ the distance to the cone generated by the subdifferential of a norm $f$ is
\begin{align}
\fNorm{g - \cone \partial f(x_0)}
&=
\min\left\{ \fNorm{g - \tau x}: \ 
  \tau\geq 0, \ 
  \langle x, x_0 \rangle = f(x_0) , \
  f^\ast(x) \leq 1
  \right\}
  \\
  &=
\min\left\{ \fNorm{g - x}: \ 
  \tau\geq 0, \ 
  \langle x, x_0 \rangle = \tau f(x_0) , \
  f^\ast(x) \leq \tau
  \right\}
\end{align}
For $f = \regNormSum{\argdot}$ we use its polar \eqref{eq:polar_of_sum} to obtain
\begin{align}
&\phantom{=.} \nonumber
\fNorm{g - \cone \partial \regNormSum{\argdot}(x_0)}
\\
&=
\min_{\tau,x}\left\{ \fNorm{g - \tau x}: 
  \tau\geq 0, \ 
  \langle x, x_0 \rangle = \regNormSum{x_0} , \
  \inf_{\{x_i\}_{i=1}^{[k]}}\left\{ \max_{i\in [k]} \frac{\norm{x_i}_{(i)}^\circ}{\lambda_i}: \ x = \sum_{i=1}^k x_i \right\} \leq 1
  \right\} \label{eq:dist_inf_inf}
  \\
&=
\min_{\tau,x,\{x_i\}_{i=1}^{[k]}}\left\{ \fNorm{g - \tau x}: \ 
  \tau \geq 0, \ 
  \langle x, x_0 \rangle = \regNormSum{x_0} , \
  \frac{\norm{x_i}_{(i)}^\circ}{\lambda_i} \leq 1 , \
  x = \sum_{i=1}^k x_i
  \right\} ,
  \label{eq:dist_inf}
\end{align}
where one needs to note that an optimal feasible point of \eqref{eq:dist_inf_inf} also yields an optimal feasible point of \eqref{eq:dist_inf} and vice versa. 
But this implies that 
\begin{equation}
  \begin{split}
    &\fNorm{g - \cone \partial \regNormSum{\argdot}(x_0)}\\
    &=
    \min_{\tau,\{x_i\}_{i=1}^{[k]}}
    \left\{
      \fNorm{g - \sum_i x_i}: \ 
      \sum_i \langle x_i, x_0 \rangle = \tau \regNormSum{x_0} , \
      \norm{x_i}_{(i)}^\circ \leq \tau \lambda_i, \
      \tau \geq 0
    \right\} \, .
    \label{eq:generic:Gaussian_dist_sum_SDP}
  \end{split}
\end{equation}

For the maximum of norms regularizer we again choose the optimal weights \eqref{eq:optimal_weights} to ensure that all norms
are active, i.e.,
\begin{equation}
  \mu^\ast_i \norm{x_0}_{(i)} = \regNormMax[\mu^\ast\!]{x_0} \, .
\end{equation}
Then we use the subdifferential expression \eqref{eq:cone_subdiff:max} for a point-wise maximum of functions to obtain
\begin{equation}
  \begin{split}
    &\phantom{={}}\fNorm{g - \cone \partial \regNormMax[\mu^\ast\!]{\argdot}(x_0)}
    =
    \fNorm{g - \sum_{i=1}^k \cone \partial \norm{\argdot}_{(i)}(x_0)}
    \\
    &=
    \min_{\{\tau\}_{i=1}^{[k]},\{x_i\}_{i=1}^{[k]}}
    \left\{
      \fNorm{g - \sum_i x_i}: \ 
      \langle x_i, x_0 \rangle = \tau_i \regNormMax[\mu^\ast\!]{x_0} , \
      \norm{x_i}_{(i)}^\circ \leq \tau_i \mu^\ast_i, \
      \tau_i \geq 0
    \right\} \, .
    \label{eq:generic:Gaussian_dist_max_SDP}
  \end{split}
\end{equation}
In the case that $\norm{x_i}_{(i)}^\circ$ are $\ell_\infty$-norms or spectral norms these programs can we written as SDPs and be solved by standard SDP solvers.

%%% =================================================
\section{Simultaneously sparse and low-rank matrices}
%%% =================================================
\label{sec:sandl}
A class of structured signals that is important in many applications
are matrices which are simultaneously sparse and of low rank.  Such
matrices occur in sparse phase retrieval%
\footnote{See the cited works for further references on the classical
  non-sparse phase retrieval problem} \cite{Oymak2015,Li2013,
  Jaganathan2013}, dictionary learning and sparse encoding
\cite{Rubinstein2010}, sparse matrix approximation \cite{Smola2000},
sparse PCA \cite{Johnstone2009}, bilinear compressed sensing problems
like sparse blind deconvolution
\cite{Lee2013,Lee2015,Flinth2016,Lee:sampta15,Aghasi2017,Geppert2018}
or, more general, sparse self-calibration \cite{Ling2015}. For example,
upcoming challenges in communication engineering and signal processing
require efficient algorithms for such problems with theoretical
guarantees
\cite{Jung2014,WunBocStr15,Wunder2015:sparse5G,RotKliWun17}.  It is
also well-known that recovery problems related to simultaneous
structures like sparsity and low-rankness are at optimal rate often as
hard as the classical planted/ hidden clique problems, see for
example \cite{Berthet2013} for further details and references.

%%% ---------------------------------------
\subsection{Setting}
%%% ---------------------------------------
We consider the
$d=n_1\cdot n_2$--dimensional vector space $V=\KK^{n_1 \times n_2}$ of
$n_1\times n_2$-matrices with components in the field $\KK$ being either $\RR$ or $\CC$. 
The space $V$ is equipped with the Hilbert-Schmidt inner product 
defined by
% and induced norm 
\begin{equation}
  \langle X,Y\rangle\coloneqq \Tr (X^\ad Y) \, .
  % \quad\text{and}\quad
  % \TwoNorm{X} = \sqrt{\langle X,X\rangle} \, .
\end{equation}
Our core problem is to recover structured matrices from $m$ linear
measurements of the form $y=\A(X)$ with a linear map
$\A : \KK^{n_1 \times n_2} \to \KK^m$. Hence, a single measurement is
\begin{equation}
  y_i=\A(X)_i=\langle A_i,X\rangle \, .
  \label{eq:sandl:meas}
\end{equation}
It is clear that without further a-priori assumptions on the unknown
matrix $X$ we need $m\geq d=n_1\cdot n_2$ measurements to be able to successfully reconstruct $X$. 

As additional structure, we consider subsets of $V$ containing simultaneously low-rank and sparse matrices. 
However, there are
different ways of combining low-rankness and sparsity. For example one
could take matrices of rank $r$ with different column and row
sparsity, i.e., meaning that there are only a small number of non-zero
rows and columns. 
Here, we consider a set having even more structure which is motivated
by atomic
decomposition \cite{Chandrasekaran2010}.  Recall, that
the rank of a matrix $X$ can be defined as its ``shortest
description'' as
\begin{equation}
  \rank(X)
  \coloneqq
  \min\{r\,:\,X=\sum_{i=1}^r{x_iy_i^\ad}\,,\, x_i\in\mathbb{R}^{n_1}\,,\,y_i\in\mathbb{R}^{n_2}\} \, .
  \label{eq:std:rank}
\end{equation}
This characterization giving rise to the nuclear norm $\OneNorm{\argdot}$ as the corresponding atomic norm, see \cite{Chandrasekaran2010} for a nice introduction
to inverse problems from this viewpoint.  
Restricting the sets of feasible vectors $\{x_i\}$ and $\{y_i\}$ yields
alternative formulations of rank. In the case of sparsity, one can
formally ask for a description in terms of $(s_1,s_2)$-sparse atoms:
\begin{equation}
  \rank_{s_1,s_2}(X)
  \coloneqq
  \min\{r\,:\,X=\sum_{i=1}^r{x_iy_i^\ad}\,,\,\norm{x_i}_{\ell_0}\leq
  s_1,\norm{y_i}_{\ell_0}\leq s_2\} \, ,
\end{equation}
where $\norm{x}_{\ell_0}$ denotes the number of non-zero elements of a vector $x$. 
The idea of the corresponding atomic norm \cite{Chandrasekaran2010} has been worked for the
sparse setting by Richard et al.~\cite{RicOboVer14}.
Note that, compared to \eqref{eq:std:rank},
we do not require that $\{x_i\}_{i=1}^r$ and $\{y_i\}_{i=1}^r$ are orthogonal.
  
We say that a matrix $X\in \KK^{n_1 \times n_2}$ is simultaneously
\emph{$(s_1,s_2)$-sparse and of rank $r$} if it is in the set
\begin{equation}
  \M^{n_1\times n_2}_{r,s_1,s_2} \coloneqq
  \Bigl\{ \sum_{i=1}^r x_i y_i^\ad: \ 
  \norm{x_i}_0 \leq s_1, \ \norm{y_i}_0 \leq s_2 \Bigr\} .
  \label{eq:sandl:Mdef}
\end{equation}
Our model differs to the joint-sparse
  setting as used in \cite{Lee2013}, since the vectors
  $\{x_i\}_{i=1}^r$ and $\{y_i\}_{i=1}^r$ may have individual sparse
  supports and need not to be orthogonal. Definition
  \eqref{eq:sandl:Mdef} fits more into the sparse model considered
  also in \cite{Fornasier2018}.  

By $e_i$ we denote $i$-th canonical basis vector and define
row-support $\supp_1(X)$ and column-support $\supp_2(X)$ of a matrix
$X$ as
\begin{equation}
  \supp_1(X)\coloneqq \{i\,:\,\TwoNorm{X^\ad e_i}>0\} \, ,\quad
  \supp_2(X)=\supp_1(X^\ad ) \, .
\end{equation}
Obviously, the matrices $\M^{n_1\times n_2}_{r,s_1,s_2}$ are then
at most $k_1$-column-sparse and $k_2$-row-sparse (sometimes also called as
joint-sparse), i.e.,
\begin{equation}
  |\supp_1(X)|\leq rs_1=:k_1\quad\text{and}\quad |\supp_2(X)|\leq rs_2=:k_2
\end{equation}
\emph{but not} strictly
$k_1k_2=r^2s_1s_2$--sparse\footnote{Assuming that $k_1\coloneqq rs_1\leq n_1$ and
  $k_2\coloneqq rs_2\leq n_2$.}.
Since we have sums of $r$ different
$(s_1,s_2)$-sparse matrices and there are at most $r(s_1\cdot s_2)$
non-zero components. 
Note that a joint
  $(s_1,s_2)$-row and column sparse matrix has only $s_1\cdot s_2$
  non-zero  entries. 
Hence, considering this only from the perspective
of sparse vectors, we expect that up to logarithmic terms recovery can be achieved from
$m\propto r(s_1\cdot s_2)$ measurements. On the other hand, solely from a
viewpoint of low-rankness, $m\propto r(n_1+n_2)$ measurements also
determine an $n_1\times n_2$-matrix of rank $r$. 
Combining both gives therefore  $m\propto r \min(s_1s_2,n_1+n_2)$.

On the other hand, these matrices are determined by at most
$r(s_1+s_2)$ non-zero numbers. Optimistically, we therefore hope that
already $m\lesssim r(s_1+s_2)$ sufficiently diverse observations are
enough to infer on $X$ which is substantially smaller and scales additive in
$s_1$ and $s_2$.  In the next part we will discuss that this intuitive
parameter counting argument is indeed true in the low-rank regime $r\lesssim \log\max(\frac{n_1}{rs_1},\frac{n_2}{rs_2})$.
A generic low-dimensional
embedding of this simultaneously sparse and low-rank structure into
$\KK^m$ for $m\propto r(s_1+s_2)$ via a Gaussian map $\A$ is
stably injective.

\subsection{About RIP for sparse and low-rank matrices}
Intuitively,
$(s_1,s_2)$--sparse rank-$r$ matrices
can be stably identified from $y$ if $\M^{n_1\times n_2}_{r,s_1,s_2}$
is almost-isometrically mapped into $\KK^m$, i.e., distances
between different matrices are preserved up to small error during the
measurements process.  Note that we have the inclusion
\begin{equation}
  \M^{n_1\times n_2}_{r,s_1,s_2} - \M^{n_1\times n_2}_{r,s_1,s_2}
  \subset \M^{n_1\times n_2}_{2r,2s_1,2s_2} \, .
\end{equation}
Since $\A$ is linear it is therefore sufficient to ensure that norms
$\|\A(X)\|\sim \|X\|$ are preserved for $X\in\M^{n_1\times n_2}_{2r,2s_1,2s_2}$.  We
say that a map $\A : \KK^{n_1 \times n_2} \to \KK^m$ satisfies
$(r,s_1,s_2)$-RIP with constant $\delta$ if 
\begin{equation}
  \sup_{\substack{X \in \mathcal{S}}} 
  \TwoNorm{\A^\ad\A(X)-X} \leq \delta%(\A) 
  \label{eq:sandl:rip:sup}
\end{equation}
holds, 
where the supremum is taken over all
$\mathcal{S}=\{X\in\M^{n_1\times n_2}_{r,s_1,s_2}\,:\,\|X\|_2=1\}$ and
$\A^\ad$ denotes the adjoint of $\A$ (defined in the canonical
way with respect to the Hilbert-Schmidt inner product). 
By
$\delta(\A)$ we always denote the smallest value $\delta$ such that above condition holds. 
Equivalently, we have
\begin{equation}
  (1-\delta(\A)) \TwoNorm{X}^2 \leq \lTwoNorm{\A(X)}^2 \leq (1+\delta(\A)) \TwoNorm{X}^2
  \quad \forall X \in \M^{n_1\times n_2}_{r,s_1,s_2} \, .
  \label{eq:sandl:RIP}
\end{equation}
A generic result, based on the ideas of \cite{Baraniuk2008c},
\cite{Candes09:LMR} and \cite{Recht_2010_Guaranteed}, has been
presented already in \cite{Jung2014}. It shows that a random linear
map $\A$ which concentrates uniformly yields the RIP property
\eqref{eq:sandl:RIP} with overwhelming probability (exponential small
outage) once the number of measurements are in the order of the metric
entropy measuring the complexity of the structured set
$\mathcal{S}$. In the case of simultaneous low rank and sparse
matrices this quantity scales (up to logarithmic terms) additively in
the sparsity, as desired.
A version for rank-$r$ matrices where
  $\{x_i\}_{i=1}^r$ and $\{y_i\}_{i=1}^r$ in \eqref{eq:sandl:Mdef} are
  orthonormal sets having joint-sparse supports has been sketched
  already in \cite[Theorem III.7]{Lee2013}, i.e., $\A$ with iid
  Gaussian entries acts almost isometrically in this case for
  $m\gtrsim \delta^{-2}r(s_1+s_2)$ with probability at least $1-\exp(-c_2\delta^2m)$,
$c_2$ being an absolute constant.
Another RIP perspective has been considered in
  \cite{Fornasier2018} where the supremum in \eqref{eq:sandl:rip:sup}
  is taken over effectively sparse ($\{x_i\}_{i=1}^r$ and
  $\{y_i\}_{i=1}^r$ in \eqref{eq:sandl:Mdef} are now only
  well--approximated by sparse vectors) rank-$r$ matrices $X$ with
  $(\sum_{i=1}^r\|x_i\|^2_{\ell_2}\|y_i\|^2_{\ell_2})^{1/2}\leq\Gamma$
  (implying $\TwoNorm{X}\leq\sqrt{r}\cdot\Gamma$). More precisely, for
  $m\gtrsim \Delta^{-2} r(s_1+s_2)$ with $\Delta\in(0,1)$ an operator
  $\A$ with iid.\ centered sub-Gaussian entries acts
  almost-isometrically with probability at least
  $1-2\exp(-C'\Delta m)$ at $\delta=\Delta\Gamma^2 r$ and $C'$ is an
  absolute constant. Note that this probability is slightly weaker.

We provide now a condensed version of the generic statement in \cite{Jung2014}. More precisely,
RIP \eqref{eq:sandl:RIP} is satisfied with high probability for a random linear map $\A$
that has uniformly sub-Gaussian marginals:
\begin{theorem}[RIP for sub-Gaussian measurements]
\label{thm:RIP_low_rank_and_sparse}
  Let $\A:\mathbb{R}^{n_1\times n_2}\rightarrow\mathbb{R}^m$ be a
  random linear map which for given $c>0$ and $0<\delta<1$ fulfills
  $\PP[|\lTwoNorm{\A(X)}-\|X\|_2|\leq\frac{\delta}{2}\|X\|_2]
  \geq
  1-\e^{-c\delta^2m}$ uniformly for all
  $X\in\RR^{n_1\times n_2}$.  If
  \begin{equation}
    m\geq c''\delta^{-2}r(s_1+s_2)\left(1+\log
      \max\{\frac{n_1}{rs_1},\frac{n_2}{rs_2}\}+r\log(9\cdot 4/\delta)\right)
  \end{equation}
  then $\A$ satisfies $(r,s_1,s_2)$-RIP with constant $\delta(\A)\leq \delta$
  with probability $\geq 1-\e^{-\tilde{c}\delta^2 m}$.
  Here, $\tilde c>0$ is an absolute constant and $c''$ is a constant
  depending only on $\delta$ and $\tilde c$. 
\end{theorem}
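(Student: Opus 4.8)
The plan is to prove the concentration form \eqref{eq:sandl:RIP} of the RIP by the standard ``concentration plus covering'' template (Baraniuk et al.\ for the sparse case, Recht et al.\ and Cand\`es--Plan for the low-rank case), and to obtain the stated measurement count from an explicit metric-entropy bound for the combined structure $\mathcal{S}=\{X\in\M^{n_1\times n_2}_{r,s_1,s_2}:\TwoNorm{X}=1\}$. First I would reduce the RIP to a uniform statement on a finite $\epsilon$-net. Fixing $\epsilon=\delta/4$, I would build an $\epsilon$-net $\mathcal{N}\subset\mathcal{S}$ in Frobenius distance, apply the concentration hypothesis to each of the $N=|\mathcal{N}|$ net points, and take a union bound: with probability at least $1-N\e^{-c\delta^2 m}$ every net point $X_j$ satisfies $(1-\delta/2)\TwoNorm{X_j}\le\lTwoNorm{\A(X_j)}\le(1+\delta/2)\TwoNorm{X_j}$.

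The core estimate is the covering number of $\mathcal{S}$. Any $X=\sum_{i=1}^r x_iy_i^\ad\in\M^{n_1\times n_2}_{r,s_1,s_2}$ has row support $|\supp_1(X)|\le rs_1$ and column support $|\supp_2(X)|\le rs_2$, and restricted to any such support it is a matrix of rank at most $r$. I would therefore cover $\mathcal{S}$ in two stages: enumerate the $\binom{n_1}{rs_1}\binom{n_2}{rs_2}$ possible supports, and then, on each fixed $rs_1\times rs_2$ support, cover the unit-Frobenius rank-$r$ matrices using the Cand\`es--Plan bound $(9/\epsilon)^{r(rs_1+rs_2+1)}$. Taking logarithms and using $\binom{n}{k}\le(\e n/k)^k$ gives
\begin{equation}
  \log N\lesssim r(s_1+s_2)\Bigl(1+\log\max\{\tfrac{n_1}{rs_1},\tfrac{n_2}{rs_2}\}\Bigr)+r^2(s_1+s_2)\log(36/\delta),
\end{equation}
which is exactly $r(s_1+s_2)$ times the bracket in the theorem. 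This is the step where the desired \emph{additive} scaling in $s_1$ and $s_2$ appears, and it is the technical heart of the argument.

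Finally I would pass from the net to all of $\mathcal{S}$ by the self-bounding (approximation) argument. Writing $A$ for the smallest constant with $\lTwoNorm{\A(X)}\le(1+A)\TwoNorm{X}$ uniformly over the normalized doubled class, and choosing for each $X$ its nearest net point $X_j$, the difference $X-X_j$ lies in $\M^{n_1\times n_2}_{2r,2s_1,2s_2}$ and has Frobenius norm at most $\epsilon$; bounding $\lTwoNorm{\A(X-X_j)}$ by $(1+A)\epsilon$ yields $A\le(\delta/2+\epsilon)/(1-\epsilon)\le\delta$ for $\epsilon=\delta/4$, and symmetrically for the lower bound. Since the difference set only doubles the parameters, its metric entropy is of the same order, so the union bound is unaffected up to constants. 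Combining with $m\gtrsim\delta^{-2}\log N$ turns the failure probability $N\e^{-c\delta^2 m}$ into $\e^{-\tilde{c}\delta^2 m}$, proving the claim. The main obstacle is controlling the covering number so that the rank-$r$ contribution produces only the $r\log(36/\delta)$ factor rather than a genuinely multiplicative blow-up in the sparsity; everything else is the routine net machinery, and the equivalence with the operator form \eqref{eq:sandl:rip:sup} follows by the polarization already asserted in the text.
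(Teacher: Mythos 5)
Your strategy is, in outline, exactly the paper's own proof: the same two-stage covering of $\mathcal{S}$ (enumeration of the $\binom{n_1}{rs_1}\binom{n_2}{rs_2}$ support blocks, then a $(9/\epsilon)$-type net for rank-$r$ matrices on each block, quoted from Cand\`es et al.), the same resulting entropy bound --- your displayed estimate for $\log N$ coincides with \eqref{eq:sandl:Heps} --- the same union bound over net points, and the same self-bounding extension with $\epsilon=\delta/4$ turning $m\gtrsim\delta^{-2}\log N$ into the claimed failure probability. The one place where you deviate is the extension step, and there your argument as written is circular.

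You define $A$ as the RIP-type constant over the normalized \emph{doubled} class $\M^{n_1\times n_2}_{2r,2s_1,2s_2}$, while your net $\mathcal{N}$ covers only $\mathcal{S}\subset\M^{n_1\times n_2}_{r,s_1,s_2}$. The chain of inequalities then establishes, for $X\in\mathcal{S}$, only that $|\lTwoNorm{\A(X)}-1|\leq \delta/2+(1+A)\epsilon$, i.e., it bounds the supremum over $\mathcal{S}$ in terms of the \emph{larger} constant $A$; the self-bounding inequality $A\leq\delta/2+(1+A)\epsilon$ is never derived, so you cannot conclude $A\leq\delta$. Nor can this be repaired by your remark that the doubled class has entropy of the same order: if you instead net the doubled class, the differences $X-X_j$ land in $\M^{n_1\times n_2}_{4r,4s_1,4s_2}$, and the mismatch between the class you net and the class containing the differences recurs --- enlarging the class never closes the loop. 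The missing idea is the device the paper inserts at precisely this point: since $X$ and its net point $R=R(X)$ are supported on the \emph{same} $rs_1\times rs_2$ block, one takes a singular value decomposition of $X-R$ and splits it as $X-R=X_1+X_2$ with $\langle X_1,X_2\rangle=0$, where each $X_i$ has rank at most $r$ and is supported on that same block. Each normalized piece then lies back in the class over which $A$ is defined, and orthogonality gives
\begin{equation}
  \lTwoNorm{\A(X-R)}
  \leq (1+A)\left(\TwoNorm{X_1}+\TwoNorm{X_2}\right)
  \leq \sqrt{2}\,(1+A)\,\TwoNorm{X-R}
  \leq \sqrt{2}\,(1+A)\,\epsilon \, ,
\end{equation}
which closes the self-bounding inequality within the original class (the extra $\sqrt{2}$ only shrinks the admissible $\epsilon$ by a constant). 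With this splitting inserted in place of your appeal to the doubled class, your proposal becomes the paper's proof.
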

Clearly, standard Gaussian measurement maps $\A$ fulfill the
concentration assumption in the theorem. More general sub-Gaussian maps
are included as well, see here also the discussion in \cite{Jung2014}.
The proof steps are essentially well-known.  For the sake of
self-containedness we review the steps having our application in
mind.
\begin{proof}
  First, we construct a special $\epsilon$-net
  $\mathcal{R}\subset\mathcal{S}$ for
  $\mathcal{S}=\{X\in\M^{n_1\times
    n_2}_{r,s_1,s_2}\,:\,\|X\|_2=1\}$. By this we mean a set such that
  for each $X\in\mathcal{S}$ we have some $R=R(X)\in\mathcal{R}$ such
  that $\|X-R\|_2\leq\epsilon$.
  Since the matrices $X\in\M^{n_1\times n_2}_{r,s_1,s_2}$ are
  $k_1\coloneqq rs_1$ row-sparse and $k_2\coloneqq rs_2$ column-sparse there are
  \begin{equation}
    L=\binom{n_1}{k_1}\binom{n_2}{k_2}\leq
    \left(\frac{\e n_1}{k_1}\right)^{k_1}
    \left(\frac{\e n_2}{k_2}\right)^{k_2}
    \leq\left(\e\max\{\frac{n_1}{k_1},\frac{n_2}{k_2}\}\right)^{k_1+k_2}
  \end{equation}
  different combinations for the row support $T_1\subset[n_1]$ and
  column support $T_2\subset[n_2]$ with $|T_1|=k_1$ and $|T_2|=k_2$.
  For each of these canonical matrix subspaces supported on
  $T_1\times T_2$, we consider matrices of rank at most $r$.  From
  \cite[Lemma 3]{Candes09:LMR} it is known that there exists an
  $\epsilon$-net for $k_1\times k_2$ matrices of rank $r$ of
  cardinality $(9/\epsilon)^{(k_1+k_2)r}$ giving therefore
  \begin{equation}
    \log|\mathcal{R}|\leq (k_1+k_2)\left(1+\log \max\{\frac{n_1}{k_1},\frac{n_2}{k_2}\}+r\log(9/\epsilon)\right) \, .
    \label{eq:sandl:Heps}
  \end{equation}
  In other words, up to logarithmic factors, this quantity also
  reflects the intuitive parameter counting.  The net construction
  also ensures that for each $X\in \mc S$ and close by net point $R=R(X)$ we have $|\supp_1(X-R)|\leq k_1$ and
  $|\supp_2(X-R)|\leq k_2$.  However, note that in non-trivial cases $\rank(R-X)=2r$
  meaning that
  $(R-X)/\|R-X\|_2\notin\mathcal{S}$.  But, using a singular value
  decomposition one can find $R-X=X_1+X_2$ with
  $\langle X_1,X_2\rangle=0$ for some $X_1/\|X_1\|_2\in\mathcal{S}$ and
  $X_2/\|X_2\|_2\in\mathcal{S}$.  To show RIP, we define the constant
  \begin{equation}
    A\coloneqq \max_{X\in\mathcal{S}}|\lTwoNorm{\A(X)}-1| \, .
    \label{eq:sandl:defA}
  \end{equation}
  For some $X\in\mathcal{S}$ and close by net point
  $R=R(X)\in\mathcal{R}$
  and let us consider $\delta$ such that 
  $\lTwoNorm{\A(R)}-1|\leq\delta/2$. 
  Then, 
  \begin{equation}
    \begin{split}
      |\|&\A(X)\|_{\ell_2}-1|
      \leq |\lTwoNorm{\A(X)}-\lTwoNorm{\A(R)}|+|\lTwoNorm{\A(R)}-1|\\
      &\leq\|\A(X-R)\|_{\ell_2}+\frac{\delta}{2}
      \leq\|\A(X_1)\|_{\ell_2}+\|\A(X_2)\|_{\ell_2}+\frac{\delta}{2}\\
      &\leq (1+A)(\|X_1\|_2+\|X_2\|_2)+\frac{\delta}{2}
      = (1+A)\|X_1+X_2\|_2+\frac{\delta}{2}\\
      &=(1+A)\|X-R\|_2+\frac{\delta}{2}
      \leq(1+A)\epsilon+\frac{\delta}{2}\, .
    \end{split}
  \end{equation}
  Now we choose $\tilde{X}\in\mathcal{S}$ satisfying
  $A=|\|\A(\tilde{X})\|_{\ell_2}-1|$ ($\mathcal{S}$ in
  \eqref{eq:sandl:defA} is compact).
  For such an $\tilde{X}$ we also have
  \begin{equation}
    \begin{split}
      A=|\|&\A(\tilde{X})\|_{\ell_2}-1|  
      \leq(1+A)\epsilon+\frac{\delta}{2} \, .
    \end{split}
  \end{equation}
  Requiring that the right hand side is bounded by $\delta$ and solving this inequality for $A$ (assuming $\epsilon<1$) we find that indeed
  $A\leq\frac{\epsilon+\delta/2}{1-\epsilon}
  \leq \delta$ whenever $\epsilon\leq\delta/(2+2\delta)$. 
  In particular we can choose $\delta<1$ and we set $\epsilon=\delta/4$. 
  Therefore, \eqref{eq:sandl:Heps} yields
  \begin{equation}
    \begin{split}
      \log |\mathcal{R}|\leq (k_1+k_2)\left(1+\log
        \max\{\frac{n_1}{k_1},\frac{n_2}{k_2}\}+r\log(9\cdot 4/\delta)\right) .
    \end{split}
  \end{equation}
  Using the assumption
  $\PP[|\lTwoNorm{\A(X)}-\TwoNorm{X}|\leq\delta/2\TwoNorm{X}]\geq
  1-\e^{-c\delta^2m}$ 
  and the union bound
  $\PP[\forall R\in\R\,:\,|\lTwoNorm{\A(R)}-1|\leq\frac{\delta}{2}]
  \leq
    1-\e^{-(c\delta^2m-\log|\mathcal{R}|)}$
  we obtain RIP with probability at least
  \begin{equation}
    \PP[\forall X\in\M^{n_1\times n_2}_{r,s_1,s_2}\,:\,|\lTwoNorm{\A(X)}^2-\|X\|^2_2|\leq
    \delta\|X\|^2_2]\geq
    1-\e^{-(c\delta^2m-\log|\mathcal{R}|)} \, .
  \end{equation}
  Thus, if we want to have RIP satisfied with probability $\geq 1-\e^{\tilde{c}\delta^2 m}$
  for a given $\tilde{c}>0$, i.e., 
  \begin{equation}
    c\delta^2m-\log|\mathcal{R}|=\delta^2m(c-\frac{\delta^{-2}\log|\mathcal{R}|}{m})\overset{!}{\geq}
    \tilde{c}\delta^2 m \, ,
  \end{equation}
  it is sufficient to impose that $m\geq c''\delta^{-2}\log|R|$ for a some $c''>0$. 
\qed\end{proof}
In essence the theorem shows that the intrinsic geometry of sparse and
low-rank matrices is preserved in low-dimensional embeddings when
choosing the dimension above a threshold.  
It states that, in the low-rank regime
$r\lesssim \log\max(\frac{n_1}{rs_1},\frac{n_2}{rs_2})$, for fixed
$\delta$ this threshold the RIP to hold scales indeed as $r(s_1+s_2)$.
  % We believe that 
  This additional low-rank
  restriction is an technical artifact due to suboptimal combining of
  covering number estimates.  Indeed, upon revising the manuscript we
  found that the statement above may be improved by utilizing
  \cite[Lemma 4.2]{Fornasier2018} instead of \eqref{eq:sandl:Heps} which yields a scaling of $r(s_1+s_2)$ without restrictions on $r$ and with
  probability of at least $\geq 1-\exp(-\tilde{c}\delta^2m)$.  From
  the proof it follows also easily that for joint-sparse matrices
  where each of the sets $\{x_i\}_{i=1}^r$ and $\{y_i\}_{i=1}^r$ in
  \eqref{eq:sandl:Mdef} have also joint support as in \cite{Lee2013},
  a sampling rate $m \propto r(s_1+s_2)$ is sufficient anyway for all
  ranks $r$. 
  % $r\leq\min(n_1,n_2)$.  
%

Intuitively, one should
therefore be able reconstruct an unknown $s_1\times s_2$--sparse
matrix of rank $r$ from $m\propto r(s_1+s_2)$ generic random measurements.
This would indeed reflect the intuitive parameter counting argument.
Unfortunately, as will be discussed next, so far, no algorithm is
known that can achieve such a reconstruction for generic matrices.

\subsection{Some more details on related work}
It is well-known that sufficiently small RIP constants $\delta(\A)$
imply successful convex recovery for sparse vectors \cite{Candes2005b} and
low-rank matrices \cite{Gro11,Recht_2010_Guaranteed}, separately.
An intuitive starting point for convex recovery of the elements from
$\M^{n_1\times n_2}_{r,s_1,s_2}$ would therefore be the program:
\begin{equation}
  \min \mu_1\|X\|_{1}+\mu_{\ell_1}\|X\|_{\ell_1}\quad\text{subject to } \A(X) = y
  \label{eq:sandl:sum}
\end{equation}
which uses a weighted sum as a regularizer, where $y=\A(X_0)$ are
noiseless measurements of the signal $X_0$. Related approaches
have been used also for applications including sparse phase retrieval
and sparse blind deconvolution. Obviously, then the corresponding
measurement map is different and depends on the particular
application.  The practical relevance of this convex formulation is
that it always allows to use generic solvers and there is a rich
theory available to analyze the performance for certain types
measurement maps in terms of norms of the recovery error
$X-X_0$. Intuitively, one might think that this amounts only to
characterize the probability when the matrix $A$ is robustly injective
on feasible differences $X-X_0$, i.e., fulfills RIP or similar
conditions. 
However, this is not enough as observed and worked out in
\cite{Oymak2015,Jalali2016}.  One of the famous no-go results in these
works is that no extra reduction in the scaling of the sampling rate
can be expected as compared to the best of recovering with respect to
either the low-rank structure ($\mu_{\ell_1}=0$) or sparsity
($\mu_{1}=0$), separately.  
In other words, for any pair $(\mu_{\ell_1}, \mu_{1})$ the required sampling rate can not be better than the minimum of the one for $\mu_{\ell_1}=0$ and $\mu_{\ell_1}=0$. 
A difficult point in this discussion is what will happen if the program
is optimally tuned, i.e., if $\mu_{\ell_1}=1/\|X_0\|_{\ell_1}$ and
$\mu_{1}=1/\|X_0\|_{1}$. We have based our generic investigations
given in Section~\ref{subsec:generic:lowerbounds} on the considerably
more simplified technique of \cite{MuHuaWri13} which also allows to
obtain such results in more generality. An alternative convex approach
is discussed \cite{RicOboVer14} where the corresponding atomic norm
\cite{Chandrasekaran2010} (called $kq$-norm) is used as a single
regularizer.  This leads to convex recovery at optimal sampling rate
but the norm itself cannot be computed in a tractable manner,
reflecting again the hardness of the problem itself.  For certain
restricted classes the hardness is not present and convex algorithms
perform optimally, see exemplary \cite{Aghasi2017} where signs in a
particular basis are known a-priori.

Due to the inability of tractable convex programs non-convex recovery
approaches have been investigated intensively in the last years. In
particular, the alternating and thresholding based algorithm ``sparse
power factorization'', as presented in \cite{Lee2013, Lee2015}, can provably
recover at optimal sampling rates when initialized optimally.
However, this is again indeed the magic and difficult step since
computing the optimal initialization is again computationally
intractable.
 For a suboptimal but tractable
initialization recovery can only be guaranteed for a considerable
restricted set of very peaky signals. Relaxed conditions have been
worked out recently \cite{Geppert2018} with the added benefit that
the intrinsic balance between additivity and multiplicativity in
sparsity is more explicitly established. 
Further alternating
algorithms like \cite{Fornasier2018} have been proposed with
guaranteed local convergence and which have better empirical
performance.

An interesting point has been discussed in
\cite{Foucart2019}. Let for simplicity $n=n_1=n_2$ and $s=s_1=s_2$.
Assume that for given rank $r$ and sparsity $s$ the measurement map
in \eqref{eq:sandl:meas} factorizes in the form
$A_i=B^\ad\tilde{A}_i B\in\mathbb{R}^{n\times n}$ where
$\tilde{A}_i\in\mathbb{R}^{p\times p}$ for $i=1,\dots,m\simeq rp$
and $B\in\mathbb{R}^{p\times n}$ are all independent standard
Gaussian matrices with $p\simeq s\log(en/s)$. In this case a
possible reconstruction approach will factorize as well into two
steps, (i) recovery of an intermediate matrix
$Y\in\mathbb{R}^{p\times p}$ from the raw measurements $y$ using
nuclear norm minimization and (ii) recovery of the unknown matrix
$X_0$ from $Y$ using the HiHTP algorithm (details see
\cite{Roth2016}).  However, in the general case, hard-thresholding
algorithms like HiHTP require computable and almost-exact (constants
almost independent of $s$ and $n$) head projections into
$(s,s)$-sparse matrices. Positive-semidefiniteness is helpful in
this respect \cite{Foucart2019} and in
particular in the rank-one case this is relevant for sparse phase
retrieval \cite{Iwen:2017}. 
But in the general case, to the best of the authors knowledge, no algorithm
with tractable initialization has guaranteed global convergence for
generic sparse and low-rank matrices so far.

\subsection{The lower bound}
In the following section we will further strengthen this ``no-go''
result for convex recovery. As already mentioned above, an issue which
has not been discussed in sufficient depth is what can be said about
optimally tuned convex programs and beyond convex combinations of
multiple regularizers. 
For our simultaneously sparse and low rank matrices Theorem \ref{thm:generic:lowerbound:max} yields the following. 

\begin{corollary}[Lower bound, sparse and low rank matrices]
  Let $0\neq X_0\in\M^{n_1,n_2}_{r,s_1,s_2}$ be an $(s_1,s_2)$-sparse
  $n_1\times n_2$-matrix of rank at most $r$,
  $\bar{n}\coloneqq \frac{n_1n_2}{\min(n_1,n_2)}$ and
  $\A:\RR^{n_1\times n_2}\rightarrow \RR^m$ be a Gaussian measurement operator.
  Then, for all $m\leq r\min(\bar{n},s_1s_2)-2$, $X_0$ is the unique
  minimizer of
  \begin{equation}
    \min \max\Bigl\{\frac{\|X\|_{\ell_1}}{\|X_0\|_{\ell_1}},
    \frac{\|X\|_{1}}{\|X_0\|_{1}}\Bigr\}\quad\text{subject to}\quad \A(X) = \A(X_0)
    \label{eq:cor:sl:lowerbound:max:cvx}
  \end{equation}  
  with probability at most
  \begin{equation}
    \psuccmax 
    \leq 
    4 \exp\left(-\frac{(r\min(\bar{n},s_1s_2)-m-2)^2}{8\kappa} \right) \, .
  \end{equation}
  \label{cor:sl:lowerbound:max}
\end{corollary}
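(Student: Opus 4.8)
The goal is to instantiate Theorem~\ref{thm:generic:lowerbound:max} with the two concrete norms $\norm{\argdot}_{(1)} = \lOneNorm{\argdot}$ and $\norm{\argdot}_{(2)} = \OneNorm{\argdot}$ (the $\ell_1$-norm and the nuclear norm), and to evaluate the quantity $\kappa$ from \eqref{eq:min_nr_of_measurements} for these norms applied to a signal $X_0 \in \M^{n_1\times n_2}_{r,s_1,s_2}$. The plan is to compute the two Lipschitz constants $L_1, L_2$ (relative to the Frobenius norm) and the corresponding $f$-ranks, and then show that the minimum over $i\in\{1,2\}$ of $\totdim\,\norm{X_0}_{(i)}^2/(L_i^2 \lTwoNorm{X_0}^2)$ equals $r\min(\bar n, s_1 s_2)$, so that $\kappa = r\min(\bar n, s_1 s_2) - 2$ and the theorem's probability bound translates directly into the stated one.

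The key steps, in order, are as follows. First I would note the total dimension $\totdim = n_1 n_2$ and record the two comparison inequalities on the relevant subspace: for any matrix $X$ one has $\lOneNorm{X} \le L_1 \lTwoNorm{X}$ and $\OneNorm{X} \le L_2 \lTwoNorm{X}$. Crucially, because we only need these bounds to hold on a space containing the signal's descent directions, I would take $L_1$ and $L_2$ to be the \emph{restricted} Lipschitz constants appropriate to $(2s_1,2s_2)$-sparse, rank-$2r$ matrices rather than the global ones. For the $\ell_1$ term, restricting to matrices with at most $k_1 k_2 = r^2 s_1 s_2$ nonzero entries gives $L_1^2 \propto r^2 s_1 s_2$ (up to the fact that the sparsity pattern has at most $rs_1\cdot rs_2$ entries), and using $\lOneNorm{X_0}^2/\lTwoNorm{X_0}^2 \approx$ effective sparsity yields the $f$-rank term $\totdim/L_1^2 \cdot \rank_{\ell_1}(X_0) \propto r s_1 s_2$. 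For the nuclear-norm term, restricting to rank-$2r$ matrices gives $L_2 = \sqrt{2r}$, and since $\OneNorm{X_0}^2/\lTwoNorm{X_0}^2 \le r$ for rank-$r$ matrices, the $f$-rank computation $\totdim/L_2^2 \cdot \rank_{\|\cdot\|_1}(X_0)$ produces $\propto r \cdot n_1 n_2/\min(n_1,n_2) = r\bar n$. Taking the minimum over the two indices gives $r\min(\bar n, s_1 s_2)$.

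I would then invoke Theorem~\ref{thm:generic:lowerbound:max} verbatim with $K = 2$: the optimally weighted max-norm appearing there, $\regNormMax[\mu^\ast]{\argdot} = \max\{\lOneNorm{\argdot}/\lOneNorm{X_0},\ \OneNorm{\argdot}/\OneNorm{X_0}\}$, is exactly the regularizer in \eqref{eq:cor:sl:lowerbound:max:cvx}, and $\mu_i^\ast = 1/\norm{X_0}_{(i)}$ matches \eqref{eq:optimal_weights}. Substituting $\kappa = r\min(\bar n, s_1 s_2) - 2$ into \eqref{eq:generic:lowerbound:max} yields the stated failure-probability bound, with the numerator $(\kappa - m)^2 = (r\min(\bar n, s_1 s_2) - 2 - m)^2$ matching $(r\min(\bar n, s_1 s_2) - m - 2)^2$ exactly.

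The main obstacle is pinning down the correct \emph{restricted} Lipschitz constants and verifying that the $f$-ranks of $X_0$ saturate the bounds up to the claimed constants. For the nuclear-norm side this is clean, since the spectral/Frobenius comparison on rank-$2r$ matrices is exact. The delicate part is the $\ell_1$ side: one must argue that it suffices for the comparison $\lOneNorm{X}\le L_1\lTwoNorm{X}$ to hold on the support pattern of the descent cone (at most $k_1$-row- and $k_2$-column-sparse matrices, hence at most $r^2 s_1 s_2$ nonzero entries), rather than globally on $\RR^{n_1\times n_2}$, since the global constant would be $\sqrt{n_1 n_2}$ and would destroy the bound. This localization is legitimate because Proposition~\ref{prop:delta_lower_bound_of_cos_theta} only needs the inclusion of the subdifferential into a circular cone, and the subdifferential of $\lOneNorm{\argdot}$ at $X_0$ is supported (in the relevant sense) on the sparsity pattern of $X_0$; I would make this restriction explicit so that $L_1^2$ scales as the effective support size and the resulting term becomes $r s_1 s_2$ rather than $n_1 n_2$.
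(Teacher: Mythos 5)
Your high-level plan coincides with the paper's proof: instantiate Theorem~\ref{thm:generic:lowerbound:max} with the pair $(\lOneNorm{\argdot},\OneNorm{\argdot})$, compute $\kappa$ from \eqref{eq:min_nr_of_measurements}, and substitute into \eqref{eq:generic:lowerbound:max}. But the device you build the computation on --- \emph{restricted} Lipschitz constants --- is a genuine error, for two reasons. Conceptually, the hypothesis of the theorem (and of Proposition~\ref{prop:delta_lower_bound_of_cos_theta}, which drives it) requires $\norm{x}_{(i)}\leq L_i\lTwoNorm{x}$ for \emph{all} $x\in V$: this inequality is dualized to $\lTwoNorm{x}\leq L_i\norm{x}_{(i)}^\circ$ and then applied to arbitrary subgradients $x\in\partial\norm{\argdot}_{(i)}(X_0)$. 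Your justification for localizing --- that the subdifferential of $\lOneNorm{\argdot}$ at $X_0$ is ``supported (in the relevant sense) on the sparsity pattern of $X_0$'' --- is false: that subdifferential consists of all matrices which equal the sign pattern of $X_0$ on $\supp(X_0)$ and have \emph{arbitrary} entries of modulus at most $1$ elsewhere, so it contains full-support matrices (this largeness is precisely what makes $\ell_1$-regularization work at all). The circular-cone inclusion with a support-restricted constant therefore fails, and the theorem cannot be invoked ``verbatim'' with constants that violate its hypothesis.

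Arithmetically, the restriction is also unnecessary and does not even produce the values you claim. Your worry that the global constant $L_{\ell_1}=\sqrt{n_1n_2}$ ``would destroy the bound'' misreads \eqref{eq:min_nr_of_measurements}: the ambient dimension $\totdim=n_1n_2$ in the numerator cancels $L_{\ell_1}^2$ exactly, giving
\begin{equation*}
  \frac{\totdim\,\lOneNorm{X_0}^2}{L_{\ell_1}^2\,\lTwoNorm{X_0}^2}-2
  \;=\;
  \frac{\lOneNorm{X_0}^2}{\lTwoNorm{X_0}^2}-2
  \;\leq\; r s_1 s_2-2 ,
\end{equation*}
and likewise the global $L_1=\sqrt{\min(n_1,n_2)}$ gives the nuclear-norm term $\leq \bar n r-2$; this two-line computation with the \emph{global} constants is essentially the paper's entire proof. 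Plugging in your restricted constants instead yields roughly $n_1n_2/r$ on the $\ell_1$ side (with $L^2\propto r^2 s_1 s_2$) and $n_1n_2/2$ on the nuclear side (with $L^2=2r$), neither of which is proportional to the claimed $r s_1 s_2$ resp.\ $r\bar n$ --- so even granting the (invalid) localization, your plan does not reach the stated $\kappa$. The correct repair is to discard the localization entirely and use the global Lipschitz constants, exactly as the paper does.
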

In words, even when optimally tuning convex algorithms and when using the
intuitive best regularizer having the largest subdifferential, the
required sampling rate still scales multiplicative in sparsity, i.e.,
it shows the same no-go behavior as the other (suboptimal) regularizer.
\begin{proof}
The Lipschitz constants of the $\ell_1$--norm and the nuclear norm (w.r.t.\ the Frobenius norm) are 
\begin{align}
L_{\ell_1} = \sqrt{n_1n_2} 
\qquad \text{and}\quad
L_1 = \sqrt{\min(n_1,n_2)} \, ,
\end{align}
respectively. 
Using that the matrix $X_0$ is $s\coloneqq r(s_1s_2)$--sparse yields
$ \|X_0\|_{\ell_1}\leq\sqrt{s}\cdot\|X_0\|_{\ell_2}$. 
Hence, 
\begin{equation}
  \kappa_{\ell_1}\geq\frac{n_1n_2\cdot s}{L^2_{\ell_1}}-2=s-2
\end{equation}
is the expression in the minimum in
\eqref{eq:min_nr_of_measurements} corresponding to the index
``$(i)=\ell_1$'' used for the $\ell_1$ norm.  Using that $X_0$ has rank
at most $r$ we obtain $\|X_0\|_{1}\leq \sqrt{r}\|X_0\|_{\ell_2}$.
Hence,
\begin{equation}
  \kappa_{1}\geq\frac{n_1n_2\cdot r}{L^2_{1}}-2=\frac{n_1n_2\cdot r}{\min(n_1,n_2)}-2
  \eqqcolon \bar{n}\cdot r-2
\end{equation}
with $\bar{n}\coloneqq \frac{n_1n_2}{\min(n_1,n_2)}$ is the expression
in the minimum in \eqref{eq:min_nr_of_measurements} corresponding to
the index ``$(i)=1$'' used for the nuclear norm.
Together, 
\begin{equation}
  \kappa= \min(\kappa_{\ell_1},\kappa_1)\geq \min(\bar{n}r,s)-2
  =r\min(\bar{n},s_1s_2)-2 
\end{equation}
and Theorem \ref{thm:generic:lowerbound:max} establishes the corollary. 
\qed\end{proof}

\begin{figure}
  \centering
  
  % \includegraphics[width=.48\linewidth]{Fig1a_ConicDistO1_s=2}
  % \includegraphics[width=.48\linewidth]{Fig1b_ConicDistO1_s=3}
  % \includegraphics[width=.48\linewidth]{Fig1c_ConicDistO1_s=4}
  % \includegraphics[width=.48\linewidth]{Fig1d_ConicDistO1_s=5}
  
  % \caption{Statistical dimension from \eqref{eq:stad_dim_as_dist} for $n\times n$ rank-one and
  %   $s\times s$--sparse matrices $\M_{1,s,s}^{n\times n}$ for
  %   $s=2,3,4,5$ and $n=10,\dots, 50$. The results are
  %   obtained by averaging the solutions of the corresponding SDP's like
  %   e.g. \eqref{eq:sandl:Gaussian_dist_sum_SDP} for the sum.
  %   \emph{comment: the numerical experiments are still running and the results will
  %     be replaced in final version.}
  % }
  \includegraphics[width=.9\linewidth]{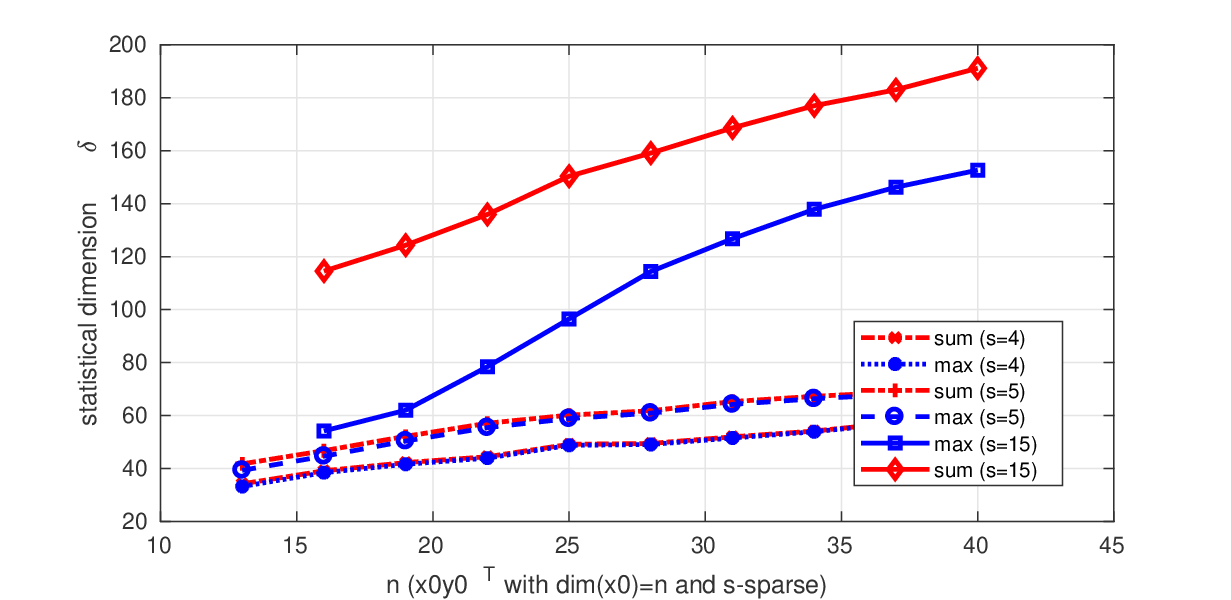}

  \caption{Statistical dimension from \eqref{eq:stad_dim_as_dist} for $n\times n$ rank-one and
    $s\times s$--sparse matrices $\M_{1,s,s}^{n\times n}$ for
    $s=4,5,15$ and $n\in [15,40]$. The results are
    obtained by averaging the solutions of the corresponding SDP's like
    e.g. \eqref{eq:sandl:Gaussian_dist_sum_SDP} for the sum.
  }
  \label{fig:sandl:cdists}
\end{figure}

\subsection{Numerical experiments}
We have numerically estimated the statistical dimension of the decent cones and have performed the actual reconstruction of simultaneously low rank and sparse matrices. 

\subsubsection{Gaussian distance}
In Section \ref{subsec:generic:gaussiandist:SDP} we showed that the
Gaussian distance can be estimated numerically by sampling over (in this case) semidefinite programs (SDP) according to
\eqref{eq:generic:Gaussian_dist_sum_SDP} and
\eqref{eq:generic:Gaussian_dist_max_SDP}. When empirically averaging
these results according to \eqref{eq:stad_dim_as_dist} one obtains an estimate of the statistical dimension and therefore the phase
transition point for successful convex recovery. 
% We define the weights as $\mu=(\mu_1,\mu_2)$. 
For the case of
sparse and low-rank matrices with the weighted sum of nuclear norm and $\ell_1$--norm as regularizer the distance \eqref{eq:generic:Gaussian_dist_sum_SDP} becomes
\begin{equation}
  \begin{split}
    &\fNorm{G - \cone \partial \regNormSum{\argdot}(X_0)}\\
    &=
    \min_{\tau\geq 0,X_1,X_2}
    \left\{
      \fNorm{G - X_1-X_2}: \ 
      \langle X_1+X_2, X_0 \rangle =
      \tau\langle
      \lambda,
      \left(\begin{matrix} \norm{X_0}_1\\\norm{X_0}_{\ell_1}\end{matrix}\right)
      \rangle,
      \left(\begin{matrix} \norm{X_1}_\infty\\\norm{X_2}_{\ell_\infty}\end{matrix}\right)
      \leq \tau \lambda
    \right\} \, .
    \label{eq:sandl:Gaussian_dist_sum_SDP}
  \end{split}
\end{equation}
%\textcolor{red}{refine \eqref{eq:sandl:Gaussian_dist_sum_SDP} for optimal weighting}

A similar SDP can be obtained for the case of the maximum of these two
regularizers.  We solve both SDPs using the CVX toolbox in MATLAB
(with SDPT3 as solver) for many realization of a Gaussian matrix $G$
and then average those results.  We show such results for the optimal
weights in Figure \ref{fig:sandl:cdists} for $\M_{1,s,s}^{n\times n}$
where $s=4,5,15$ and the size of the $n\times n$ matrices ranges in
$n\in[15,40]$. For $s=4,5$ the statistical dimension for the optimally
weighted sum and the maximum are almost the same. However, for higher
sparsity $s=15$ there is a substantial difference, i.e., the optimally
weighted sum of regularizers behaves worse than the maximum.

% $s=2,3,4,5$ and the size of the $n\times n$ matrices ranges in
% $n=10,\dots,50$.

\begin{figure}
  \centering
    % \includegraphics[width=.48\linewidth]{phasetrans_n=30_s=5-20-l1_20190211_191146_l1norm-ssucc}
    % %
    % \includegraphics[width=.48\linewidth]{phasetrans_n=30_s=5-20-nuc_20190211_191110_nucnorm-ssucc}
    % %
    % \includegraphics[width=.48\linewidth]{Fig2b_phasetrans_n=30_s=5-20_20180102_122313_sum_opt_weights-ssucc}
    % %
    % \includegraphics[width=.48\linewidth]{Fig2a_phasetrans_n=30_s=5-20_20180102_122313_max_opt_weights-ssucc}
    % %
    % \includegraphics[width=.48\linewidth]{phasetrans_n=30_s=5-20-prox_20180102_122355_sum_prox_weights-ssucc}
    % %
    % \includegraphics[width=.48\linewidth]{phasetrans_n=30_s=5-20-spf_20190117_143341_spf-ssucc}
    % recreate by
    % (1) latex compiling pics_prep/fig_sandl/phasetrans/phasetrans.tex
    % (2) running dvips -E on the dvi-file 
    % (3) eps2png
  %
  \iffrontiers\else
  \begin{adjustwidth}{-1cm}{-1cm}
  \fi
  \includegraphics[width=\linewidth]{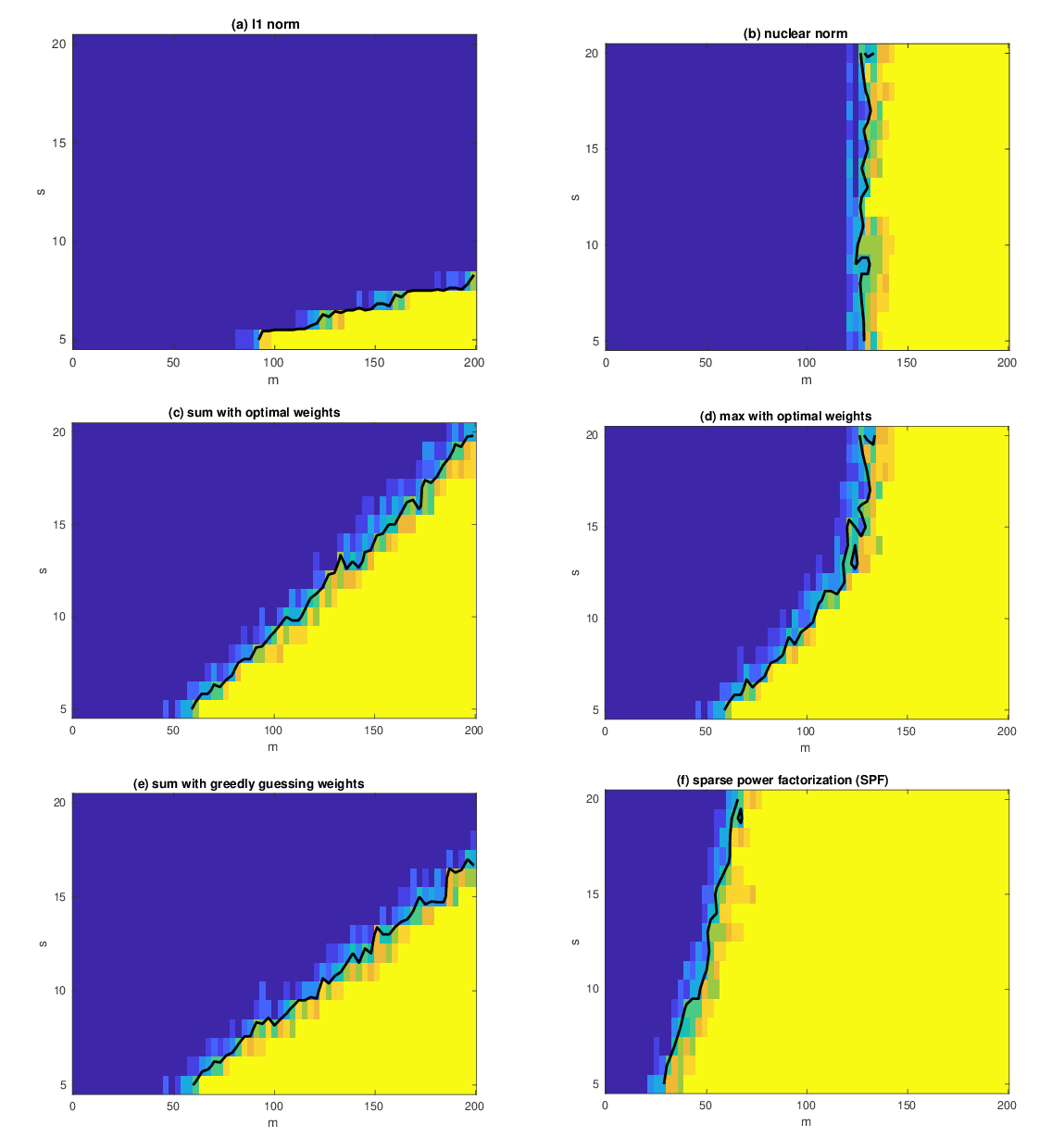}
  \iffrontiers\else
  \end{adjustwidth}
  \fi

    \caption{Phase transitions for convex recovery using the
      $\ell_1$--norm (a), nuclear norm (b), the
      max-norm (d) and the sum-norm (c) with optimal
      $X_0$-dependent weights as regularizers. Furthermore, guessing
      weights in a greedy fashion for the sum-norm using %Algorithm
      % \ref{algo:SumProxWeights}
      is shown (e) and non-convex recovery using sparse power factorization (SPF) from \cite{Lee2013}
      is in (f).
    }
  \label{fig:sandl:phasetrans}
\end{figure}

These results indeed show that the statistical dimension for optimally
weighted maximum of regularizers is better than the sum of
regularizers. 

\subsubsection{Convex recovery}
We numerically find the phase transition for the
convex recovery of complex sparse and low-rank matrices using the sum
and maximum of optimally weighted $\ell_1$ and nuclear norm. 
We also compare to the results obtained by convex
recovery using only either the $\ell_1$--norm or the nuclear norm as reguralizer and,
exemplary, also to a non-convex algorithm. 

The dimension of the matrices are $n=n_1=n_2=30$, the
sparsity range is $s=s_1=s_2=5,\dots,20$ and the rank is $r=1$.  For
each parameter setup a matrix $X_0=uv^\ad$ is drawn using a
uniform distribution for supports of $u$ and $v$ of size $s$ and
iid.\ standard complex-normal distributed entries on those support. The
measurement map itself also consists also of iid.\ complex-normal
distributed entries.  The reconstructed vector $X$ is obtained using
again the CVX toolbox in MATLAB with the SDPT3 solver and an
reconstruction is marked to be successful exactly if
\begin{equation}
  \|X-X_0\|_2/\|X\|_2\leq 10^{-5} 
\end{equation}
holds. Each $(m,s)$-bin in the phase transition plots contains $20$ runs.

The results are shown in Figure~\ref{fig:sandl:phasetrans}. 
Plots~(a) and~(b) show the phase transition of only taking the
$\ell_1$--norm and the nuclear norm as reguralizer, respectively. 
The lower bound
from Theorem~\ref{thm:generic:lowerbound:max} on the required number
of measurements yield for those cases the sparsity
$s^2$ of $X_0$ and $n$, respectively. Thus, only for very small
values of $s^2$ there is a clear advantage of
$\ell_1$-regularization compared to the nuclear norm.
The actual recovery rates scale, however, are close to $2 s^2 \ln(n^2/s^2)$ and $4rn$.

However, combining both regularizers with optimal weights improves as
shown in Plots~(c) and~(d) of Figure~\ref{fig:sandl:phasetrans}. 
Both combined approaches instantaneously balance between sparsity and rank. 
%
% \mk{In particular, taking the convex combinations
%   of norms, can give a sampling rate that is relatively close to the lower bound of essentially $\min\{s^2,n\}$.}
%
Moreover, there is a clear
advantage of taking the maximum (Plot~(c)) over of the sum (Plot~(d))
of $\ell_1$-- and nuclear norm.  For sufficiently small sparsity the
$\ell_1$--norm is more dominant and for higher sparsity than the
nuclear norm determines the behavior of the phase transition. But only
for the maximum of the regularizers there is the the sampling rate saturates at 
approximately $m=130$ due to $\rank(X_0) = 1$, see Plot~(d).

We also mention that the maximum of regularizers improves only if it is optimally
tuned, which is already indicated by the subdifferential of a maximum \eqref{eq:subdiff:max}, where only the largest terms contribute. 
In contrast, reconstruction behaviour of the sum of norms seems to more stable. 
This observations has also been
mentioned in \cite{Jalali2016}. This feature motivates an empirical
approach of guessing the weights from observations.

To sketch an greedy approach for guessing the weights we consider the
following strategy. %we consider Algorithm~\ref{algo:SumProxWeights}.
Ideally, we would like to choose $\lambda_1 = 1/\OneNorm{X_0}$ and
$\lambda_{\ell_1} = 1/\lOneNorm{X_0}$ in the minimization of the objective function 
$\lambda_1 \OneNorm{X} + \lambda_{\ell_1} \lOneNorm{X}$.  
Since, for Frobenius norm
normalized $X$ we have $1/\OneNorm{X} \geq \InfNorm{X}$ (similarly for
the $\ell_\infty$-norm) we choose for as initialization
$\lambda_1^{(1)}=\lInfNorm{\A^\ad(y)}$ and
$\lambda_{\ell_1}^{(1)}=\InfNorm{A^\ad(y)}$ for the iteration $t=1$. 
After finding
\begin{equation}
  X^{(t)}\coloneqq \argmin_X\{ \lambda_1^{(t)} \OneNorm{X} + \lambda_{\ell_1}^{(t)} \lOneNorm{X}:\ \A(X) = y \}
\end{equation}
we update $\lambda_1^{(t+1)} \coloneqq 1/\OneNorm{X^{(t)}}$ and
$\lambda_{\ell_1}^{(t+1)} \coloneqq 1/\lOneNorm{X^{(t)}}$.  The results obtained by this greedy
approach after $3$ iterations are shown in Plot~(e) of Figure~\ref{fig:sandl:phasetrans}. 
Comparing this to the optimally
weighted sum of regularizers in Plot~(c), we see that almost the same
performance can be achieved with this iterative scheme. 

% So, in Algorithm~\ref{algo:SumProxWeights} we use lower bounds on $1/\OneNorm{X_0}$ and $1/\lOneNorm{X_0}$ in the minimization, as a relaxation. 
% In contrast, $1/\OneNorm{X}$ and $1/\lOneNorm{X}$ are upper bounds on $1/\OneNorm{X_0}$ and $1/\lOneNorm{X_0}$\mk{ (right?)}.  
% {\small
% \begin{algorithm}[h]
%   \begin{algorithmic}
%     \State \textbf{Input:} $\A\in \CC^{m \times n}$, $y\in \CC^m$
%     \State \textbf{Output:} $X \in \CC^{n\times n}$ (low rank and sparse) such that $\A x = y$
%     \State
%     \Function{SumProxWeights$(\A,y)$}{}
%     \State $X^{(0)} \gets \A^\ast y$
%     \State $j = 0$ 
%     \While{Stopping criterion false}
%     \State $a = \InfNorm{X^{(j)}}$
%     \State $b = \lInfNorm{X^{(j)}}$
%     \State $j = j+1$
%     \State $X^{(j)} \gets CS(\A,y,a,b)$
%     \EndWhile
%     \State \Return $X^{(j)}$
%     \EndFunction
%     \State
%     \Function{$CS(\A,y,a,b)$}{}
%     \State $X \gets \argmin_X\{ a \OneNorm{X} + b \lOneNorm{X}:\ \A(X) = y \}$
%     \State \Return $X$
%     \EndFunction
%   \end{algorithmic}
%   \caption{SumProxWeights}
%   \label{algo:SumProxWeights}
% \end{algorithm}
% }

Finally, there is indeed strong evidence that in many problems with
simultaneous structures non-convex algorithms perform considerably
better and faster then convex formulations.  Although we have focused
in this work on better understand of convex recovery we would bring
also here an example.  For the sparse and low-rank setting there
exists several very efficient and powerful algorithms,
exemplary we mention here sparse power factorization (SPF)
\cite{Lee2013} and ATLAS$_{2,1}$ \cite{Fornasier2018}. In the noiseless
setting SPF is known to clearly outperforms all convex algorithms, see
Plot~(f) of Figure~\ref{fig:sandl:phasetrans}. The numerical
experiments in \cite{Fornasier2018} suggests that in the noisy setting
ATLAS$_{2,1}$ seems to be better choice.

% \begin{figure}
%   \centering
%     %
%     \includegraphics[width=.48\linewidth]{Fig2b_phasetrans_n=30_s=5-20_20180102_122313_sum_opt_weights-ssucc}
%     %
%     \includegraphics[width=.48\linewidth]{phasetrans_n=30_s=5-20-prox_20180102_122355_sum_prox_weights-ssucc}

%     \caption{Phase transition for convex recovery using the sum-norm with
%       (left) optimal $X_0$-dependent weights as regularizers and
%       (right) the results obtained by guessing the weights using
%       Algorithm \ref{algo:SumProxWeights}.
%       % Successful recovery of matrices from
%       % $\M_{1,s,s}^{n\times n}$ for $n=30$ starts again for $s=5$ with
%       % approximately $m=50$ measurements. At approximately $m=130$ the
%       % bound related to the nuclear norm is active for the maximum
%       % (left) and therefore improves over the sum (right).
%     }
%   \label{fig:sandl:phasetrans:prox}
% \end{figure}

%%% ----------------------------------------------------

%%% ===============================================
\section{Special low-rank tensors} % ==============
%%% ===============================================
\label{sec:tensors}
Tensor recovery is an important and notoriously difficult problem, which can be seen as a generalization of low-rank matrix recovery. 
However, for tensors there are several notions of rank and corresponding tensor decompositions \cite{KolBad09}.  
They include 
the higher order singular value decomposition (HOSVD), 
the tensor train (TT) decomposition (a.k.a.\ by matrix product states), 
the hierarchical Tucker (a.k.a.\ tree tensor network) decomposition, 
and the CP decomposition. 
For all these notions, the unit rank objects coincide and are given by tensor products. 

Gandy, Recht, and Yamada \cite{GanRecYam11} suggested to use a sum of nuclear norms of different matrizations (see below) as a regularizer for the completion of $3$-way tensors in image recovery problems. 
Mu et al.~\cite{MuHuaWri13} showed that this approach leads to the same scaling in the number of required measurements as when one just one nuclear norm of one matrization as a regularizer. 
However, the prefactors are significantly different in these approaches. 
Moreover, Mu et al.\ suggested to analyze $4$-way tensors, where the matrization can be chosen such that the matrices are close to being square matrices. 
In this case, the nuclear norm regularization yields an efficient reconstruction method with rigorous guarantees that has the so far best scaling in the number of measurements. 
For rank-$1$ tensors we will now suggest to use a maximum of nuclear norms of certain matrizations as a regularizers. 
While the no-go results \cite{MuHuaWri13} for an optimal scaling still hold, this still leads to a significant improvement of prefactors. 

%%% ------------------------------------------------
\subsection{Setting and preliminaries} % -----------
%%% ------------------------------------------------
The \emph{effective rank} of a matrix $X$ is 
$\rank^\eff(X) \coloneqq \norm{X}_1^2/\norm{X}_2^2$. 
Note that for matrices where all non-zero singular values coincide, the rank coincides with the effective rank and for all other matrices the effective rank is smaller. 

We consider the tensor spaces $V \coloneqq \RR^{n_1\times n_2 \times \dots \times n_L}$ as signal space and refer to the $n_i$ as \emph{local dimensions}. 
The different matrix ranks of different matrizations are given as follows. 

An index \emph{bipartition} is 
\begin{equation}
\label{bps1}
 [L] = b \cup b^c \quad \textnormal{with} \quad b \subset [L] \quad \textnormal{and} \quad b^c = [L]\backslash b.
\end{equation}
The \emph{$b$-matricization} is the canonical isomorphism $\KK^{n_1 \times n_2 \times \ldots \times n_L} \cong \KK^{n_b \times n_{b^c}}$,
where $n_b = \prod_{i \in b} n_i$, i.e. the indices in $b$ are joined together into the row index of a matrix and the indices in $b^c$ into the column index.
It is performed by a \texttt{reshape} function in many numerics packages. 
The rank and effective of the $b$-matrization of $X$ are denoted by $\rank_b(X)$ and $\rank^\eff_b(X)$. 
The \emph{$b$-nuclear norm} $\norm{X}_1^b$ is given by the nuclear norm of the $b$-matricization of $X$. 

Now, we consider ranks based on a set of index bipartitions
\begin{equation}
  \mc B =( b_j)_{j\in [k]} \, \quad \text{with } \ b_j \subset [L] \, .
\end{equation} 
The corresponding (formal) rank $\rank_{\mc B}$ is given by
\begin{equation}
  \rank_{\mc B}(X) \coloneqq ( \rank_b(X) )_{b \in \mc B}
\end{equation}
Similarly, given a signal $X_0 \in V$ the corresponding max-norm is given by
\begin{equation}\label{eq:max_norm_tensors}
  \regNormMax[\mu^\ast\!]{X}^{\mc B} 
  \coloneqq 
  \max_{b\in \mc B}\,  \frac{\OneNorm{X}^{b}}{\OneNorm{X_0}^{b}} 
  \, .
\end{equation}
Note that for the case that $X_0$ is a product $X_0 = x_0^{(1)}\otimes \dots x_0^{(L)}$ we have 
\begin{equation}
  \norm{X_0}_p^b = \prod_{j=1}^L \lTwoNorm{x_0^{(j)}}
\end{equation}
for all $b \subset [L]$ and $p\geq 1$. 
Hence, the reweighting in the optimal max-norm \eqref{eq:max_norm_tensors} is trivial in that case, i.e., it just yields an overall factor, which can be pulled out of the maximum. 

Let us give more explicit examples for the set of bipartitions:
$\mc B = (\{i\})_{i \in [L]}$ defines the HOSVD rank and 
$\mc B = (\{1, \dots, \ell \})_{\ell \in [L-1]}$ the tensor train rank. 
They also come along with a corresponding tensor decomposition. 
In other cases, accompanying tensor decompositions are not known. 
For instance, for $k=4$ and $\mc B = (\{1,2\}, \{1,3\})$ it is clear that the tensors of (formal) rank $(1,1)$ are tensor products. 
The tensors of ranks $(1,i)$ and $(i,1)$ are given by tensor products of two matrices, each of rank bounded by $i$. 
But in general, it is unclear what tensor decomposition corresponds to a $\mc B$-rank. 

One interesting remark might be that there are several measures of entanglement in quantum physics, which measure the non-productness in case of quantum state vectors. 
The \emph{negativity} \cite{VidWer02} is such a measure. 
Now, for a non-trivial bipartition $b$ and normalized tensor 
$X\in V$ (i.e., $\fNorm{X}=1$) 
\begin{equation}
  \frac 12 \left( (\OneNorm{X}^{b})^2-1\right)
  =
  \frac 12 \left( \OneNorm{  \left( \vect(X){\vect(X)}^{T} \right)^{T_b}  }-1 \right)
\end{equation}
is the negativity \cite{VidWer02} of the quantum state vector $X$ w.r.t.\ the bipartition $b$, 
where $T_{b}$ denotes the partial transposition w.r.t.\ $b$ and $\vect(X)$ the vectorization of $X$, i.e. the $[L]$-matrization. 

Theorem~\ref{thm:generic:lowerbound:max} applies to tensor recovery with the regularizer~\eqref{eq:max_norm_tensors}. 
We illustrate the lower bound for the special case of $4$-way tensors ($L=4$) with equal local dimensions $n_i = n$ and a regularizer norm given by $\mc B = (\{1,2\},\{1,3\}, \{1,4\})$. 
Then the critical number of measurements \eqref{eq:min_nr_of_measurements} in the lower bounds~\eqref{eq:generic:lowerbound:max} and \eqref{eq:deltaDC_bound} is 
\begin{equation}\label{eq:lower_bound_funny3}
  \kappa 
  = 
  \delta(\DC(\regNormMax[\mu^\ast]{\argdot};x_0)) 
  \approx 
  \min_{b \in \mc B}  \rank^{\mathrm{eff}}_{b}(X_0) \, n^2 . 
\end{equation}
If $X_0$ is a tensor product, this becomes $\kappa \approx n^2$. 

%%% --------------------------------------------
\subsection{RIP for the HOSVD and TT ranks}
%%% --------------------------------------------
A similar statement as Theorem~\ref{thm:RIP_low_rank_and_sparse} has been proved for the HOSVD and TT rank for the case of sub-Gaussian measurements by Rauhut, Schneider, and Stojanac \cite[Section~4]{RauSchSto16}. 
They also showed that the RIP statements lead to a partial recovery guarantee for iterative hard thresholding algorithms. 
Having only suboptimal bounds for TT and HOSVD approximations has so far prevented proofs of full recovery guarantees. 

It is unclear how RIP results could be extended to the ``ranks'' without an associated tensor decomposition and probably these ranks need to be better understood first.  

%%% --------------------------------------------
\subsection{Numerical experiments}
%%% --------------------------------------------
We sample the statistical dimension \eqref{eq:stad_dim_as_dist} numerically for $L=4$ instances of the max-norm \eqref{eq:max_norm_tensors} and a unit rank signal $X_0$;
see Figure~\ref{fig:meanSquaredDist-rank1}, 
where we have estimated the statistical dimension using the program \eqref{eq:generic:Gaussian_dist_max_SDP} with the dual norms being spectral norms of the corresponding $b$-matrizations. 
The numerical experiment suggest that the actual reconstruction behaviour of the $\mc B_3$-max-norm is close to \textbf{twice} the lower bound from Theorem~\ref{thm:generic:lowerbound:max}, which is given by $n^2$. 
The missing factor of two might be due to the following mismatch. 
In the argument with the circular cones we only have considered tensors of unit $b$-rank whereas the actual descent cone contains tensors of $b$-rank $2$ for some $b \in \mc B_3$. 
This discrepancy should lead to the lower bound be too low by a factor of $2$ which is compatible with the plots in Figure~\ref{fig:meanSquaredDist-rank1}. 

A similar experiment can be done for the similar sum-norm from \eqref{eq:def_max_sum_reg_norm}.  
This leads to similar statistical dimensions except for the tensor train bipartition, where the statistical dimension is significantly larger ($\sim 25\%$) for the sum-norm. 

\begin{figure}[t]
  \centering
  \iffrontiers\else
  \begin{adjustwidth}{-1.5cm}{-1.5cm}
  \fi
  \includegraphics[width = .49\linewidth]{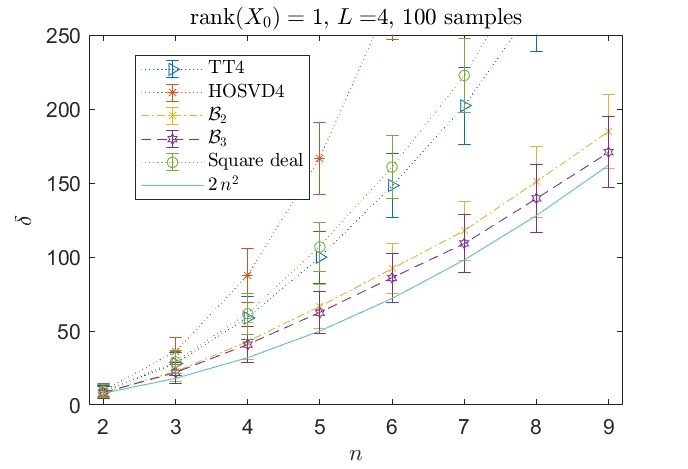}
  \includegraphics[width = .49\linewidth]{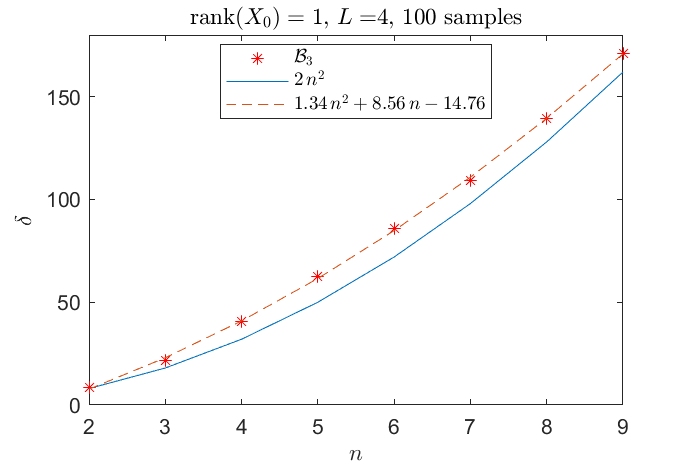}
  \iffrontiers\else
  \end{adjustwidth}
  \fi
  \caption{%
  \label{fig:meanSquaredDist-rank1}%
  Observed average of the statistical dimension \eqref{eq:stad_dim_as_dist} for a product signal $X_0 \in (\CC^n)^{\otimes k}$ with $L=4$ and the max-norm \eqref{eq:max_norm_tensors} as regularizer. 
  The norm is given by the bipartitions
  corresponding to 
  (i) the TT decomposition, 
  (ii) the HOSVD decomposition, 
  (iii) $\mc B_2 \coloneqq ( \{1,2\}, \{1,3\} )$, 
  $\mc B_3 \coloneqq ( \{1,2\}, \{1,3\}, \{1,4\} )$, 
  and $\mc B_{\mathrm{Square\; deal}} \coloneqq (\{1,2\})$.
  The statistical dimension $\delta$ corresponds to the critical number of measurements where the phase transition in the reconstruction success probability occurs. 
  The plots suggest that for the $\mc B_3$-max-norm the number of measurements scales roughly as twice the lower bound given by \eqref{eq:lower_bound_funny3}. 
  For the numerical implementation, the SDP~\eqref{eq:generic:Gaussian_dist_max_SDP} has been used. 
  The error bars indicate the unbiased sample standard deviation. 
  The numerics has been implemented with Matlab+CVX+SDPT3. 
  }

\end{figure}

\section{Conclusion and outlook}
We have investigated the problem of convex
recovery of simultaneously structured objects from few random
observations. 
We have revisited the idea of taking convex combinations of regularizers and have focused on the best among them, which is given by an optimally weighted maximum of the individual regularizers. 
We have extended and lower bounds on the required number of measurements by Mu et al.~\cite{MuHuaWri13} to this setting. 
The bounds are simpler and more explicit than those obtained by Oymak et al.~\cite{Oymak2015} for simultaneously low rank and sparse matrices. 
They show that it is not possible to improve the scaling of the optimal sampling rate in convex recovery even if optimal tuning and the maximum of simultaneous regularizers is
used, the latter giving the largest subdifferential. 
In more detail, we have obtained lower bounds for the number of measurements in the generic
situation and applied this to the cases of (i) simultaneously sparse and low-rank matrices and (ii) certain tensor structures. 

For these settings, we have compared the lower bounds to numerical experiments. 
In those experiments we have (i) demonstrated the actual recovery and (ii) estimated the statistical dimension that gives the actual value of the phase transition of the recovery rate. 
The latter can be achieved by sampling over certain SDPs. 
For tensors, we have observed that the lower bound can be quite tight up to a factor of $2$. 

The main question, whether or not one can derive strong rigorous recovery guarantees for efficient reconstruction algorithms in the case of simultaneous structures remains largely open. 
However, there are a few smaller questions that we would like to point out. 

Numerically, we have observed that if the weights deviate from the optimal ones has a relatively small effect for the sum of norms as compared to the maximum of norms. 
Indeed, $\delta(\mu) \coloneqq \delta(\cone( \partial \regNormSum{\argdot}(x_0) )^\circ )$ is a continuous function of $\mu$ whereas it appears to be non-continuous for $\regNormMax[\mu]{\argdot}$ due to \eqref{eq:cone_subdiff:max}. Of course it would be good to have tight upper bounds the both regularizers. 
Maybe, one can also find a useful interpolation between $\regNormSum[\mu]{\argdot}$ and $\regNormMax[\mu]{\argdot}$ by using an $\ell_p$ norm of the vector containing the single norms $\mu^\ast_i \norm{\argdot}_{(i)}$. 
This interpolation would give the max-norm $\regNormMax[\mu]{\argdot}$ for $p = \infty$ and the sum-norm $\regNormSum[\mu]{\argdot}$ for $p = 1$ and one could choose $p$ depending on how accurately one knows the optimal weights $\mu^\ast$. 
Finally, maybe one can modify an iterative non-convex procedure for solving the optimization problem we are using for the reconstructions such that one obtains recovery from fewer measurements. 

%%%=============================================
\section*{Acknowledgments}
%%%=============================================
We thank Micha{\l} Horodecki, Omer Sakarya, 
David Gross, 
Ingo Roth, 
Dominik Stoeger, and 
\v{Z}eljka Stojanak 
for fruitful discussions.

The work of MK was funded by the National Science Centre, Poland within the project Polonez (2015/19/P/ST2/03001) which has received funding from the European Union's Horizon 2020 research and innovation programme under the Marie Sk{\l}odowska-Curie grant agreement No 665778. The work of SJS was partially supported by the grant DMS-1600124 from the National Science Foundation (USA). 
PJ has been supported by DFG grant JU 2795/3.

%%%=============================================
\iffrontiers
\bibliographystyle{frontiersinHLTH_FPHY} 
\else
\bibliographystyle{./myapsrev4-1}
\fi
% \bibliography{../manuscript/merged_2018-08.bib, ../manuscript/new.bib}
\bibliography{new,combined}

%merlin.mbs apsrev4-1.bst 2010-07-25 4.21a (PWD, AO, DPC) hacked
%Control: key (0)
%Control: author (72) initials jnrlst
%Control: editor formatted (1) identically to author
%Control: production of article title (-1) disabled
%Control: page (0) single
%Control: year (1) truncated
%Control: production of eprint (0) enabled
\begin{thebibliography}{44}%
\makeatletter
\providecommand \@ifxundefined [1]{%
 \@ifx{#1\undefined}
}%
\providecommand \@ifnum [1]{%
 \ifnum #1\expandafter \@firstoftwo
 \else \expandafter \@secondoftwo
 \fi
}%
\providecommand \@ifx [1]{%
 \ifx #1\expandafter \@firstoftwo
 \else \expandafter \@secondoftwo
 \fi
}%
\providecommand \natexlab [1]{#1}%
\providecommand \enquote  [1]{``#1''}%
\providecommand \bibnamefont  [1]{#1}%
\providecommand \bibfnamefont [1]{#1}%
\providecommand \citenamefont [1]{#1}%
\providecommand \href@noop [0]{\@secondoftwo}%
\providecommand \href [0]{\begingroup \@sanitize@url \@href}%
\providecommand \@href[1]{\@@startlink{#1}\@@href}%
\providecommand \@@href[1]{\endgroup#1\@@endlink}%
\providecommand \@sanitize@url [0]{\catcode `\\12\catcode `\$12\catcode
  `\&12\catcode `\#12\catcode `\^12\catcode `\_12\catcode `\%12\relax}%
\providecommand \@@startlink[1]{}%
\providecommand \@@endlink[0]{}%
\providecommand \url  [0]{\begingroup\@sanitize@url \@url }%
\providecommand \@url [1]{\endgroup\@href {#1}{\urlprefix }}%
\providecommand \urlprefix  [0]{URL }%
\providecommand \Eprint [0]{\href }%
\providecommand \doibase [0]{http://dx.doi.org/}%
\providecommand \selectlanguage [0]{\@gobble}%
\providecommand \bibinfo  [0]{\@secondoftwo}%
\providecommand \bibfield  [0]{\@secondoftwo}%
\providecommand \translation [1]{[#1]}%
\providecommand \BibitemOpen [0]{}%
\providecommand \bibitemStop [0]{}%
\providecommand \bibitemNoStop [0]{.\EOS\space}%
\providecommand \EOS [0]{\spacefactor3000\relax}%
\providecommand \BibitemShut  [1]{\csname bibitem#1\endcsname}%
\let\auto@bib@innerbib\@empty
%</preamble>
\bibitem [{\citenamefont {Foucart}\ and\ \citenamefont
  {Rauhut}(2013)}]{FouRau13}%
  \BibitemOpen
  \bibfield  {author} {\bibinfo {author} {\bibfnamefont {S.}~\bibnamefont
  {Foucart}}\ and\ \bibinfo {author} {\bibfnamefont {H.}~\bibnamefont
  {Rauhut}},\ }\href@noop {} {\emph {\bibinfo {title} {A mathematical
  introduction to compressive sensing}}}\ (\bibinfo  {publisher} {Springer},\
  \bibinfo {year} {2013})\BibitemShut {NoStop}%
\bibitem [{\citenamefont {Chandrasekaran}\ \emph {et~al.}(2012)\citenamefont
  {Chandrasekaran}, \citenamefont {Recht}, \citenamefont {Parillo},\ and\
  \citenamefont {Willsky}}]{Chandrasekaran2010}%
  \BibitemOpen
  \bibfield  {author} {\bibinfo {author} {\bibfnamefont {V.}~\bibnamefont
  {Chandrasekaran}}, \bibinfo {author} {\bibfnamefont {B.}~\bibnamefont
  {Recht}}, \bibinfo {author} {\bibfnamefont {A.}~\bibnamefont {Parillo}}, \
  and\ \bibinfo {author} {\bibfnamefont {A.~S.}\ \bibnamefont {Willsky}},\
  }\bibinfo {title} {\emph {{The convex geometry of linear inverse
  problems}}},\ \href
  {http://link.springer.com/article/10.1007/s10208-012-9135-7} {\bibfield
  {journal} {\bibinfo  {journal} {Found. Comput. Math.}\ }\textbf {\bibinfo
  {volume} {12}},\ \bibinfo {pages} {805} (\bibinfo {year} {2012})},\ \Eprint
  {http://arxiv.org/abs/1012.0621} {arXiv:1012.0621 [math.OC]}\BibitemShut
  {NoStop}%
\bibitem [{\citenamefont {{Amelunxen}}\ \emph {et~al.}(2014)\citenamefont
  {{Amelunxen}}, \citenamefont {{Lotz}}, \citenamefont {{McCoy}},\ and\
  \citenamefont {{Tropp}}}]{AmeLotMcC14}%
  \BibitemOpen
  \bibfield  {author} {\bibinfo {author} {\bibfnamefont {D.}~\bibnamefont
  {{Amelunxen}}}, \bibinfo {author} {\bibfnamefont {M.}~\bibnamefont {{Lotz}}},
  \bibinfo {author} {\bibfnamefont {M.~B.}\ \bibnamefont {{McCoy}}}, \ and\
  \bibinfo {author} {\bibfnamefont {J.~A.}\ \bibnamefont {{Tropp}}},\ }\bibinfo
  {title} {\emph {{Living on the edge: Phase transitions in convex programs
  with random data}}},\ \href {\doibase 10.1093/imaiai/iau005} {\bibfield
  {journal} {\bibinfo  {journal} {Inf. Inference}\ }\textbf {\bibinfo {volume}
  {3}},\ \bibinfo {pages} {224} (\bibinfo {year} {2014})},\ \Eprint
  {http://arxiv.org/abs/1303.6672} {arXiv:1303.6672 [cs.IT]}\BibitemShut
  {NoStop}%
\bibitem [{\citenamefont {{Mu}}\ \emph {et~al.}(2013)\citenamefont {{Mu}},
  \citenamefont {{Huang}}, \citenamefont {{Wright}},\ and\ \citenamefont
  {{Goldfarb}}}]{MuHuaWri13}%
  \BibitemOpen
  \bibfield  {author} {\bibinfo {author} {\bibfnamefont {C.}~\bibnamefont
  {{Mu}}}, \bibinfo {author} {\bibfnamefont {B.}~\bibnamefont {{Huang}}},
  \bibinfo {author} {\bibfnamefont {J.}~\bibnamefont {{Wright}}}, \ and\
  \bibinfo {author} {\bibfnamefont {D.}~\bibnamefont {{Goldfarb}}},\
  }\href@noop {} {\bibinfo {title} {\emph {{Square Deal: Lower Bounds and
  Improved Relaxations for Tensor Recovery}}},\ }\Eprint
  {http://arxiv.org/abs/1307.5870} {arXiv:1307.5870 [stat.ML]}\BibitemShut
  {NoStop}%
\bibitem [{\citenamefont {Oymak}\ \emph {et~al.}(2015)\citenamefont {Oymak},
  \citenamefont {Jalali}, \citenamefont {Fazel}, \citenamefont {Eldar},\ and\
  \citenamefont {Hassibi}}]{Oymak2015}%
  \BibitemOpen
  \bibfield  {author} {\bibinfo {author} {\bibfnamefont {S.}~\bibnamefont
  {Oymak}}, \bibinfo {author} {\bibfnamefont {A.}~\bibnamefont {Jalali}},
  \bibinfo {author} {\bibfnamefont {M.}~\bibnamefont {Fazel}}, \bibinfo
  {author} {\bibfnamefont {Y.~C.}\ \bibnamefont {Eldar}}, \ and\ \bibinfo
  {author} {\bibfnamefont {B.}~\bibnamefont {Hassibi}},\ }\bibinfo {title}
  {\emph {Simultaneously Structured Models with Application to Sparse and
  Low-rank Matrices}},\ \href {\doibase 10.1109/TIT.2015.2401574} {\bibfield
  {journal} {\bibinfo  {journal} {IEEE Trans. Inf. Theory}\ }\textbf {\bibinfo
  {volume} {61}} (\bibinfo {year} {2015}),\ 10.1109/TIT.2015.2401574},\ \Eprint
  {http://arxiv.org/abs/1212.3753} {arXiv:1212.3753}\BibitemShut {NoStop}%
\bibitem [{\citenamefont {Gandy}\ \emph {et~al.}(2011)\citenamefont {Gandy},
  \citenamefont {Recht},\ and\ \citenamefont {Yamada}}]{GanRecYam11}%
  \BibitemOpen
  \bibfield  {author} {\bibinfo {author} {\bibfnamefont {S.}~\bibnamefont
  {Gandy}}, \bibinfo {author} {\bibfnamefont {B.}~\bibnamefont {Recht}}, \ and\
  \bibinfo {author} {\bibfnamefont {I.}~\bibnamefont {Yamada}},\ }\bibinfo
  {title} {\emph {Tensor completion and low-n-rank tensor recovery via convex
  optimization}},\ \href {http://stacks.iop.org/0266-5611/27/i=2/a=025010}
  {\bibfield  {journal} {\bibinfo  {journal} {Inverse Problems}\ }\textbf
  {\bibinfo {volume} {27}},\ \bibinfo {pages} {025010} (\bibinfo {year}
  {2011})}\BibitemShut {NoStop}%
\bibitem [{\citenamefont {{Ghadermarzy}}\ \emph {et~al.}(2017)\citenamefont
  {{Ghadermarzy}}, \citenamefont {{Plan}},\ and\ \citenamefont
  {{Yilmaz}}}]{GhaPlaYil17}%
  \BibitemOpen
  \bibfield  {author} {\bibinfo {author} {\bibfnamefont {N.}~\bibnamefont
  {{Ghadermarzy}}}, \bibinfo {author} {\bibfnamefont {Y.}~\bibnamefont
  {{Plan}}}, \ and\ \bibinfo {author} {\bibfnamefont {{\"O}.}~\bibnamefont
  {{Yilmaz}}},\ }\href@noop {} {\bibinfo {title} {\emph {{Near-optimal sample
  complexity for convex tensor completion}}},\ }\Eprint
  {http://arxiv.org/abs/1711.04965} {arXiv:1711.04965 [cs.LG]}\BibitemShut
  {NoStop}%
\bibitem [{\citenamefont {{Rauhut}}\ and\ \citenamefont
  {{Stojanac}}(2015)}]{RauSto15}%
  \BibitemOpen
  \bibfield  {author} {\bibinfo {author} {\bibfnamefont {H.}~\bibnamefont
  {{Rauhut}}}\ and\ \bibinfo {author} {\bibfnamefont {{\v Z}.}~\bibnamefont
  {{Stojanac}}},\ }\href@noop {} {\bibinfo {title} {\emph {Tensor theta norms
  and low rank recovery}},\ }\Eprint {http://arxiv.org/abs/1505.05175}
  {arXiv:1505.05175 [cs.IT]}\BibitemShut {NoStop}%
\bibitem [{\citenamefont {Richard}\ \emph {et~al.}(2014)\citenamefont
  {Richard}, \citenamefont {Obozinski},\ and\ \citenamefont
  {Vert}}]{RicOboVer14}%
  \BibitemOpen
  \bibfield  {author} {\bibinfo {author} {\bibfnamefont {E.}~\bibnamefont
  {Richard}}, \bibinfo {author} {\bibfnamefont {G.~R.}\ \bibnamefont
  {Obozinski}}, \ and\ \bibinfo {author} {\bibfnamefont {J.-P.}\ \bibnamefont
  {Vert}},\ }in\ \href
  {http://papers.nips.cc/paper/5408-tight-convex-relaxations-for-sparse-matrix-factorization.pdf}
  {\emph {\bibinfo {booktitle} {Advances in Neural Information Processing
  Systems 27}}},\ \bibinfo {editor} {edited by\ \bibinfo {editor}
  {\bibfnamefont {Z.}~\bibnamefont {Ghahramani}}, \bibinfo {editor}
  {\bibfnamefont {M.}~\bibnamefont {Welling}}, \bibinfo {editor} {\bibfnamefont
  {C.}~\bibnamefont {Cortes}}, \bibinfo {editor} {\bibfnamefont {N.~D.}\
  \bibnamefont {Lawrence}}, \ and\ \bibinfo {editor} {\bibfnamefont {K.~Q.}\
  \bibnamefont {Weinberger}}}\ (\bibinfo  {publisher} {Curran Associates,
  Inc.},\ \bibinfo {year} {2014})\ pp.\ \bibinfo {pages} {3284--3292},\ \Eprint
  {http://arxiv.org/abs/1407.5158} {arXiv:1407.5158 [stat.ML]}\BibitemShut
  {NoStop}%
\bibitem [{\citenamefont {Tropp}(2015)}]{Tro14}%
  \BibitemOpen
  \bibfield  {author} {\bibinfo {author} {\bibfnamefont {J.~A.}\ \bibnamefont
  {Tropp}},\ }\bibinfo {title} {\emph {Convex recovery of a structured signal
  from independent random linear measurements}},\ in\ \href {\doibase
  10.1007/978-3-319-19749-4_2} {\emph {\bibinfo {booktitle} {Sampling Theory, a
  Renaissance}}},\ \bibinfo {editor} {edited by\ \bibinfo {editor}
  {\bibfnamefont {E.~G.}\ \bibnamefont {Pfander}}}\ (\bibinfo  {publisher}
  {Springer},\ \bibinfo {year} {2015})\ pp.\ \bibinfo {pages} {67--101},\
  \Eprint {http://arxiv.org/abs/1405.1102} {arXiv:1405.1102
  [cs.IT]}\BibitemShut {NoStop}%
\bibitem [{\citenamefont {Hiriart-Urruty}\ and\ \citenamefont
  {Lemaréchal}(2001)}]{HirLem04}%
  \BibitemOpen
  \bibfield  {author} {\bibinfo {author} {\bibfnamefont {J.-B.}\ \bibnamefont
  {Hiriart-Urruty}}\ and\ \bibinfo {author} {\bibfnamefont {C.}~\bibnamefont
  {Lemaréchal}},\ }\href@noop {} {\emph {\bibinfo {title} {Fundamentals of
  convex analysis}}},\ Grundlehren text editions\ (\bibinfo  {publisher}
  {Springer},\ \bibinfo {address} {Berlin ; New York},\ \bibinfo {year}
  {2001})\BibitemShut {NoStop}%
\bibitem [{\citenamefont {{Amelunxen}}\ \emph {et~al.}(2013)\citenamefont
  {{Amelunxen}}, \citenamefont {{Lotz}}, \citenamefont {{McCoy}},\ and\
  \citenamefont {{Tropp}}}]{AmeLotCoy13arxiv}%
  \BibitemOpen
  \bibfield  {author} {\bibinfo {author} {\bibfnamefont {D.}~\bibnamefont
  {{Amelunxen}}}, \bibinfo {author} {\bibfnamefont {M.}~\bibnamefont {{Lotz}}},
  \bibinfo {author} {\bibfnamefont {M.~B.}\ \bibnamefont {{McCoy}}}, \ and\
  \bibinfo {author} {\bibfnamefont {J.~A.}\ \bibnamefont {{Tropp}}},\
  }\href@noop {} {\bibinfo {title} {\emph {Living on the edge: Phase
  transitions in convex programs with random data}},\ }\Eprint
  {http://arxiv.org/abs/1303.6672v2} {arXiv:1303.6672v2 [cs.IT]}\BibitemShut
  {NoStop}%
\bibitem [{\citenamefont {Rockafellar}(1996)}]{Rockafellar1996}%
  \BibitemOpen
  \bibfield  {author} {\bibinfo {author} {\bibfnamefont {R.}~\bibnamefont
  {Rockafellar}},\ }\href
  {http://books.google.com/books?hl=en{\&}lr={\&}id=1TiOka9bx3sC{\&}oi=fnd{\&}pg=PR7{\&}dq=Convex+Analysis{\&}ots=HqTLYwEUge{\&}sig=wnK8eObesTTSROhFlHcGaaFc-bk}
  {\emph {\bibinfo {title} {{Convex analysis}}}}\ (\bibinfo {year}
  {1996})\BibitemShut {NoStop}%
\bibitem [{sta(2016)}]{stackexchange:polar}%
  \BibitemOpen
  \href@noop {} {\bibinfo {title} {\emph {Dual of the Minkowski Sum}},\
  }\bibinfo {howpublished}
  {http://math.stackexchange.com/questions/1766775/dual-of-the-minkowski-sum}
  (\bibinfo {year} {2016}),\ \bibinfo {note} {accessed: 2017-02-15}\BibitemShut
  {NoStop}%
\bibitem [{\citenamefont {Li}\ and\ \citenamefont {Voroninski}(2013)}]{Li2013}%
  \BibitemOpen
  \bibfield  {author} {\bibinfo {author} {\bibfnamefont {X.}~\bibnamefont
  {Li}}\ and\ \bibinfo {author} {\bibfnamefont {V.}~\bibnamefont
  {Voroninski}},\ }\bibinfo {title} {\emph {{Sparse Signal Recovery from
  Quadratic Measurements via Convex Programming}}},\ \href {\doibase
  10.1137/120893707} {\bibfield  {journal} {\bibinfo  {journal} {SIAM J. Math.
  Anal.}\ }\textbf {\bibinfo {volume} {45}},\ \bibinfo {pages} {3019} (\bibinfo
  {year} {2013})},\ \Eprint {http://arxiv.org/abs/1209.4785}
  {arXiv:1209.4785}\BibitemShut {NoStop}%
\bibitem [{\citenamefont {Jaganathan}(2013)}]{Jaganathan2013}%
  \BibitemOpen
  \bibfield  {author} {\bibinfo {author} {\bibfnamefont {K.}~\bibnamefont
  {Jaganathan}},\ }\bibfield  {title} {\emph {\bibinfo {title} {\emph {{Sparse
  phase retrieval: Convex algorithms and limitations}}},\ }}in\ \href@noop {}
  {\emph {\bibinfo {booktitle} {Information Theory Proceedings (ISIT), 2013
  IEEE International Symposium on}}}\ (\bibinfo {year} {2013})\ p.~\bibinfo
  {pages} {6},\ \Eprint {http://arxiv.org/abs/1303.4128}
  {arXiv:1303.4128}\BibitemShut {NoStop}%
\bibitem [{\citenamefont {Rubinstein}\ \emph {et~al.}(2010)\citenamefont
  {Rubinstein}, \citenamefont {Zibulevsky},\ and\ \citenamefont
  {Elad}}]{Rubinstein2010}%
  \BibitemOpen
  \bibfield  {author} {\bibinfo {author} {\bibfnamefont {R.}~\bibnamefont
  {Rubinstein}}, \bibinfo {author} {\bibfnamefont {M.}~\bibnamefont
  {Zibulevsky}}, \ and\ \bibinfo {author} {\bibfnamefont {M.}~\bibnamefont
  {Elad}},\ }\bibinfo {title} {\emph {{Double sparsity: Learning sparse
  dictionaries for sparse signal approximation}}},\ \href {\doibase
  10.1109/TSP.2009.2036477} {\bibfield  {journal} {\bibinfo  {journal} {IEEE
  Trans. Signal Process.}\ }\textbf {\bibinfo {volume} {58}},\ \bibinfo {pages}
  {1553} (\bibinfo {year} {2010})}\BibitemShut {NoStop}%
\bibitem [{\citenamefont {Smola}\ and\ \citenamefont
  {Schölkopf}(2000)}]{Smola2000}%
  \BibitemOpen
  \bibfield  {author} {\bibinfo {author} {\bibfnamefont {A.~J.}\ \bibnamefont
  {Smola}}\ and\ \bibinfo {author} {\bibfnamefont {B.}~\bibnamefont
  {Schölkopf}},\ }\bibfield  {title} {\emph {\bibinfo {title} {\emph {Sparse
  Greedy Matrix Approximation for Machine Learning}},\ }}in\ \href
  {http://citeseerx.ist.psu.edu/viewdoc/summary?doi=10.1.1.43.3153} {\emph
  {\bibinfo {booktitle} {Proc. ICML’00}}}\ (\bibinfo  {publisher} {Morgan
  Kaufmann},\ \bibinfo {year} {2000})\ pp.\ \bibinfo {pages}
  {911--918}\BibitemShut {NoStop}%
\bibitem [{\citenamefont {Johnstone}\ and\ \citenamefont
  {Lu}(2009)}]{Johnstone2009}%
  \BibitemOpen
  \bibfield  {author} {\bibinfo {author} {\bibfnamefont {I.~M.}\ \bibnamefont
  {Johnstone}}\ and\ \bibinfo {author} {\bibfnamefont {A.~Y.}\ \bibnamefont
  {Lu}},\ }\bibinfo {title} {\emph {{On consistency and sparsity for principal
  components analysis in high dimensions}}},\ \href {\doibase
  10.1198/jasa.2009.0121} {\bibfield  {journal} {\bibinfo  {journal} {J. Am.
  Stat. Assoc.}\ }\textbf {\bibinfo {volume} {104}},\ \bibinfo {pages} {682}
  (\bibinfo {year} {2009})},\ \Eprint {http://arxiv.org/abs/arXiv:0901.4392v1}
  {arXiv:arXiv:0901.4392v1}\BibitemShut {NoStop}%
\bibitem [{\citenamefont {{Lee}}\ \emph {et~al.}(2013)\citenamefont {{Lee}},
  \citenamefont {{Wu}},\ and\ \citenamefont {{Bresler}}}]{Lee2013}%
  \BibitemOpen
  \bibfield  {author} {\bibinfo {author} {\bibfnamefont {K.}~\bibnamefont
  {{Lee}}}, \bibinfo {author} {\bibfnamefont {Y.}~\bibnamefont {{Wu}}}, \ and\
  \bibinfo {author} {\bibfnamefont {Y.}~\bibnamefont {{Bresler}}},\ }\href@noop
  {} {\bibinfo {title} {\emph {Near Optimal Compressed Sensing of a Class of
  Sparse Low-Rank Matrices via Sparse Power Factorization}},\ }\Eprint
  {http://arxiv.org/abs/1312.0525} {arXiv:1312.0525 [cs.IT]}\BibitemShut
  {NoStop}%
\bibitem [{\citenamefont {Lee}\ \emph {et~al.}(2017)\citenamefont {Lee},
  \citenamefont {Li}, \citenamefont {Junge},\ and\ \citenamefont
  {Bresler}}]{Lee2015}%
  \BibitemOpen
  \bibfield  {author} {\bibinfo {author} {\bibfnamefont {K.}~\bibnamefont
  {Lee}}, \bibinfo {author} {\bibfnamefont {Y.}~\bibnamefont {Li}}, \bibinfo
  {author} {\bibfnamefont {M.}~\bibnamefont {Junge}}, \ and\ \bibinfo {author}
  {\bibfnamefont {Y.}~\bibnamefont {Bresler}},\ }\bibinfo {title} {\emph {Blind
  Recovery of Sparse Signals From Subsampled Convolution}},\ \href {\doibase
  10.1109/TIT.2016.2636204} {\bibfield  {journal} {\bibinfo  {journal} {IEEE
  Trans. Inf. Theory}\ }\textbf {\bibinfo {volume} {63}},\ \bibinfo {pages}
  {802} (\bibinfo {year} {2017})},\ \Eprint {http://arxiv.org/abs/1511.06149}
  {arXiv:1511.06149 [cs.IT]}\BibitemShut {NoStop}%
\bibitem [{\citenamefont {Flinth}(2018)}]{Flinth2016}%
  \BibitemOpen
  \bibfield  {author} {\bibinfo {author} {\bibfnamefont {A.}~\bibnamefont
  {Flinth}},\ }\bibinfo {title} {\emph {Sparse blind deconvolution and demixing
  through $\ell\_{1,2}$-Minimization}},\ \href {\doibase
  10.1007/s10444-017-9533-0} {\bibfield  {journal} {\bibinfo  {journal}
  {Advances in Computational Mathematics}\ }\textbf {\bibinfo {volume} {44}},\
  \bibinfo {pages} {1} (\bibinfo {year} {2018})},\ \Eprint
  {http://arxiv.org/abs/arXiv:1609.06357} {arXiv:arXiv:1609.06357}\BibitemShut
  {NoStop}%
\bibitem [{\citenamefont {Lee}\ \emph {et~al.}(2015)\citenamefont {Lee},
  \citenamefont {Li}, \citenamefont {Junge},\ and\ \citenamefont
  {Bresler}}]{Lee:sampta15}%
  \BibitemOpen
  \bibfield  {author} {\bibinfo {author} {\bibfnamefont {K.}~\bibnamefont
  {Lee}}, \bibinfo {author} {\bibfnamefont {Y.}~\bibnamefont {Li}}, \bibinfo
  {author} {\bibfnamefont {M.}~\bibnamefont {Junge}}, \ and\ \bibinfo {author}
  {\bibfnamefont {Y.}~\bibnamefont {Bresler}},\ }\bibfield  {title} {\emph
  {\bibinfo {title} {\emph {{Stability in blind deconvolution of sparse signals
  and reconstruction by alternating minimization}}},\ }}in\ \href {\doibase
  10.1109/SAMPTA.2015.7148871} {\emph {\bibinfo {booktitle} {2015 International
  Conference on Sampling Theory and Applications, SampTA 2015}}}\ (\bibinfo
  {year} {2015})\ pp.\ \bibinfo {pages} {158--162}\BibitemShut {NoStop}%
\bibitem [{\citenamefont {Aghasi}\ \emph {et~al.}(2017)\citenamefont {Aghasi},
  \citenamefont {Ahmed},\ and\ \citenamefont {Hand}}]{Aghasi2017}%
  \BibitemOpen
  \bibfield  {author} {\bibinfo {author} {\bibfnamefont {A.}~\bibnamefont
  {Aghasi}}, \bibinfo {author} {\bibfnamefont {A.}~\bibnamefont {Ahmed}}, \
  and\ \bibinfo {author} {\bibfnamefont {P.}~\bibnamefont {Hand}},\ }\href@noop
  {} {\bibinfo {title} {\emph {{BranchHull: Convex bilinear inversion from the
  entrywise product of signals with known signs}}},\ }\Eprint
  {http://arxiv.org/abs/1702.04342} {arXiv:1702.04342}\BibitemShut {NoStop}%
\bibitem [{\citenamefont {Geppert}\ \emph {et~al.}(2018)\citenamefont
  {Geppert}, \citenamefont {Krahmer},\ and\ \citenamefont
  {St{\"{o}}ger}}]{Geppert2018}%
  \BibitemOpen
  \bibfield  {author} {\bibinfo {author} {\bibfnamefont {J.}~\bibnamefont
  {Geppert}}, \bibinfo {author} {\bibfnamefont {F.}~\bibnamefont {Krahmer}}, \
  and\ \bibinfo {author} {\bibfnamefont {D.}~\bibnamefont {St{\"{o}}ger}},\
  }\href@noop {} {\bibinfo {title} {\emph {{Sparse Power Factorization:
  Balancing peakiness and sample complexity}}},\ }\Eprint
  {http://arxiv.org/abs/1804.09097} {arXiv:1804.09097}\BibitemShut {NoStop}%
\bibitem [{\citenamefont {{Ling}}\ and\ \citenamefont
  {{Strohmer}}(2015)}]{Ling2015}%
  \BibitemOpen
  \bibfield  {author} {\bibinfo {author} {\bibfnamefont {S.}~\bibnamefont
  {{Ling}}}\ and\ \bibinfo {author} {\bibfnamefont {T.}~\bibnamefont
  {{Strohmer}}},\ }\bibinfo {title} {\emph {{Self-calibration and biconvex
  compressive sensing}}},\ \href {\doibase 10.1088/0266-5611/31/11/115002}
  {\bibfield  {journal} {\bibinfo  {journal} {Inverse Problems}\ }\textbf
  {\bibinfo {volume} {31}},\ \bibinfo {eid} {115002} (\bibinfo {year}
  {2015})},\ \Eprint {http://arxiv.org/abs/1501.06864} {arXiv:1501.06864
  [cs.IT]}\BibitemShut {NoStop}%
\bibitem [{\citenamefont {Jung}\ and\ \citenamefont {Walk}(2015)}]{Jung2014}%
  \BibitemOpen
  \bibfield  {author} {\bibinfo {author} {\bibfnamefont {P.}~\bibnamefont
  {Jung}}\ and\ \bibinfo {author} {\bibfnamefont {P.}~\bibnamefont {Walk}},\
  }in\ \href {\doibase 10.1007/978-3-319-16042-9_10} {\emph {\bibinfo
  {booktitle} {Compressed Sensing and its Applications}}},\ \bibinfo {editor}
  {edited by\ \bibinfo {editor} {\bibfnamefont {H.}~\bibnamefont {Boche}},
  \bibinfo {editor} {\bibfnamefont {R.}~\bibnamefont {Calderbank}}, \bibinfo
  {editor} {\bibfnamefont {G.}~\bibnamefont {Kutyniok}}, \ and\ \bibinfo
  {editor} {\bibfnamefont {J.}~\bibnamefont {Vybiral}}}\ (\bibinfo  {publisher}
  {Springer},\ \bibinfo {year} {2015})\ pp.\ \bibinfo {pages}
  {1--29}\BibitemShut {NoStop}%
\bibitem [{\citenamefont {Wunder}\ \emph
  {et~al.}(2015{\natexlab{a}})\citenamefont {Wunder}, \citenamefont {Boche},
  \citenamefont {Strohmer},\ and\ \citenamefont {Jung}}]{WunBocStr15}%
  \BibitemOpen
  \bibfield  {author} {\bibinfo {author} {\bibfnamefont {G.}~\bibnamefont
  {Wunder}}, \bibinfo {author} {\bibfnamefont {H.}~\bibnamefont {Boche}},
  \bibinfo {author} {\bibfnamefont {T.}~\bibnamefont {Strohmer}}, \ and\
  \bibinfo {author} {\bibfnamefont {P.}~\bibnamefont {Jung}},\ }\bibinfo
  {title} {\emph {Sparse Signal Processing Concepts for Efficient 5G System
  Design}},\ \href {\doibase 10.1109/ACCESS.2015.2407194} {\bibfield  {journal}
  {\bibinfo  {journal} {IEEE Access}\ }\textbf {\bibinfo {volume} {3}},\
  \bibinfo {pages} {195} (\bibinfo {year} {2015}{\natexlab{a}})},\ \Eprint
  {http://arxiv.org/abs/1411.0435} {arXiv:1411.0435 [cs.IT]}\BibitemShut
  {NoStop}%
\bibitem [{\citenamefont {Wunder}\ \emph
  {et~al.}(2015{\natexlab{b}})\citenamefont {Wunder}, \citenamefont {Boche},
  \citenamefont {Strohmer},\ and\ \citenamefont {Jung}}]{Wunder2015:sparse5G}%
  \BibitemOpen
  \bibfield  {author} {\bibinfo {author} {\bibfnamefont {G.}~\bibnamefont
  {Wunder}}, \bibinfo {author} {\bibfnamefont {H.}~\bibnamefont {Boche}},
  \bibinfo {author} {\bibfnamefont {T.}~\bibnamefont {Strohmer}}, \ and\
  \bibinfo {author} {\bibfnamefont {P.}~\bibnamefont {Jung}},\ }\bibinfo
  {title} {\emph {{Sparse Signal Processing Concepts for Efficient 5G System
  Design}}},\ \href {\doibase 10.1109/ACCESS.2015.2407194} {\bibfield
  {journal} {\bibinfo  {journal} {IEEE Access}\ }\textbf {\bibinfo {volume}
  {3}},\ \bibinfo {pages} {195} (\bibinfo {year} {2015}{\natexlab{b}})},\
  \Eprint {http://arxiv.org/abs/1411.0435} {arXiv:1411.0435}\BibitemShut
  {NoStop}%
\bibitem [{\citenamefont {{Roth}}\ \emph {et~al.}(2016)\citenamefont {{Roth}},
  \citenamefont {{Kliesch}}, \citenamefont {{Wunder}},\ and\ \citenamefont
  {{Eisert}}}]{RotKliWun17}%
  \BibitemOpen
  \bibfield  {author} {\bibinfo {author} {\bibfnamefont {I.}~\bibnamefont
  {{Roth}}}, \bibinfo {author} {\bibfnamefont {M.}~\bibnamefont {{Kliesch}}},
  \bibinfo {author} {\bibfnamefont {G.}~\bibnamefont {{Wunder}}}, \ and\
  \bibinfo {author} {\bibfnamefont {J.}~\bibnamefont {{Eisert}}},\ }\href@noop
  {} {\bibinfo {title} {\emph {{Reliable recovery of hierarchically sparse
  signals and application in machine-type communications}}},\ }\Eprint
  {http://arxiv.org/abs/1612.07806} {arXiv:1612.07806 [cs.IT]}\BibitemShut
  {NoStop}%
\bibitem [{\citenamefont {Berthet}\ and\ \citenamefont
  {Rigollet}(2013)}]{Berthet2013}%
  \BibitemOpen
  \bibfield  {author} {\bibinfo {author} {\bibfnamefont {Q.}~\bibnamefont
  {Berthet}}\ and\ \bibinfo {author} {\bibfnamefont {P.}~\bibnamefont
  {Rigollet}},\ }\bibinfo {title} {\emph {{Complexity Theoretic Lower Bounds
  for Sparse Principal Component Detection}}},\ \href {\doibase
  10.1109/ICEEI.2011.6021515} {\bibfield  {journal} {\bibinfo  {journal} {JMLR:
  Workshop and Conference Proceedings}\ }\textbf {\bibinfo {volume} {30}},\
  \bibinfo {pages} {1} (\bibinfo {year} {2013})}\BibitemShut {NoStop}%
\bibitem [{\citenamefont {Fornasier}\ \emph {et~al.}(2018)\citenamefont
  {Fornasier}, \citenamefont {Maly},\ and\ \citenamefont
  {Naumova}}]{Fornasier2018}%
  \BibitemOpen
  \bibfield  {author} {\bibinfo {author} {\bibfnamefont {M.}~\bibnamefont
  {Fornasier}}, \bibinfo {author} {\bibfnamefont {J.}~\bibnamefont {Maly}}, \
  and\ \bibinfo {author} {\bibfnamefont {V.}~\bibnamefont {Naumova}},\
  }\href@noop {} {\bibinfo {title} {\emph {{Sparse PCA from Inaccurate and
  Incomplete Measurements}}},\ }\Eprint {http://arxiv.org/abs/1801.06240}
  {arXiv:1801.06240}\BibitemShut {NoStop}%
\bibitem [{\citenamefont {Baraniuk}\ \emph {et~al.}(2008)\citenamefont
  {Baraniuk}, \citenamefont {Davenport}, \citenamefont {DeVore},\ and\
  \citenamefont {Wakin}}]{Baraniuk2008c}%
  \BibitemOpen
  \bibfield  {author} {\bibinfo {author} {\bibfnamefont {R.}~\bibnamefont
  {Baraniuk}}, \bibinfo {author} {\bibfnamefont {M.}~\bibnamefont {Davenport}},
  \bibinfo {author} {\bibfnamefont {R.}~\bibnamefont {DeVore}}, \ and\ \bibinfo
  {author} {\bibfnamefont {M.}~\bibnamefont {Wakin}},\ }\bibinfo {title} {\emph
  {{A simple proof of the restricted isometry property for random matrices}}},\
  \href {\doibase 10.1007/s00365-007-9003-x} {\bibfield  {journal} {\bibinfo
  {journal} {Constructive Approximation}\ }\textbf {\bibinfo {volume} {28}},\
  \bibinfo {pages} {253} (\bibinfo {year} {2008})}\BibitemShut {NoStop}%
\bibitem [{\citenamefont {Candes}\ and\ \citenamefont
  {Plan}(2011)}]{Candes09:LMR}%
  \BibitemOpen
  \bibfield  {author} {\bibinfo {author} {\bibfnamefont {E.}~\bibnamefont
  {Candes}}\ and\ \bibinfo {author} {\bibfnamefont {Y.}~\bibnamefont {Plan}},\
  }\bibinfo {title} {\emph {Tight oracle inequalities for low-rank matrix
  recovery from a minimal number of noisy random measurements}},\ \href
  {\doibase 10.1109/TIT.2011.2111771} {\bibfield  {journal} {\bibinfo
  {journal} {IEEE Trans. Inf. Theory}\ }\textbf {\bibinfo {volume} {57}},\
  \bibinfo {pages} {2342} (\bibinfo {year} {2011})},\ \Eprint
  {http://arxiv.org/abs/1001.0339} {arXiv:1001.0339 [cs.IT]}\BibitemShut
  {NoStop}%
\bibitem [{\citenamefont {{Recht}}\ \emph {et~al.}(2010)\citenamefont
  {{Recht}}, \citenamefont {{Fazel}},\ and\ \citenamefont
  {{Parrilo}}}]{Recht_2010_Guaranteed}%
  \BibitemOpen
  \bibfield  {author} {\bibinfo {author} {\bibfnamefont {B.}~\bibnamefont
  {{Recht}}}, \bibinfo {author} {\bibfnamefont {M.}~\bibnamefont {{Fazel}}}, \
  and\ \bibinfo {author} {\bibfnamefont {P.~A.}\ \bibnamefont {{Parrilo}}},\
  }\bibinfo {title} {\emph {{Guaranteed Minimum-Rank Solutions of Linear Matrix
  Equations via Nuclear Norm Minimization}}},\ \href {\doibase
  10.1137/070697835} {\bibfield  {journal} {\bibinfo  {journal} {SIAM Review}\
  }\textbf {\bibinfo {volume} {52}},\ \bibinfo {pages} {471} (\bibinfo {year}
  {2010})},\ \Eprint {http://arxiv.org/abs/0706.4138} {arXiv:0706.4138
  [math.OC]}\BibitemShut {NoStop}%
\bibitem [{\citenamefont {Candes}\ and\ \citenamefont
  {Tao}(2005)}]{Candes2005b}%
  \BibitemOpen
  \bibfield  {author} {\bibinfo {author} {\bibfnamefont {E.~J.}\ \bibnamefont
  {Candes}}\ and\ \bibinfo {author} {\bibfnamefont {T.}~\bibnamefont {Tao}},\
  }\bibinfo {title} {\emph {{Decoding by linear programming}}},\ \href
  {\doibase 10.1109/TIT.2005.858979} {\bibfield  {journal} {\bibinfo  {journal}
  {IEEE Trans. Inf. Theory}\ }\textbf {\bibinfo {volume} {51}},\ \bibinfo
  {pages} {4203} (\bibinfo {year} {2005})},\ \Eprint
  {http://arxiv.org/abs/0502327} {arXiv:0502327 [math]}\BibitemShut {NoStop}%
\bibitem [{\citenamefont {Gross}(2011)}]{Gro11}%
  \BibitemOpen
  \bibfield  {author} {\bibinfo {author} {\bibfnamefont {D.}~\bibnamefont
  {Gross}},\ }\bibinfo {title} {\emph {Recovering Low-Rank Matrices From Few
  Coefficients in Any Basis}},\ \href {\doibase 10.1109/TIT.2011.2104999}
  {\bibfield  {journal} {\bibinfo  {journal} {IEEE Trans. Inf. Th.}\ }\textbf
  {\bibinfo {volume} {57}},\ \bibinfo {pages} {1548} (\bibinfo {year}
  {2011})},\ \Eprint {http://arxiv.org/abs/0910.1879} {arXiv:0910.1879
  [cs.IT]}\BibitemShut {NoStop}%
\bibitem [{\citenamefont {Jalali}(2016)}]{Jalali2016}%
  \BibitemOpen
  \bibfield  {author} {\bibinfo {author} {\bibfnamefont {A.}~\bibnamefont
  {Jalali}},\ }\emph {\bibinfo {title} {{Convex Optimization Algorithms and
  Statistical Bounds for Learning Structured Models}}},\ \href@noop {} {Ph.D.
  thesis},\ \bibinfo  {school} {University of Washington, Seattle} (\bibinfo
  {year} {2016})\BibitemShut {NoStop}%
\bibitem [{\citenamefont {Foucart}\ \emph {et~al.}(2019)\citenamefont
  {Foucart}, \citenamefont {Gribonval}, \citenamefont {Jacques},\ and\
  \citenamefont {Rauhut}}]{Foucart2019}%
  \BibitemOpen
  \bibfield  {author} {\bibinfo {author} {\bibfnamefont {S.}~\bibnamefont
  {Foucart}}, \bibinfo {author} {\bibfnamefont {R.}~\bibnamefont {Gribonval}},
  \bibinfo {author} {\bibfnamefont {L.}~\bibnamefont {Jacques}}, \ and\
  \bibinfo {author} {\bibfnamefont {H.}~\bibnamefont {Rauhut}},\ }\bibinfo
  {title} {\emph {{Jointly Low-Rank and Bisparse Recovery: Questions and
  Partial Answers}}},\ \href {http://arxiv.org/abs/1902.04731} {\  (\bibinfo
  {year} {2019})},\ \Eprint {http://arxiv.org/abs/1902.04731}
  {arXiv:1902.04731}\BibitemShut {NoStop}%
\bibitem [{\citenamefont {Roth}\ \emph {et~al.}(2016)\citenamefont {Roth},
  \citenamefont {Kliesch}, \citenamefont {Wunder},\ and\ \citenamefont
  {Eisert}}]{Roth2016}%
  \BibitemOpen
  \bibfield  {author} {\bibinfo {author} {\bibfnamefont {I.}~\bibnamefont
  {Roth}}, \bibinfo {author} {\bibfnamefont {M.}~\bibnamefont {Kliesch}},
  \bibinfo {author} {\bibfnamefont {G.}~\bibnamefont {Wunder}}, \ and\ \bibinfo
  {author} {\bibfnamefont {J.}~\bibnamefont {Eisert}},\ }\bibinfo {title}
  {\emph {{Reliable recovery of hierarchically sparse signals and application
  in machine-type communications}}},\ \href {http://arxiv.org/abs/1612.07806}
  {\  (\bibinfo {year} {2016})},\ \Eprint {http://arxiv.org/abs/1612.07806}
  {arXiv:1612.07806}\BibitemShut {NoStop}%
\bibitem [{\citenamefont {Iwen}\ \emph {et~al.}(2017)\citenamefont {Iwen},
  \citenamefont {Viswanathan},\ and\ \citenamefont {Wang}}]{Iwen:2017}%
  \BibitemOpen
  \bibfield  {author} {\bibinfo {author} {\bibfnamefont {M.}~\bibnamefont
  {Iwen}}, \bibinfo {author} {\bibfnamefont {A.}~\bibnamefont {Viswanathan}}, \
  and\ \bibinfo {author} {\bibfnamefont {Y.}~\bibnamefont {Wang}},\ }\bibinfo
  {title} {\emph {{Robust sparse phase retrieval made easy}}},\ \href {\doibase
  10.1016/j.acha.2015.06.007} {\bibfield  {journal} {\bibinfo  {journal}
  {Applied and Computational Harmonic Analysis}\ }\textbf {\bibinfo {volume}
  {42}},\ \bibinfo {pages} {135} (\bibinfo {year} {2017})}\BibitemShut
  {NoStop}%
\bibitem [{\citenamefont {Kolda}\ and\ \citenamefont {Bader}(2009)}]{KolBad09}%
  \BibitemOpen
  \bibfield  {author} {\bibinfo {author} {\bibfnamefont {T.}~\bibnamefont
  {Kolda}}\ and\ \bibinfo {author} {\bibfnamefont {B.}~\bibnamefont {Bader}},\
  }\bibinfo {title} {\emph {Tensor Decompositions and Applications}},\ \href
  {\doibase 10.1137/07070111X} {\bibfield  {journal} {\bibinfo  {journal}
  {{SIAM} Rev.}\ }\textbf {\bibinfo {volume} {51}},\ \bibinfo {pages} {455}
  (\bibinfo {year} {2009})}\BibitemShut {NoStop}%
\bibitem [{\citenamefont {{Vidal}}\ and\ \citenamefont
  {{Werner}}(2002)}]{VidWer02}%
  \BibitemOpen
  \bibfield  {author} {\bibinfo {author} {\bibfnamefont {G.}~\bibnamefont
  {{Vidal}}}\ and\ \bibinfo {author} {\bibfnamefont {R.~F.}\ \bibnamefont
  {{Werner}}},\ }\bibinfo {title} {\emph {Computable measure of
  entanglement}},\ \href {\doibase 10.1103/PhysRevA.65.032314} {\bibfield
  {journal} {\bibinfo  {journal} {Phys. Rev. A}\ }\textbf {\bibinfo {volume}
  {65}},\ \bibinfo {eid} {032314} (\bibinfo {year} {2002})},\ \Eprint
  {http://arxiv.org/abs/quant-ph/0102117} {arXiv:quant-ph/0102117
  [quant-ph]}\BibitemShut {NoStop}%
\bibitem [{\citenamefont {{Rauhut}}\ \emph {et~al.}(2016)\citenamefont
  {{Rauhut}}, \citenamefont {{Schneider}},\ and\ \citenamefont
  {{Stojanac}}}]{RauSchSto16}%
  \BibitemOpen
  \bibfield  {author} {\bibinfo {author} {\bibfnamefont {H.}~\bibnamefont
  {{Rauhut}}}, \bibinfo {author} {\bibfnamefont {R.}~\bibnamefont
  {{Schneider}}}, \ and\ \bibinfo {author} {\bibfnamefont {Z.}~\bibnamefont
  {{Stojanac}}},\ }\href@noop {} {\bibinfo {title} {\emph {{Low rank tensor
  recovery via iterative hard thresholding}}},\ }\Eprint
  {http://arxiv.org/abs/1602.05217} {arXiv:1602.05217 [cs.IT]}\BibitemShut
  {NoStop}%
\end{thebibliography}%
%%%=============================================

%%%========================================================
%%% ===================== bbl file content ================
%%%========================================================
% none
\end{document}